\newtheorem{thm}{Theorem}[section]
\newtheorem{lem}[thm]{Lemma}
\theoremstyle{definition}
\newtheorem{exam}[thm]{Example}
\theoremstyle{remark}
\newtheorem{rem}[thm]{Remark}
\numberwithin{equation}{section}
\newcommand{\GL}{\mathrm{GL}}               
\newcommand{\RR}{\mathbb{R}}                
\newcommand{\Sym}{\mathbb{S}}               
\newcommand{\espace}{\mathcal{E}}           
\newcommand{\Vect}{\mathrm{Vect}}
\newcommand{\bgamma}{{\bm{\gamma}}}
\newcommand{\bsigma}{{\bm{\sigma}}}
\newcommand{\btheta}{{\bm{\theta}}}
\newcommand{\btau}{{\bm{\tau}}}
\newcommand{\bbeta}{\bm{\beta}}
\newcommand{\bSigma}{{\bm{\Sigma}}}
\newcommand{\bkappa}{\bm{\kappa}}
\newcommand{\norm}[1]{\left\Vert#1\right\Vert}
\newcommand{\set}[1]{\left\{#1\right\}}
\newcommand{\bq}{q}                         
\newcommand{\got}{\mathfrak{g}}             
\newcommand{\vol}{\mathrm{vol}}             
\newcommand{\Ric}{\mathbf{Ric}}             
\newcommand{\Ein}{\mathbf{G}}               
\newcommand{\bt}{\mathbf{t}}
\newcommand{\VV}{\bm{V}}                    
\newcommand{\aaa}{\bm{a}}
\newcommand{\bb}{\mathbf{b}}                
\newcommand{\bd}{\mathbf{d}}
\newcommand{\be}{\mathbf{e}}
\newcommand{\bk}{\mathbf{k}}                
\newcommand{\bs}{\mathbf{s}}                
\newcommand{\bS}{\mathbf{S}}
\newcommand{\bT}{\mathbf{T}}
\newcommand{\vv}{\bm{v}}
\newcommand{\xx}{\mathbf{x}}
\newcommand{\yy}{\mathbf{y}}
\newcommand{\uu}{\bm{u}}
\newcommand{\pp}{p}
\newcommand{\bp}{\bm{p}}
\newcommand{\ee}{\bm{e}}
\newcommand{\bA}{\mathbf{A}}
\newcommand{\bC}{\mathbf{C}}
\newcommand{\bF}{\mathbf{F}}
\newcommand{\bH}{\mathbf{H}}
\newcommand{\mH}{\mathcal{H}}
\newcommand{\bK}{\mathbf{K}}
\newcommand{\mL}{\mathcal{L}}
\newcommand{\mW}{\mathcal{W}}
\newcommand{\bN}{\mathbf{N}}
\newcommand{\bP}{\mathbf{P}}
\newcommand{\bU}{\mathbf{U}}
\newcommand{\bW}{\mathbf{W}}
\newcommand{\bX}{\mathbf{X}}
\newcommand{\mM}{\mathcal{M}}               
\newcommand{\mg}{{g^{3D}}}
\newcommand{\body}{\mathcal{B}}
\newcommand{\id}{\mathrm{id}}
\newcommand{\dd}{\mathrm{d}}
\newcommand{\rhoz}{{\bar{\rho}}}
\newcommand{\uuz}{{\bar{\uu}}}
\newcommand{\uz}{{\bar{u}}}
\newcommand{\rhol}{{\overset{\scriptscriptstyle\lambda}{\rho}}}
\newcommand{\uul}{{\overset{\scriptscriptstyle\lambda}{\uu}}{}}
\newcommand{\gl}{{\overset{\scriptscriptstyle\lambda}{g}}{}}
\newcommand{\Nl}{{\mathcal{N}}}
\newcommand{\kl}{k}
\newcommand{\Pl}{{\overset{\scriptscriptstyle\lambda}{\bP}}}
\newcommand{\Hl}{{\overset{\scriptscriptstyle\lambda}{\bH}}}
\newcommand{\Hz}{{\bar{\bH}}}
\newcommand{\Tl}{{\overset{\scriptscriptstyle\lambda}{\bT}}}
\newcommand{\Tz}{{\bar{\bT}}}
\newcommand{\El}{{\overset{\scriptscriptstyle\lambda}{E}}}
\newcommand{\Ez}{{\bar{E}}}
\newcommand{\pl}{{\overset{\scriptscriptstyle\lambda}{\bp}}{}}
\newcommand{\bsl}{{\overset{\scriptscriptstyle\lambda}{\bs}}}
\newcommand{\bsigz}{{\bar{\bsigma}}}
\newcommand{\sigz}{{\bar{\sigma}}}
\DeclareMathOperator{\Lie}{L} %
\DeclareMathOperator{\tr}{tr} %
\DeclareMathOperator{\grad}{grad} %
\DeclareMathOperator{\dive}{div} %
\DeclareMathOperator{\divl}{\overset{\scriptscriptstyle\lambda}{\dive}} %
\DeclareMathOperator{\divz}{{\dive^{\text{\tiny NC}}}} %
\begin{document}

\title[Souriau's relativistic formulation of hyperelasticity revisited]{Souriau's Relativistic general covariant \protect\\  formulation of hyperelasticity revisited}

\author{B. Kolev}
\address[Boris Kolev]{Université Paris-Saclay, ENS Paris-Saclay, CentraleSupélec, CNRS, LMPS - Laboratoire de Mécanique Paris-Saclay, 91190, Gif-sur-Yvette, France}
\email{boris.kolev@ens-paris-saclay.fr}

\author{R. Desmorat}
\address[Rodrigue Desmorat]{Université Paris-Saclay, ENS Paris-Saclay, CentraleSupélec, CNRS, LMPS - Laboratoire de Mécanique Paris-Saclay, 91190, Gif-sur-Yvette, France}
\email{rodrigue.desmorat@ens-paris-saclay.fr}

\date{\today}%
\subjclass[2020]{74B20, 70G45, 83C10, 83C25}
\keywords{Constitutive equations, Relativistic Hyperelasticity, Lagrangian formulation of General Relativity, Newton-Cartan theory of Continuum Mechanics}%

\begin{abstract}
  We present and modernize Souriau's 1958 geometric framework for Relativistic continuous media, and enlighten the necessary and the \emph{ad hoc} modeling choices made since, focusing as much as possible on the Continuum Mechanics point of view.
  We describe the general covariant formulation of Hyperelasticity in (Variational) General Relativity, and then in the particular case of a static spacetime. Different relativistic strain and stress tensors are formulated and discussed. Finally, we apply Souriau's formalism to Schwarzschild's metric, and recover the Classical Galilean Hyperelasticity with gravity, as the Newton--Cartan infinite light speed limit of this formulation.
\end{abstract}

\maketitle

\setcounter{tocdepth}{1}
\tableofcontents

\section*{Introduction}

Attempts to formulate \emph{Relativistic Elasticity} in the \emph{General Relativity} framework go back to 1916 with the pioneering work of Nordström~\cite{Nor1916}, in Dutch. Since then, several authors have first aimed at proposing constitutive equations for Relativistic fluids~\cite{Tau1954,Lic1955,CQ1972,MTW1973} and, then, at modeling Relativistic continuous media, most often at the astrophysics scale~\cite{Sou1958,Syn1959,deW1962,Ray1963,Sou1964,Ben1965,Old1970,Lam1989,KM1992,KM1997,BS2003,EBT2006,Wer2006,GHE2011,Bro2021}, for instance for the modeling of the solid crust of neutron stars, but also at a local scale~\cite{Mau1978a,Mau1978b,Mau1978c,PR2013,PR2013,PRA2015,NWP2022}, for mechanical engineering applications.

This has led Lichnerowicz to define \emph{pure matter}~\cite{Lic1955}, synonymous of dust, and Souriau to define \emph{perfect matter}~\cite{Sou1958,Sou1960,Sou1964}, as a continuous medium which can be described independently from electromagnetic phenomena. In the present work, we follow Souriau and model perfect matter with the \emph{Gauge Theory} mindset~\cite{Ble1981}. More precisely, we focus on Relativistic hyperelastic continuous media. We do not consider the coupling with electromagnetism, nor with temperature.

The work of Souriau (1958, in French), seems to have been unnoticed by the scientific community. It is prior to the works of Synge (1959), of DeWitt (1962) and of Rayner (1963) (all three criticized in the later papers by Bennoun~\cite{Ben1965} and Carter and Quintana~\cite{CQ1972}). As we shall see, Souriau did in fact formulate the correct framework to describe Relativistic Hyperelasticity, first in his long 1958 paper~\cite{Sou1958}, then in his 1964 book~\cite{Sou1964} (in French still). The modern geometric picture of the General Relativity framework for elastic media is, up to details that we shall discuss on the go, derived in~\cite{Sou1958,Sou1964}, and later in~\cite{CQ1972,KM1992,KM1997,BS2003}.

We stick to the chronology introduced by Souriau of the mathematical concepts, the idea being as often to make the least hypotheses as possible. This is why the introduction of a \emph{foliation by spacelike hypersurfaces} ---which is not assumed \emph{a priori}--- is only addressed in \autoref{sec:Universe-foliation}, and why the problems of the \emph{definition of time} and of the formulation of Relativistic Hyperelasticity in a \emph{spacetime} is addressed only in \autoref{sec:hyperelasticity-spacetime}.
We finally apply Künzle's methodology \cite{Kuen1976}, to mathematically recover Classical Galilean Hyperelasticity with gravity, as the Newton--Cartan infinite light speed limit of the described General Relativity formulation. Our calculations generalize the ones for relativistic fluids to relativistic solids.

We write this paper mainly with the mechanics ---not the astrophysics--- point of view. We seek for the adequacy with the geometric formulation on the body $\body$ of three-dimensional Hyperelasticity (the so-called \emph{intrinsic Lagrangian formulation}, developed by Noll~\cite{Nol1972,Nol1978} and Rougée~\cite{Rou1991a,Rou2006}, see also~\cite{KD2021,KD2021a}). The notations are chosen to be compatible with both the ones used classically in Continuum Mechanics of solid materials~\cite{LC1985} and the ones considered in~\cite{KD2021,KD2021a}.

\subsection*{Outline}

The article is organized as follows. In \autoref{sec:matter-fields}, we introduce the basic concepts of \emph{matter field} $\Psi$, of \emph{body World tube} $\mW$, of \emph{mass measure $\mu$}, and of \emph{current of matter} $\bP$, which are the starting point of the theory of Relativistic Continuum Mechanics. The normalization of the later allows to define the \emph{rest mass density} $\rho_{r}$ and the unit timelike vector $\bU$. The section ends with the definition of the \emph{conformation} $\bH$, the cornerstone of Souriau's general covariant formulation of Relativistic Hyperelasticity. Matter conservation is formulated in \autoref{sec:mass-conservation}. Definitions of relativistic strain tensors are provided in \autoref{sec:Conformation-strains}. Relativistic Hyperelasticity is formulated in \autoref{sec:Lagrangian-formulation}, in which the Lagrangian formulation of General Relativity is recalled and applied to this specific constitutive modeling. The \emph{stress-energy tensor} is introduced in \autoref{sec:stress-energy-tensor} and both four-dimensional and three-dimensional relativistic \emph{stress tensors} are defined. It is only in \autoref{sec:Universe-foliation} that a \emph{spacetime structure} is considered, allowing to better connect the preceding general geometric framework with Classical Continuum Mechanics, and to define the \emph{generalized Lorentz factor} $\gamma$ (\autoref{sec:Matter-field-spacetime}). This factor accounts for the distortion between the unit vector $\bU$ (the matter) and the unit normal $\bN$ to the spacelike hypersurfaces $\Omega_{t}$ (the observer), and allows for the geometric definition of the \emph{relativistic mass density} $\rho$. A first formulation of Relativistic Hyperelasticity in a static spacetime, including the generalization of the \emph{Cauchy stress tensor,} is derived in \autoref{sec:hyperelasticity-spacetime}. This framework is detailed in \autoref{sec:hyperelasticity-Schwarzschild} for the particular case of the \emph{Schwarzschild metric}. The Galilean (Newton--Cartan) infinite light speed limit of the theory is discussed in \autoref{sec:Gallilean}.

\subsection*{Notations}

Given a linear operator $L\colon E \to F$ between two finite dimensional vector spaces, we denote by $L^{\star} \colon F^{\star} \to E^{\star}$, $\beta \mapsto \beta \circ L$, its \emph{transpose}. If moreover, the vector space $E$ is equipped with an inner product $q_{E}$, and $F$, with an inner product $q_{F}$, we can define its \emph{adjoint}, defined implicitly by the relation $\overline{L} \colon F \to E$, $\langle Lv,w \rangle_{F} = \langle v,\overline{L}w \rangle_{E}$ for all $v \in E$ and $w \in F$. The relation between $L^{\star}$ and $\overline{L}$ is thus written as $\overline{L} = q_{E}^{-1} L^{\star} q_{F}$.
We denote by $\Sym^{k}E$ the set of totally symmetric tensors of order $k$ on $E$ and by $\Lambda^{k}E$ the set of alternate tensors of order $k$ on $E$.

Now, let $\mM$ be a differential manifold of dimension $n$, we denote by $\Omega^{k}(\mM)$, the set of \emph{differential $k$-forms} on $\mM$, that is smooth sections of the vector bundle $\Lambda^{k}T^{\star}\mM$. The contraction $i_{X} \alpha$, of components $(i_{X} \alpha)_{ \mu_{1} \dotsc \mu_{k-1}}=X^{\nu} \alpha_{\nu \mu_{1} \dotsc \mu_{k-1}}$, denotes the \emph{interior product} of a vector field $X\in \Vect(\mM)$ with a $k$-form $\alpha \in \Omega^{k}(\mM)$,
\begin{equation*}
  i_{X} \alpha:=\alpha(X, \cdot, \dotsc, \cdot) \in  \Omega^{k-1}(\mM).
\end{equation*}
If moreover, $\mM$ is endowed with a Riemannian or pseudo-Riemannian metric $g$ (and $\mM$ is orientable), we will denote by $\vol_{g}\in \Omega^{n}(\mM)$ the (pseudo-)Riemannian \emph{volume form} associated with $g$.

Given a 1-form $\alpha \in \Omega^{1}(\mM)$, the notation $\alpha^{\sharp}:=g^{-1}\alpha\in \Vect(\mM)$ stands for $\alpha^{\mu}=g^{\mu\nu} \alpha_{\nu}$. Conversely, given a vector field $X\in \Vect(\mM)$, $X^{\flat}:=gX\in \Omega^{1}(\mM)$ stands for $X_{\mu}=g_{\mu\nu} X^{\nu}$. When local coordinates are involved on a 4-dimensional manifold, the Greek subscripts or superscripts $\mu, \nu, \rho\dots$ range from 0 to 3, while the roman ones $i, j, k\dots$, or $I, J, K, \dots$ range from 1 to 3.

The light speed will be denoted by $c$ and we refer to the Galilean three-dimensional Continuum Mechanics of solids~\cite{TN1965,LC1985,MH1994} simply as \emph{Classical Continuum Mechanics}.

\section{Matter field, current of matter and conformation}
\label{sec:matter-fields}

The Universe is assumed to be a four-dimensional orientable manifold $\mM$, endowed with an hyperbolic metric $g$, of signature
$(-,+,+,+)$. Its pseudo-Riemannian volume form is denoted by
\begin{equation*}
  \vol_{g}\in \Omega^{4}(\mM).
\end{equation*}
In the present work we limit our study to a (non electromagnetic) continuous particles assembly, the so-called \emph{perfect matter}~\cite{Lic1955}. Its modeling adopted by Souriau in~\cite{Sou1958,Sou1960,Sou1964} is inspired by Gauge theory~\cite{Ble1981}, where \emph{matter fields} are described by sections of an \emph{associated bundle}, \textit{i.e.}, some vector bundle constructed using a linear representation of the \emph{structural group} of the considered Gauge theory on some given vector space. The specificity and relative simplicity of the present description of \emph{perfect matter} is, however, that we assume this linear representation, and thus the vector bundle, to be trivial. More precisely, we let $V$ be a three-dimensional vector space (taken as $\RR^{3}$ in~\cite{Sou1958}). A \emph{perfect matter field} (called the particles labelling in~\cite{Sou1958}, and the \emph{projection}, noted $\mathcal P$, in~\cite{CQ1972}) is then a smooth vector valued function
\begin{equation*}
  \Psi: \mM \to V.
\end{equation*}

\begin{rem}
  The notation $\Psi$ for the matter field is on purpose chosen similar to the one for the wave function in Quantum Mechanics.
\end{rem}

Matter is then described by the set of all the material points constitutive of the continuous medium under study in the Universe
(for example a mechanical structure). Their labels constitute a set $\body\subset V$, assumed to be (in general) a three-dimensional compact orientable manifold with boundary and called the \emph{body}. It is further assumed that $\Psi$ is a \emph{submersion} on $\mW=\Psi^{-1} (\body)$: the linear tangent map $T\Psi: T\mW \to TV$ is of rank 3 at each point of $\mW$. Thus, $\mW$ is fibered by the particles World lines $\Psi^{-1}(\bX)$, $\bX\in \body$, and is called for this reason the \emph{body's World tube}.

\begin{figure}[ht]\label{fig:fiberedtimelines}
  \centering
  \includegraphics[width=14cm]{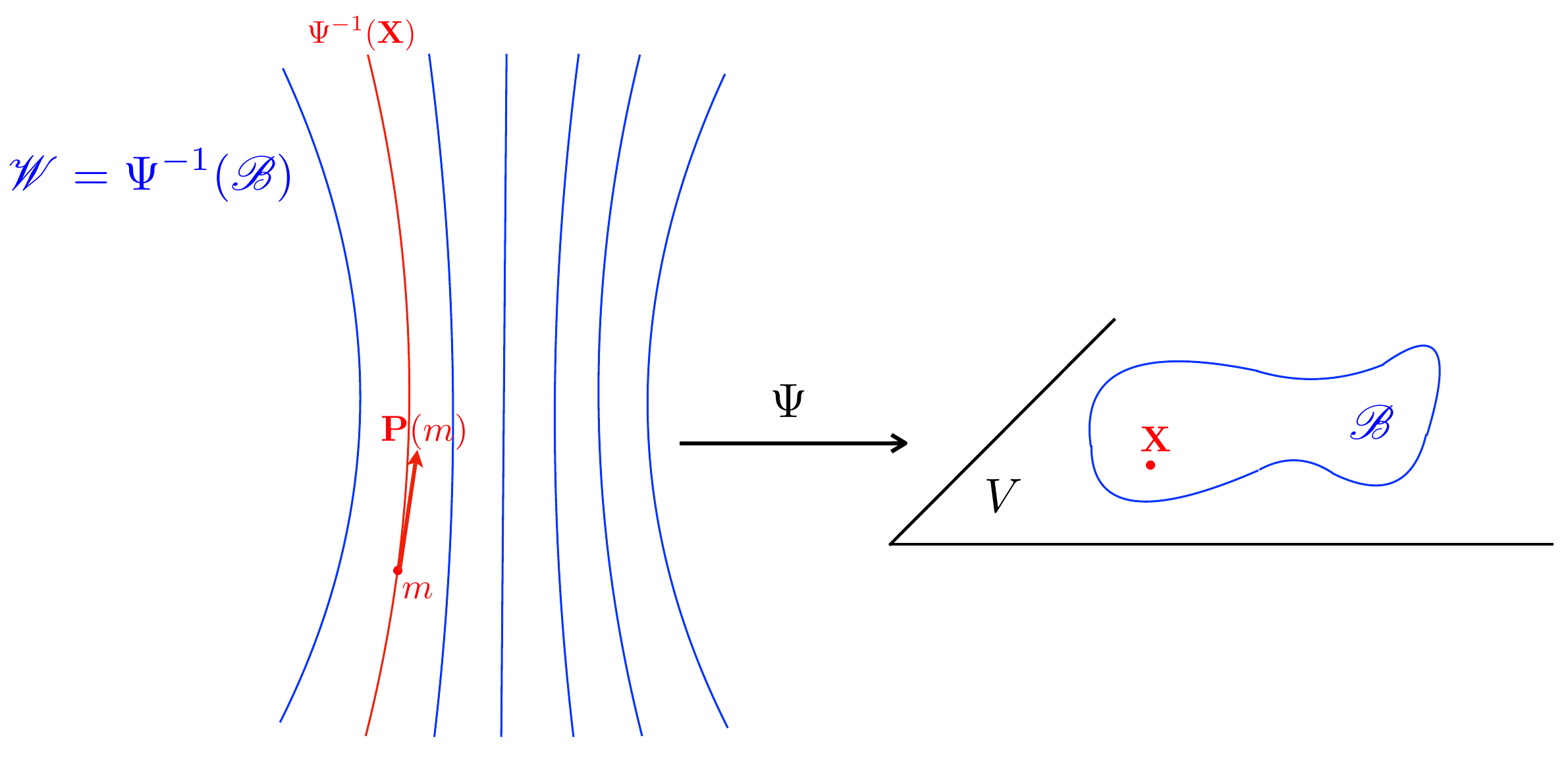}
  \caption{The World tube $\mW=\Psi^{-1} (\body)$ fibered by the particles World lines $\Psi^{-1}(\bX)$.}
\end{figure}

The body $\body$ is endowed with a volume form $\mu\in \Omega^{3}(\body)$, the \emph{mass measure}, which carries the information about the distribution of matter present in $\mW$~\cite{KM1992}. This interpretation is connected with the three-dimensional Classical Continuum Mechanics theory, in which the abstract manifold $\body$, equipped with the mass measure $\mu$, is in fact the \emph{body} introduced by Truesdell and Noll~\cite{TN1965,Nol1972,Nol1974,Nol1978}.

As we seek for a full consistency with the geometric framework of Classical Continuum Mechanics~\cite{Nol1978,Rou1991a,Rou2006,KD2021}, we have to emphasize a slight difference with previous works in astrophysics concerning the choice of the volume form on $\body$.
In~\cite{Sou1958}, $V$ is equipped with the canonical 3-form $\vol_{\bq}=\dd X^{1} \wedge \dd X^{2} \wedge \dd X^{3}$ on $\RR^{3}$. In~\cite{CQ1972,BS2003,GHE2011}, the body $\body$ is equipped with a volume form $\Omega\in \Omega^{3}(\body)$ which represents the number density of conserved idealized particles (meant to be identified with the baryon number density in~\cite{CQ1972}). The three of them are, of course, proportional to each other on $\body$. As pointed out by Carter and Quintana, the Relativistic Hyperelasticity theory does not depend on the particular choice of a volume form on $\body$~\cite{CQ1972}. Our choice, here, of a volume form $\mu$, interpreted as a ``mass measure'' allows us to recover the mass densities encountered in Classical Continuum Mechanics, and to assimilate the integral
\begin{equation*}
  m=\int_{\body} \mu,
\end{equation*}
as the total mass of the continuous medium/mechanical structure under study.

\begin{rem}\label{rem:F-CCM}
  It is worth mentioning that one takes here a point of view reverse to the one of Classical Continuum Mechanics of solids~\cite{TN1965,LC1985,Rou1991a,MH1994,Rou2006}, in which a configuration is an embedding $p\colon\body \to \espace$ of the body $\body$ into the three-dimensional space $\espace$, endowed with the Euclidean metric $\bq$. In the present formalism, the main concept is a mapping $\Psi\colon \mM \to V$ from the Universe $\mM$ to the space of labels $V$. A key difference is that, in Classical Continuum Mechanics, $\pp$ and its tangent map, the so-called \emph{deformation gradient}
  \begin{equation*}
    \bF=T\pp \colon T\body \to T\espace
  \end{equation*}
  are invertible, whereas here, the matter field $\Psi$ and its tangent map $T\Psi$ are not.
\end{rem}

The pullback by $\Psi$ of the mass measure $\mu$ on the body $\body$
\begin{equation*}
  \omega:=\Psi^{*} \mu=(\mu\circ \Psi)(T\Psi \cdot, T\Psi \cdot, T\Psi \cdot)
\end{equation*}
is a 3-form defined on the four-dimensional World tube $\mW=\Psi^{-1}(\body)$. Since $T_{m}\Psi$ is assumed to be of rank $3$ at each point of $\mW$, there exists a \emph{nowhere vanishing} vector field $\bP$ on $\mW$, such that
\begin{equation}\label{eq:def-P}
  \omega=i_{\bP} \vol_{g},
\end{equation}
where $i_{\bP}$ is the interior product (or contraction) of $\omega$ by $\bP$. This vector field $\bP$ is the \emph{current of matter} (it was called \textit{vecteur courant de matière} in~\cite{Sou1958}).

\begin{rem}
  In 3D Classical Continuum Mechanics, the pushforward of the mass measure $\mu$ by the embedding $\pp\colon \body \to \espace$ \cite{KD2021}, when expressed using the 3D volume form $\vol_{q}$, is represented by a scalar density $\rho$ (indeed, $\pp_{*} \mu=\rho \, \vol_{q}$). In 4D, the pullback of the mass measure $\mu$ by the matter field $\Psi$, when expressed using the 4D volume form $\vol_{g}$, is represented by the quadrivector~$\bP$ (indeed, $\Psi^{*} \mu=i_{\bP} \vol_{g}$).
\end{rem}

At each point $m\in \mW$, the tangent vector $\bP(m)$ spans the one-dimensional subspace $\ker T_{m}\Psi$.
Indeed, the equality
\begin{equation*}
  \mu_{\Psi(m)}(T_{m}\Psi.\bP, T_{m}\Psi.\xi_{1}, T_{m}\Psi.\xi_{2}) = \vol_{g_{m}}(\bP(m),\bP(m),\xi_{1},\xi_{2})=0,
  \quad
  \forall  \xi_{1},\xi_{2} \in T_{m}\mW,
\end{equation*}
implies
\begin{equation}\label{eq:TPsi-P-null}
  T_{m}\Psi.\bP(m)=0,
\end{equation}
since $T_{m}\Psi$ is surjective. To describe perfect matter, Souriau assumes furthermore that $\bP$ is timelike, \emph{i.e.} that
\begin{equation*}
  \norm{\bP}_{g}^{2}=g(\bP, \bP)<0
\end{equation*}
on the World tube $\mW$ (we refer to~\cite{Sou1964} for the other cases, light for instance). Observe that $\bP$ defines a time orientation on $\mW$.

It will be proved as essential to define a unit timelike vector field $\bU$ collinear to $\bP$, and to write
\begin{equation}\label{eq:def-U}
  \bP=\rho_{r} \bU,
  \quad
  \text{with}
  \quad
  \norm{\bU}^{2}_{g}=-1.
\end{equation}
The function
\begin{equation}\label{eq:def-rho}
  \rho_{r} :=\sqrt{-\norm{\bP}_{g}^{2}} \, ,
\end{equation}
defined on the World tube $\mW$, is then interpreted as the \emph{rest mass density} \cite{Sou1958,Sou1964,KM1992}.

\begin{rem}\label{eq:cP}
  In Special Relativity, the vector field $c\bP$ is the \emph{four-momentum quadrivector}.
\end{rem}

We will finish this section by defining the \emph{conformation}, a fundamental concept introduced by Souriau in 1958. It is the cornerstone of the formulation of Relativistic Hyperelasticity at large scale, in particular, for the modeling of neutron stars with a solid crust. In recent works, it is sometimes referred to as \emph{strain}, but since \emph{this term} has a slightly different meaning in Classical Continuum Mechanics, we prefer to keep the initial name given by Souriau. The conformation is defined as the vector-valued function~\cite{Sou1958}
\begin{equation}\label{eq:conformation}
  \bH: \mM \to \Sym^{2}V, \qquad m \mapsto \bH(m):= (T_{m}\Psi)\,g_{m}^{-1} (T_{m}\Psi)^{\star},
\end{equation}
where $\Sym^{2}V$ is the six-dimensional vector space of symmetric contravariant second-order tensors on $V$. In simpler words, $\bH$ is a function from the four-dimensional manifold $\mM$ to the vector space of $3 \times 3$ matrices. The hypothesis we made that $\Psi$ is a submersion on $\mW$, together with the hypothesis that $\ker T\Psi$ is generated by the timelike vector field $\bU$ implies that $\bH(m)$ is positive definite for all $m \in \mW$.

\begin{rem}\label{rem:Psim1}
  Since the mapping $\Psi: \mW \to \body$ is not invertible, the conformation cannot be considered, \textit{stricto sensu}, as the pushforward of $g^{-1}$, which is not defined. It is thus not, strictly speaking, a co-metric on $\body$, but a vector-valued function of $m \in \mW$ with value a symmetric second-order contravariant tensor in $V$. Note that, if we forget that $\body$ is a domain in the vector space $V$ but consider that it is a manifold, then $\bH$ is interpreted as a tensor field along $\Psi$ with values in $\Sym^{2} T\body$, in other words, it is a section of the pullback bundle $\Psi^{*}(\Sym^{2} T\body)$.
\end{rem}

\section{Conservation of matter}
\label{sec:mass-conservation}

Since the exterior derivative of the mass measure $\mu$ on the 3-dimensional manifold $\body$ vanishes, $\dd \mu=0$, we get the following conservation law.

\begin{lem}[Souriau, 1958] \label{lem:divP}
  We have the following conservation law on the World tube $\mW$
  \begin{equation*}
    \dive^{g} \bP=0.
  \end{equation*}
\end{lem}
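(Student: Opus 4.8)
The plan is to show that the divergence condition on $\bP$ is equivalent to the closedness of the $3$-form $\omega = i_{\bP}\vol_g$, and then to deduce the latter from $\dd\mu = 0$ by naturality of the exterior derivative under pullback.

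First I would recall the standard identity relating the divergence of a vector field to the Lie derivative of the volume form. For any vector field $X$ on a pseudo-Riemannian manifold, one has
\begin{equation*}
  \Lie_{X}\vol_{g} = (\dive^{g} X)\,\vol_{g}.
\end{equation*}
Applying this to $\bP$ on $\mW$ (or on $\mM$, where $\bP$ extends as a section), and using Cartan's magic formula $\Lie_{\bP} = i_{\bP}\dd + \dd\, i_{\bP}$, together with $\dd\vol_{g} = 0$ (the volume form is top-degree after restriction, or closed as a $4$-form on $\mM$), I obtain
\begin{equation*}
  (\dive^{g}\bP)\,\vol_{g} = \Lie_{\bP}\vol_{g} = \dd\, i_{\bP}\vol_{g} = \dd\omega.
\end{equation*}
Since $\vol_{g}$ is nowhere vanishing, this shows that $\dive^{g}\bP = 0$ if and only if $\dd\omega = 0$.

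Next I would establish $\dd\omega = 0$. By definition, $\omega = \Psi^{*}\mu$, and the exterior derivative commutes with pullback, so
\begin{equation*}
  \dd\omega = \dd(\Psi^{*}\mu) = \Psi^{*}(\dd\mu).
\end{equation*}
Because $\mu \in \Omega^{3}(\body)$ is a top-degree form on the three-dimensional manifold $\body$, we have $\dd\mu = 0$ automatically, hence $\Psi^{*}(\dd\mu) = 0$ and $\dd\omega = 0$. Combining this with the equivalence from the previous step yields $\dive^{g}\bP = 0$ on $\mW$.

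I expect the only delicate point to be the bookkeeping about \emph{where} these objects live: $\omega$ and $\bP$ are defined on the four-dimensional World tube $\mW$, whereas $\vol_g$ is a $4$-form on $\mM$ restricting to a top form on $\mW$. One must be careful that the Cartan-formula computation is carried out consistently—either by extending $\bP$ off $\mW$ or by working intrinsically on $\mW$ with the induced volume form—and that $\dd\omega$ is interpreted as a $4$-form that must vanish for dimensional reasons on the $4$-manifold $\mW$. The substantive content, however, is entirely the pullback identity $\dd\Psi^{*}\mu = \Psi^{*}\dd\mu = 0$; the rest is the translation between the language of closed forms and the language of divergence-free vector fields.
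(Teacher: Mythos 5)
Your proposal is correct and follows essentially the same route as the paper's proof: the identity $\Lie_{\bP}\vol_{g} = (\dive^{g}\bP)\,\vol_{g}$, Cartan's magic formula to reduce to $\dd\, i_{\bP}\vol_{g} = \dd\Psi^{*}\mu$, and then naturality of the pullback together with $\dd\mu = 0$ (automatic since $\mu$ is top-degree on the three-dimensional body $\body$). Your extra remarks on where the objects live and on the equivalence with closedness of $\omega$ are sound clarifications but do not change the argument.
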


\begin{proof}
  Let $\Lie_{\bP}$ be the Lie derivative with respect to $\bP$. Then,
  \begin{equation*}
    \Lie_{\bP} \vol_{g} = (\dive^{g} \bP)\, \vol_{g},
  \end{equation*}
  but, using Cartan magic formula,
  \begin{equation*}
    \Lie_{\bP} \vol_{g} = \dd \, i_{\bP} \vol_{g}=\dd\Psi^{*} \mu=\Psi^{*} \dd\mu=0.
  \end{equation*}
\end{proof}

\begin{rem}
  In Special Relativity, the equation
  \begin{equation*}
    \dive^{g} \bP = \dive^{g} (\rho_{r} \bU)=0
  \end{equation*}
  recasts as the usual continuity equation of Classical Fluid Dynamics \cite{Eck1940,Sou1958,Sou1964}, and is interpreted as the Relativistic mass conservation. It will be shown, furthermore, in \autoref{sec:Gallilean}, that $ \dive^{g} c \bP=0$ converges towards the classical continuity equation under subsequent hypothesis.
\end{rem}

If the body $\body$ is endowed with a Riemannian metric $\bgamma_{0}$, the rest mass density $\rho_{r}$ can be related to the conformation $\bH$, as demonstrated by Souriau in~\cite{Sou1958}, where he chose $\bgamma_{0}:=\bq$, the canonical Euclidean metric on $V=\RR^{3}$ (see also~\cite{KM1992}). The notation $\bgamma_{0}$ is chosen for consistency with the intrinsic geometric framework of three-dimensional Hyperelasticity~\cite{Nol1978,Rou1991a,Rou2006,KD2021}, and $\bgamma_{0}$ is not necessarily equal to $\bq$, as discussed by several authors~\cite{BS2003,GHE2011} (see \autoref{sec:reference-metric} for a discussion about different choices for $\bgamma_{0}$).

\begin{lem}[Souriau, 1958] \label{lem:PCM_GR}
  Let $\bgamma_{0}$ be a fixed Riemannian metric on the body $\body$. Then, the rest mass density $\rho_{r}$ can be expressed as
  \begin{equation}\label{eq:PCM_GR}
    \rho_{r} = (\rho_{\bgamma_{0}} \circ \Psi) \sqrt{\det\left[ \bH(\bgamma_{0} \circ \Psi)\right]},
  \end{equation}
  where $\Psi$ is the matter field, $\bH$ is the conformation, and
  \begin{equation*}
    \rho_{\bgamma_{0}} = \frac{\mu}{\vol_{\bgamma_{0}}}.
  \end{equation*}
\end{lem}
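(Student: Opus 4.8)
The plan is to argue pointwise at an arbitrary $m \in \mW$ and to evaluate the $3$-form $\omega = \Psi^{*}\mu = i_{\bP}\vol_{g}$ on a well-chosen spacelike frame, comparing the expression coming from the pullback $\Psi^{*}\mu$ with the one coming from $i_{\bP}\vol_{g}$. Since $\bU$ is the unit timelike generator of $\ker T_{m}\Psi$ and $g(\bU,\bU)=-1$, the $g$-orthogonal complement $\bU^{\perp}\subset T_{m}\mM$ is spacelike; I would fix a $g$-orthonormal basis $(\be_{1},\be_{2},\be_{3})$ of $\bU^{\perp}$, so that $(\bU,\be_{1},\be_{2},\be_{3})$ is a pseudo-orthonormal frame of $T_{m}\mM$ and $\vol_{g}(\bU,\be_{1},\be_{2},\be_{3})=\pm1$.

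The core computation is the expression of the conformation in this frame. Writing $A:=T_{m}\Psi$, the relation $A\bU=0$ from \eqref{eq:TPsi-P-null} shows that $A$ factors through the orthogonal projection onto $\bU^{\perp}$, and that $A^{\star}\beta$ annihilates $\bU$ for every $\beta\in V^{\star}$. Hence, when computing $\bH=Ag^{-1}A^{\star}$ from \eqref{eq:conformation} in the chosen frame, the timelike slot of $g^{-1}$ (its only negative entry) never contributes, and one is left with the positive definite expression
\begin{equation*}
  \bH=\sum_{i=1}^{3}(A\be_{i})\otimes(A\be_{i}),
\end{equation*}
which is exactly the pushforward of the inverse spatial metric $(g|_{\bU^{\perp}})^{-1}$ by the isomorphism $A|_{\bU^{\perp}}\colon\bU^{\perp}\to V$ (compare \autoref{rem:Psim1}). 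Setting $v_{i}:=A\be_{i}$, which form a basis of $V$, the endomorphism $\bH(\bgamma_{0}\circ\Psi)$ of $V$ then has, in the basis $(v_{1},v_{2},v_{3})$, the Gram matrix $[\bgamma_{0}(v_{i},v_{j})]$. I would conclude, using the standard identity $\det[\bgamma_{0}(v_{i},v_{j})]=\vol_{\bgamma_{0}}(v_{1},v_{2},v_{3})^{2}$, that
\begin{equation*}
  \det\!\left[\bH(\bgamma_{0}\circ\Psi)\right]=\vol_{\bgamma_{0}}(v_{1},v_{2},v_{3})^{2}.
\end{equation*}

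It then remains to evaluate $\omega$ on $(\be_{1},\be_{2},\be_{3})$ in two ways. On one hand, $\bP=\rho_{r}\bU$ from \eqref{eq:def-U} gives $(i_{\bP}\vol_{g})(\be_{1},\be_{2},\be_{3})=\rho_{r}\,\vol_{g}(\bU,\be_{1},\be_{2},\be_{3})=\pm\rho_{r}$. On the other hand, the definition of the pullback gives $(\Psi^{*}\mu)(\be_{1},\be_{2},\be_{3})=\mu_{\Psi(m)}(v_{1},v_{2},v_{3})=(\rho_{\bgamma_{0}}\circ\Psi)\,\vol_{\bgamma_{0}}(v_{1},v_{2},v_{3})$, by the very definition $\rho_{\bgamma_{0}}=\mu/\vol_{\bgamma_{0}}$. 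Equating the two, taking absolute values, and using that $\rho_{r}>0$ and $\rho_{\bgamma_{0}}>0$ are genuine mass densities, I obtain $\rho_{r}=(\rho_{\bgamma_{0}}\circ\Psi)\,\lvert\vol_{\bgamma_{0}}(v_{1},v_{2},v_{3})\rvert$, which is \eqref{eq:PCM_GR} in view of the previous paragraph. The main obstacle is the conformation identity: one must carefully track the Lorentzian signature to see that the single negative direction $\bU$ decouples from $\bH$, so that $\bH$ is positive definite and its determinant against $\bgamma_{0}$ is a genuine squared volume; the remaining orientation signs are then harmless once absolute values are taken.
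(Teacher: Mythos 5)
Your proof is correct and follows essentially the same route as the paper's: both work pointwise in a $g$-orthonormal frame $(\bU,\be_{1},\be_{2},\be_{3})$ adapted to $\bU$, evaluate $\omega=\Psi^{*}\mu=i_{\bP}\vol_{g}$ on the spatial triple in two ways, and identify $\sqrt{\det[\bH(\bgamma_{0}\circ\Psi)]}$ with the $\bgamma_{0}$-volume of the image vectors $T_{m}\Psi\,\be_{i}$. The only (cosmetic) difference is that you obtain this last identity via the explicit formula $\bH=\sum_{i}(T_{m}\Psi\,\be_{i})\otimes(T_{m}\Psi\,\be_{i})$ and the Gram determinant, whereas the paper phrases it as $T_{m}\Psi$ being an isometry from $(\bU^{\perp},g_{m})$ onto $(T_{\Psi(m)}\body,\bH(m)^{-1})$ together with the ratio of the volume forms $\vol_{\bgamma_{0}}$ and $\vol_{\bH^{-1}}$ --- the same linear-algebra fact in two guises.
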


\begin{rem}\label{rem:PCM_GR}
  The function $\rho_{\bgamma_{0}}$ is defined on the body $\body$, and interpreted as the mass density with respect to the Riemannian volume form $\vol_{\bgamma_{0}}$. It is very important to note, for subsequent applications, that $\rho_{\bgamma_{0}}$ is independent of the metric $g$ on the Universe $\mM$. Moreover, one can check that the right hand-side of~\eqref{eq:PCM_GR} does not depend on $\bgamma_{0}$, as expected. Indeed if we substitute $\bgamma_{1}$ to $\bgamma_{0}$ in~\eqref{eq:PCM_GR}, one has $\rho_{\bgamma_{1}} = \sqrt{\det(\bgamma_{1}^{-1}\bgamma_{0})}\,\rho_{\bgamma_{0}}$, whereas $\det(\bH\bgamma_{1}) = \det(\bgamma_{0}^{-1}\bgamma_{1}) \det( \bH\bgamma_{0})$, and thus
  \begin{equation*}
    (\rho_{\bgamma_{0}} \circ \Psi) \sqrt{\det\left[ \bH(\bgamma_{0} \circ \Psi)\right]} = (\rho_{\bgamma_{1}} \circ \Psi) \sqrt{\det\left[ \bH(\bgamma_{1} \circ \Psi)\right]}.
  \end{equation*}
\end{rem}

\begin{proof}
  Note first that there exists a function $\rho_{\bgamma_{0}}$ (a mass density) defined on $\body$ such that
  \begin{equation*}
    \mu = \rho_{\bgamma_{0}} \vol_{\bgamma_{0}}.
  \end{equation*}
  Thus, by~\eqref{eq:def-P}--\eqref{eq:def-U}, we get
  \begin{equation*}
    \omega = \rho_{r} \, i_{\bU}\vol_{g} = \Psi^{*}\mu = (\rho_{\bgamma_{0}}\circ\Psi) \Psi^{*}\vol_{\bgamma_{0}}.
  \end{equation*}
  Let $m \in \mW$ and $(\ee_{0} = \bU(m),\ee_{i})$ be a direct orthonormal basis of $T_{m}\mM$. Then, we have
  \begin{equation*}
    \vol_{g_{m}}(\bU(m),\ee_{1},\ee_{2},\ee_{3}) = 1
  \end{equation*}
  and
  \begin{equation*}
    \rho_{r}(m) = \rho_{r}(m)\vol_{g_{m}}(\bU(m),\ee_{1},\ee_{2},\ee_{3}) = \rho_{\bgamma_{0}}(\Psi(m)) \, \vol_{\bgamma_{0}(\Psi(m))}(T_{m}\Psi \ee_{1}, T_{m}\Psi \ee_{2},T_{m}\Psi \ee_{3}).
  \end{equation*}
  Observe now that the vector space $T_{\Psi(m)}\body$ is endowed with two Euclidean structures; the first one, defined by $\bgamma_{0}(\Psi(m))$ and the second one, defined by $\bH(m)^{-1}$. Besides, the restriction of $T_{m}\Psi$ to the three-dimensional subspace $\bU(m)^{\bot}$ (the orthogonal complement of $\bU(m)$ in $T_{m}\mM$) is a linear isomorphism and
  \begin{equation*}
    T_{m}\Psi\colon (\bU(m)^{\bot},g_{m}) \to (T_{\Psi(m)}\body,\bH(m)^{-1})
  \end{equation*}
  is an isometry, by the very definition of the conformation $\bH$. Hence, $(T_{m}\Psi \ee_{1}, T_{m}\Psi \ee_{2},T_{m}\Psi \ee_{3})$ is a direct orthonormal basis of the Euclidean space $(T_{\Psi(m)}\body,\bH(m)^{-1})$ and thus
  \begin{equation*}
    \vol_{\bH(m)^{-1}}(T_{m}\Psi \ee_{1}, T_{m}\Psi \ee_{2},T_{m}\Psi \ee_{3}) = 1.
  \end{equation*}
  We have therefore
  \begin{align*}
    \vol_{\bgamma_{0}(\Psi(m))}(T_{m}\Psi \ee_{1}, T_{m}\Psi \ee_{2},T_{m}\Psi \ee_{3})
     & = \sqrt{\det\left[\bH(m)\bgamma_{0}(\Psi(m))\right]}\, \vol_{\bH(m)^{-1}}(T_{m}\Psi \ee_{1}, T_{m}\Psi \ee_{2},T_{m}\Psi \ee_{3}) \\
     & =  \sqrt{\det\left[\bH(m)\bgamma_{0}(\Psi(m))\right]},
  \end{align*}
  and thus
  \begin{equation*}
    \rho_{r}(m) = \rho_{\bgamma_{0}}(\Psi(m)) \, \sqrt{\det\left[\bH(m)\bgamma_{0}(\Psi(m))\right]}.
  \end{equation*}
\end{proof}

\begin{rem}\label{rem:PCM_Omega0}
  In Classical Continuum Mechanics, the body is often identified with a reference configuration $\Omega_{0}$ embedded in $\RR^{3}$ and endowed with the euclidean metric $\bgamma_{0}=q$. Two mass densities, $\rho_{0}$ on $\Omega_{0}$ and $\rho$ on the deformed configuration $\Omega$ (also embedded in $\RR^{3}$), are usually defined. A classical expression of mass balance is formulated on $\Omega_{0}$ as
  \begin{equation}\label{eq:PCM_Omega0}
    \rho_{0} = (\rho \circ \phi) \sqrt{\det(\bq^{-1}\bC)},
  \end{equation}
  where $\phi \colon \Omega_{0}\to \Omega$ is the \emph{deformation}, $\bC:=\phi^{*}\bq$ is the \emph{right Cauchy--Green tensor} (defined on $\Omega_{0}$ as the pullback by the deformation $\phi$ of the Euclidean metric $\bq$). The formal comparison of~\eqref{eq:PCM_GR}, recast as
  \begin{equation*}
    \rho_{\bgamma_{0}}\circ \Psi=\rho_{r}  \sqrt{\det\left[ (\bgamma_{0}^{-1} \circ \Psi)\bH^{-1}\right]},
  \end{equation*}
  with ~\eqref{eq:PCM_Omega0}, shows that~\eqref{eq:PCM_GR} can be interpreted as a Relativistic generalization of the mass conservation law for Galilean deformable solids. It also shows that the inverse of the (contravariant) conformation $\bH$ plays the role of the (covariant) right Cauchy--Green tensor $\bC$.
\end{rem}

\section{Conformation and strains}
\label{sec:Conformation-strains}

The existence of the unit timelike vector field $\bU$ on the World tube $\mW$ allows to perform the related orthogonal decompositions of the metric $g$ and co-metric $g^{-1}$ (see~\autoref{sec:Orth-Decomp-2nd}),
\begin{equation}\label{eq:def-h}
  g = h-\bU^\flat \otimes \bU^{\flat},
  \qquad
  g^{-1} = h^{\sharp}-\bU \otimes \bU,
  \qquad \text{on $\mW$},
\end{equation}
where the tensor fields $h$ (noted $E$ in \cite{KM1997}) and $h^{\sharp}=g^{-1} h g^{-1}$, the spatial part of $g$ and $g^{-1}$ respectively, are uniquely defined by the conditions
\begin{equation}\label{eq:hU}
  h\bU = 0, \quad \text{and} \quad h = g \quad \text{on} \quad  \bU^{\perp},
\end{equation}
where $\bU^{\perp}$ is the three-dimensional (necessarily spacelike) orthogonal subbundle to $\bU$. Both $h$ and $h^{\sharp}$ have signature $(0,+,+,+)$. These orthogonal decompositions are highlighted at the beginning of most works on Relativistic Fluids or Solids \cite{Eck1940,Lic1955,CQ1972,KM1997}. We point out, however, that Souriau did not need to perform them to derive the general covariant formulation of Relativistic Hyperelasticity \cite{Sou1958,Sou1964}. There are two reasons for it. First, the four-dimensional symmetric second-order tensors $h$ and $h^{\sharp}$ are strongly related to the conformation
\begin{equation*}
  \bH = (T\Psi)\,g^{-1} (T\Psi)^{\star},
\end{equation*}
by lemma~\ref{lem:h-H}. Secondly, $h$ and $h^{\sharp}$ do not appear naturally in the derivation of a general covariant formulation of Relativistic Hyperelasticity, contrary to the conformation $\bH$ (see theorem~\ref{thm:LPsiH}).

\begin{lem}\label{lem:h-H}
  On the World tube $\mW$, we have
  \begin{equation}\label{eq:Hhsharp}
    \bH = (T\Psi)\,h^{\sharp} (T\Psi)^{\star},
    \quad \text{and}\quad
    h=(T\Psi)^{\star}\, \bH^{-1} T\Psi.
  \end{equation}
  where $h=g+\bU^\flat \otimes \bU^{\flat}$ and $h^{\sharp}=g^{-1} h g^{-1}$.
\end{lem}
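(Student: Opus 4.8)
The plan is to establish the two identities separately: the first follows directly by substituting the orthogonal decomposition \eqref{eq:def-h} into the definition of $\bH$, while the second requires a comparison of symmetric bilinear forms, since $T\Psi$ is not invertible.

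First I would prove $\bH = (T\Psi)\,h^{\sharp}(T\Psi)^{\star}$. Inserting $g^{-1} = h^{\sharp}-\bU\otimes\bU$ from \eqref{eq:def-h} into $\bH = (T\Psi)\,g^{-1}(T\Psi)^{\star}$ gives
\[
  \bH = (T\Psi)\,h^{\sharp}(T\Psi)^{\star} - (T\Psi)(\bU\otimes\bU)(T\Psi)^{\star},
\]
so it only remains to kill the last term. For any $\beta\in V^{\star}$ one computes $(T\Psi)(\bU\otimes\bU)(T\Psi)^{\star}\beta = \beta(T_{m}\Psi\,\bU)\,(T_{m}\Psi\,\bU)$, which vanishes because $\ker T_{m}\Psi = \mathrm{span}\,\bU$ (equivalently, by \eqref{eq:TPsi-P-null} together with $\bP=\rho_{r}\bU$). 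This yields the first identity.

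For the second identity the obstruction is precisely that $T\Psi$ has a nontrivial kernel, so one cannot simply invert the first relation. Instead I would show that $h$ and $B := (T\Psi)^{\star}\bH^{-1}(T\Psi)$ are symmetric covariant $2$-tensors on $\mW$ with the same kernel that moreover agree on $\bU^{\perp}$. The kernels coincide: $h\bU=0$ by \eqref{eq:hU}, and $Bv=0$ forces $T_{m}\Psi\,v=0$ since $\bH^{-1}$ is invertible and $(T\Psi)^{\star}$ is injective (because $T_{m}\Psi$ is surjective); hence $\ker h = \ker B = \mathrm{span}\,\bU$. As $\bU$ is timelike, $T_{m}\mM = \mathrm{span}\,\bU \oplus \bU^{\perp}$, and a symmetric form annihilating $\bU$ is determined by its restriction to $\bU^{\perp}$, so it suffices to compare $h$ and $B$ on $\bU^{\perp}$.

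On $\bU^{\perp}$ one has $h=g$ by \eqref{eq:hU}, so the claim reduces to $B(v,w)=g(v,w)$ for $v,w\in\bU^{\perp}$. Writing $B(v,w)=\langle T_{m}\Psi\,v,\,T_{m}\Psi\,w\rangle_{\bH(m)^{-1}}$, this is exactly the assertion that
\[
  T_{m}\Psi\colon (\bU(m)^{\perp},g_{m}) \longrightarrow (T_{\Psi(m)}\body,\bH(m)^{-1})
\]
is an isometry --- the fact already extracted, directly from the definition $\bH=(T\Psi)g^{-1}(T\Psi)^{\star}$, in the proof of Lemma~\ref{lem:PCM_GR}. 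Invoking it finishes the argument. The hard part is this isometry step: it is where the particular structure of the conformation enters, encoding that $\bH^{-1}$ is the image of the spacelike metric $g|_{\bU^{\perp}}$ under the isomorphism $T_{m}\Psi|_{\bU^{\perp}}$; everything else is bookkeeping with the orthogonal splitting.
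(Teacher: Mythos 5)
Your proof is correct, and while the first half coincides with the paper's argument, the second half takes a genuinely different route. For the first identity you do exactly what the paper does: substitute $g^{-1}=h^{\sharp}-\bU\otimes\bU$ into $\bH=(T\Psi)\,g^{-1}(T\Psi)^{\star}$ and kill the extra term using $T\Psi\,\bU=0$. For the second identity, the paper works pointwise in an orthonormal basis $(\ee_{\mu})$ with $\ee_{0}=\bU(m)$: there $T_{m}\Psi$ is represented by $\begin{pmatrix} 0 & M \end{pmatrix}$ with $M$ invertible, the definition of $\bH$ gives $\bH(m)=MM^{\star}$, and $h_{m}=(T_{m}\Psi)^{\star}\bH(m)^{-1}T_{m}\Psi$ is then verified by direct matrix multiplication. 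You instead argue invariantly: both $h$ and $B:=(T\Psi)^{\star}\bH^{-1}(T\Psi)$ annihilate $\bU$ (your kernel argument, via injectivity of $(T\Psi)^{\star}$ and invertibility of $\bH^{-1}$, is sound), so by the splitting $T_{m}\mM=\langle\bU(m)\rangle\oplus\bU(m)^{\perp}$ it suffices to compare them on $\bU^{\perp}$, where $h=g$ and agreement is precisely the statement that $T_{m}\Psi\colon(\bU(m)^{\perp},g_{m})\to(T_{\Psi(m)}\body,\bH(m)^{-1})$ is an isometry. That isometry is indeed available: it is asserted in the paper's proof of lemma~\ref{lem:PCM_GR}, which precedes the present lemma, so there is no circularity. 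The two arguments carry the same mathematical content --- in the adapted frame the isometry statement is exactly the identity $(MM^{\star})^{-1}=M^{-\star}M^{-1}$ that the paper computes --- but the trade-offs differ: the paper's computation is self-contained, whereas yours is coordinate-free and explains the identity as a comparison of two degenerate metrics with common kernel, at the cost of outsourcing the key step to a fact the paper justifies only with the phrase ``by the very definition of the conformation''. A fully self-contained version of your argument would add the short verification that $g^{-1}(T\Psi)^{\star}\beta\in\bU^{\perp}$ for every $\beta\in V^{\star}$ (because $(T\Psi)^{\star}\beta$ annihilates $\bU$), which yields $\bH=A\,\bigl(g|_{\bU^{\perp}}\bigr)^{-1}A^{\star}$ for $A:=T_{m}\Psi|_{\bU^{\perp}}$ and hence the isometry you invoke.
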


\begin{proof}
  First, since $T\Psi \bU=0$, the conformation, when restricted to the World tube $\mW$, recasts as
  \begin{equation*}
    \bH = (T\Psi)\left(h^{\sharp} - \bU \otimes \bU \right)(T\Psi)^{\star}=(T\Psi)\,h^{\sharp} (T\Psi)^{\star}.
  \end{equation*}
  Then, to prove the second equality, remark that the statement is pointwise. Therefore, we can use an orthonormal basis $(\ee_{\mu})$ of $T_{m}\mM$ with $\ee_{0} = \bU(m)$. In this basis, $h_{m}$ is represented by the $4 \times 4$ matrix
  \begin{equation*}
    \begin{pmatrix}
      0 & 0     \\
      0 & I_{3}
    \end{pmatrix},
  \end{equation*}
  where $I_{3}$ is the $3 \times 3$ identity matrix. Now, respectively to this basis and the canonical basis of $\RR^{3}$, the linear map $T_{m}\Psi\colon T_{m}\mM \to \RR^{3}$ is represented by the matrix
  \begin{equation*}
    \begin{pmatrix}
      0 & M \\
    \end{pmatrix},
  \end{equation*}
  where $M$ is a $3 \times 3$ invertible matrix, and its transpose $(T_{m}\Psi)^{\star}$ by the matrix
  \begin{equation*}
    \begin{pmatrix}
      0         \\
      M^{\star} \\
    \end{pmatrix}.
  \end{equation*}
  Thus, we have
  \begin{equation*}
    \bH(m) = (T_{m}\Psi)\,g_{m}^{-1} (T_{m}\Psi)^{\star} =
    \begin{pmatrix}
      0 & M \\
    \end{pmatrix}
    \begin{pmatrix}
      -1 & 0     \\
      0  & I_{3}
    \end{pmatrix}
    \begin{pmatrix}
      0         \\
      M^{\star} \\
    \end{pmatrix}
    = MM^{\star},
  \end{equation*}
  and
  \begin{equation*}
    (T_{m}\Psi)^{\star}\, \bH(m)^{-1} T_{m}\Psi =
    \begin{pmatrix}
      0         \\
      M^{\star} \\
    \end{pmatrix}
    M^{-\star}M^{-1}
    \begin{pmatrix}
      0 & M \\
    \end{pmatrix}
    =
    \begin{pmatrix}
      0     \\
      I_{3} \\
    \end{pmatrix}
    \begin{pmatrix}
      0 & I_{3} \\
    \end{pmatrix}
    = h_{m}.
  \end{equation*}
\end{proof}

The definition of a strain in (hyper)elasticity is usually obtained by comparing two metrics. If the body $\body$ is endowed with a fixed Riemannian metric $\bgamma_{0}$, it can be used to define a strain tensor in Relativistic Hyperelasticity. A first possibility~\cite{CQ1972,Mau1978b} is to introduce the pullback by $\Psi$ of $\bgamma_{0}$,  given by
\begin{equation}\label{eq:h0}
  h_{0} := \Psi^{*}\bgamma_{0}=(T\Psi)^{\star} (\bgamma_{0}\circ \Psi) T\Psi,
\end{equation}
and called a \emph{frozen metric} in~\cite{KM1992,KM1997} (these authors note it $h$ rather than $h_{0}$).
It is defined on the World tube $\mW$ and is of signature $(0, +, +, +)$. Conversely, given a quadratic form $h_{0}$ on $\mW$ with signature $(0, +,+,+)$, the question of when it can be realized as the pullback by $\Psi$ of a fixed Riemannian metric $\bgamma_{0}$ on the body, has been investigated by Kijowski and Magli~(see \autoref{sec:reference-metric}).

A possible generalization of the \emph{Euler-Almansi strain tensor}~\cite{CQ1972,Mau1978b} is then obtained as the four-dimensional symmetric covariant tensor field,
\begin{equation}\label{eq:strains-h}
  \be := \frac{1}{2} (h-h_{0}).
\end{equation}
Note that $\be=0$ for $h=h_{0}$ and that $\be$ is degenerate since $\be \bU=0$.

\begin{rem}
  As observed by Carter and Quintana \cite{CQ1972}, since the linear tangent map $T\Psi$ plays a role similar to that of the inverse of the tangent map $\bF=T\pp$ in Classical Continuum Mechanics (see remark~\ref{rem:F-CCM}), the frozen metric $h_{0}$ plays a role similar to that of the inverse, sometimes called the \emph{finger deformation tensor}, of the \emph{left Cauchy--Green tensor} $\bb:=\bF\bgamma_{0}^{-1} \bF^{\star}$.
\end{rem}

Other choices for strain tensors similar to the ones of Classical Continuum Mechanics can be made, for instance the following ones which are simpler and probably more relevant,
\begin{equation}\label{eq:strains-H}
  \mathfrak E:= \frac{1}{2} \left(\bH^{-1}-\bH_{0}^{-1}\right)
  \quad \text{or} \quad
  \widehat{\mathfrak E} := -\frac{1}{2}\log \big(\bH\,\bH_{0}^{-1}\big),
\end{equation}
where
\begin{equation} \label{eq:H0}
  \bH_{0}:=\bgamma_{0}^{-1} \circ \Psi.
\end{equation}
The first one generalizes the \emph{Green--Lagrange strain}, whereas the second one generalizes the \emph{logarithmic strain} introduced by Becker~\cite{Bec1893} and Hencky~\cite{Hen1928} (see~\cite{MMEN2018}). They both vanish when $\bH^{-1}=\bH_{0}^{-1}=\bgamma_{0}\circ \Psi$. These strain tensors are three-dimensional second-order tensors. Like the conformation, they are not tensor fields on $\body$ but vector valued functions defined on the World tube $\mW$ with values in $\Sym^{2}V$.

Note that $\bH_{0}$ is related to $h_{0}$ by
\begin{equation*}
  h_{0}=(T\Psi)^{\star}\,\bH_{0} \, (T\Psi)
\end{equation*}
and that, by lemma~\ref{lem:h-H} and definitions~\eqref{eq:h0}--\eqref{eq:strains-h}, $\mathfrak E$ is connected to $\be$, defined by~\eqref{eq:strains-h}, by
\begin{equation*}
  \be=(T\Psi)^{\star}\, \mathfrak E\, (T\Psi)
  \quad
  \text{on $\mW$}.
\end{equation*}

\section{Lagrangian formulation}
\label{sec:Lagrangian-formulation}

In~\cite{Sou1958,Sou1964}, Souriau has proposed a clear and detailed formulation of Hyperelasticity in the framework of General Relativity. He called this formulation \emph{Variational Relativity} (which is the title of~\cite{Sou1958}). His approach consists in writing Lagrangians (\textit{i.e.} functionals depending on tensorial fields) and looking for critical points of them (Principle of Least, or Stationary, Action). This formulation is inspired by Gauge Theory~\cite{Ble1981}, which is the main framework of Fields Theory and Quantum Mechanics and can also be used to formulate General Relativity using variational principles (see Palatini's Method~\cite{Pal1919,FFR1982}).

The starting point is the \emph{Hilbert-Einstein functional}
\begin{equation}\label{eq:Hilbert-Einstein}
  \mH(g) = \int (a R_{g} + b)\, \vol_{g},
\end{equation}
defined formally on the set of all Lorentzian metrics on the Universe $\mM$. Here, the two constants $a$ and $b$ are related to the Einstein constant $\kappa$ (depending on the Newton constant $G$) and the cosmological constant $\Lambda$ by
\begin{equation*}
  \kappa = \frac{8\pi G}{c^{4}} = \frac{1}{2a}, \qquad \Lambda = -\frac{b}{2a}.
\end{equation*}
As derived first by Hilbert~\cite{Hil1924}, the $L^{2}$-gradient of $\mH$ (for Ebin's metric~\cite{Ebi1968}) is the symmetric second order covariant tensor field
\begin{equation}\label{eq:H-gradient}
  \grad \mH = a\, \Ric_{g}-\frac{1}{2}(a\, R_{g}+b)g = \frac{1}{2\kappa}\left( \Ein_{g} + \Lambda g\right),
\end{equation}
where $\Ric_{g}$ is the Ricci tensor of the metric $g$, $R_{g} = \tr(g^{-1}\Ric_{g})$ is the scalar curvature, and $\Ein_{g}$ is the \emph{Einstein tensor}, defined by
\begin{equation}\label{eq:Einstein-tensor}
  \Ein_{g} := \Ric_{g}-\frac{1}{2}R_{g}\, g.
\end{equation}

The critical points of $\mH$ are the solutions of Einstein's equation in the vacuum (with cosmological constant)
\begin{equation*}
  \Ein_{g} + \Lambda g = 0.
\end{equation*}

To introduce the effects of matter in this framework, a second functional $\mL^{\text{matter}}(g, \Psi)$, depending on the metric $g$ and the matter field $\Psi$, is added to $\mH$ to build a new Lagrangian
\begin{equation*}
  \mL(g,\Psi) = \mH(g) + \mL^{\text{matter}}(g, \Psi).
\end{equation*}
Following Souriau~\cite{Sou1958,Sou1964}, for Relativistic continua, one assumes that the Lagrangian for perfect matter $\mL^{\text{matter}}(g, \Psi)$ depends only on the $0$-jet of the metric $g$ and of the $1$-jet of the matter field $\Psi$. In other words, it takes the form
\begin{equation}\label{eq:LagrangianFirstJets}
  \mL^{\text{matter}}(g, \Psi) = \int L_{0}\left(g_{\mu\nu}, \Psi^{I}, \frac{\partial \Psi^{I}}{\partial x^{\mu}} \right) \vol_{g},
\end{equation}
where
\begin{equation*}
  L_{0} \colon (\bgamma,\vv,\btau) \to L_{0}(\bgamma,\vv,\btau)
\end{equation*}
is a smooth scalar function which has for arguments a quadratic form $\bgamma$ (of signature $(-,+,+,+)$) on $\RR^{4}$, a vector $\vv \in \RR^{3}$ and matrix $\btau$ with 3 raws and 4 columns. The function $L_{0}$ is called the \emph{Lagrangian density} of the functional $\mL^{\text{matter}}$, and its evaluation on the fields $(g,\Psi)$, that is $L_{0}\left(g_{\mu\nu}, \Psi^{I}, \frac{\partial \Psi^{I}}{\partial x^{\mu}} \right)$, will be denoted as $L_{0}(g, \Psi, T\Psi)$. Its evaluation at a point $m\in \mM$ is then noted $L_{0}(g_{m}, \Psi(m), T_{m}\Psi)$.

\begin{rem}
  The Lagrangian density $L_{0}$ is noted $p$ and called the \emph{presence} in~\cite{Sou1958}. It is noted $\epsilon$ and called the \emph{rest frame energy density} in~\cite{KM1992}. It is noted $\rho$ or $\sigma$ in \cite{BS2003}.
\end{rem}

In order to avoid unnecessary analytical difficulties and since, in practice, we do not require that Lagrangian densities are integrable over the whole manifold $\mM$, usually not compact, Lagrangian densities are integrated only over relatively
compact domains $U$ (and furthermore contained in a local chart). Therefore, we shall write
\begin{equation*}
  \mL_{U}(g, \Psi) = \int_{U} L(g_{m}, \Psi(m), T_{m}\Psi)\, \vol_{g},
\end{equation*}
to emphasize the dependence on $U$. When we just want to express that a Lagrangian $\mL$ is defined by the Lagrangian density
$L$, we simply write
\begin{equation*}
  \mL(g, \Psi) = \int L(g_{m}, \Psi(m), T_{m}\Psi) \, \vol_{g},
\end{equation*}
omitting the domain of integration.

The main postulate of General Relativity is precisely that \emph{Physical laws must be independent of the choice of coordinates}. This principle is known as \emph{General Covariance}, or invariance by coordinates change, or invariance by (local) diffeomorphisms. Let us describe this principle in more precise terms and formulate its consequences. Let $\varphi \colon U \to \widetilde{U}$ be a diffeomorphism between two open sets $U$ and $\widetilde{U}$. Then, the Lagrangian $\mL$ is invariant by $\varphi$ if
\begin{equation}\label{eq:invmL}
  \mL_{U}(\varphi^{*} g, \varphi^{*}\Psi) = \mL_{\widetilde{U}}(g, \Psi),
\end{equation}
for every Lorentzian metrics $g$ on $\mM$, and vector valued functions $\Psi:\mM\to V$. Here, the action of a (local) diffeomorphism $\varphi$ on these field variables is defined by
\begin{equation*}
  \varphi^{*} g = (T\varphi)^{\star}\,(g \circ \varphi) (T\varphi), \quad \text{and} \quad \varphi^{*} \Psi=\Psi \circ \varphi .
\end{equation*}
If the invariance~\eqref{eq:invmL} holds for every local diffeomorphism $\varphi \colon U \to \widetilde{U}$, then, $\mL$ is said to be \emph{general covariant}.

\begin{rem}\label{rem:H-invariance}
  It is well-known that the Hilbert-Einstein functional $\mH$ is general covariant. Indeed,
  \begin{equation*}
    \mH_{U}(\varphi^{*} g) = \mH_{\widetilde{U}}(g),
  \end{equation*}
  for every diffeomorphism $\varphi\colon U \to \widetilde{U}$, by virtue of the \emph{change of variables formula}
  \begin{equation*}
    \int_{U} \varphi^{*} \omega = \int_{\widetilde{U}} \omega,
  \end{equation*}
  and because
  \begin{equation*}
    (a R_{\varphi^{*} g} + b)\, \vol_{\varphi^{*} g} = \varphi^{*}[(a R_{g} + b)\, \vol_{g}].
  \end{equation*}
  As shown by Noether \cite{Noe1918,Kos2011}, a direct consequence of this invariance is the fundamental property~\cite{Ein1915,Wey1917}
  \begin{equation*}
    \dive^{g}(\Ein_{g} + \Lambda g) = \dive^{g} \Ein_{g} = 0.
  \end{equation*}
\end{rem}

\begin{lem}
  If the Lagrangian
  \begin{equation*}
    \mL^{\text{matter}}(g, \Psi) = \int L_{0}(g_{m}, \Psi(m), T_{m}\Psi) \, \vol_{g}
  \end{equation*}
  is general covariant, then, its Lagrangian density satisfies
  \begin{equation}\label{eq:Lagrangian-density-covariance}
    L_{0}(\bA^{\star} \bgamma \,\bA, \vv, \btau\bA) = L_{0}(\bgamma, \vv, \btau),
    \quad
    \forall \bA \in \GL(4).
  \end{equation}
\end{lem}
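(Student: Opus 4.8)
The plan is to exploit the fact that general covariance is an identity between integrals over arbitrary relatively compact domains, and then to localize it into a pointwise identity on the Lagrangian density $L_{0}$. First I would write out the invariance~\eqref{eq:invmL} explicitly as an equality of two integrals. By the change of variables formula, together with the fact that for a diffeomorphism $\varphi \colon U \to \widetilde{U}$ one has $\varphi^{*}\vol_{g} = \mathrm{sgn}(\det T\varphi)\,\vol_{\varphi^{*} g}$ in a chart (so that the orientation sign cancels against the sign in the change of variables formula for top-degree forms), the right-hand side $\mL_{\widetilde{U}}(g,\Psi)$ can be rewritten as an integral over $U$ against the \emph{same} volume form $\vol_{\varphi^{*} g}$ that already appears in $\mL_{U}(\varphi^{*} g,\varphi^{*}\Psi)$, namely
\begin{equation*}
  \mL_{\widetilde{U}}(g,\Psi) = \int_{U} L_{0}\big(g_{\varphi(m)},\, \Psi(\varphi(m)),\, T_{\varphi(m)}\Psi\big)\, \vol_{\varphi^{*} g}.
\end{equation*}

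Comparing this with the explicit form of $\mL_{U}(\varphi^{*} g,\varphi^{*}\Psi)$ and using that the resulting equality of integrals holds for \emph{every} (arbitrarily small) domain $U$, the standard localization argument of the calculus of variations forces the two smooth integrands to agree at each point $m$:
\begin{equation*}
  L_{0}\big((\varphi^{*} g)_{m},\, (\varphi^{*}\Psi)(m),\, T_{m}(\varphi^{*}\Psi)\big) = L_{0}\big(g_{\varphi(m)},\, \Psi(\varphi(m)),\, T_{\varphi(m)}\Psi\big).
\end{equation*}
It then remains to read off the three arguments on the left using the definitions recalled before~\eqref{eq:invmL}, namely $(\varphi^{*} g)_{m} = (T_{m}\varphi)^{\star}\,g_{\varphi(m)}\,(T_{m}\varphi)$, $(\varphi^{*}\Psi)(m) = \Psi(\varphi(m))$, and, by the chain rule, $T_{m}(\varphi^{*}\Psi) = T_{\varphi(m)}\Psi\cdot T_{m}\varphi$. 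Setting $\bA := T_{m}\varphi \in \GL(4)$ and abbreviating $\bgamma := g_{\varphi(m)}$, $\vv := \Psi(\varphi(m))$, $\btau := T_{\varphi(m)}\Psi$, this identity is exactly
\begin{equation*}
  L_{0}(\bA^{\star}\bgamma\,\bA,\, \vv,\, \btau\bA) = L_{0}(\bgamma,\, \vv,\, \btau).
\end{equation*}

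The final point to verify — and the only genuine subtlety — is that $\bA$ sweeps out all of $\GL(4)$ while $(\bgamma,\vv,\btau)$ independently sweeps out all admissible arguments of $L_{0}$. This is a realizability statement: given a point, any invertible $\bA$, any Lorentzian form $\bgamma$, any $\vv \in \RR^{3}$ and any $3\times 4$ matrix $\btau$, one must produce a concrete local diffeomorphism $\varphi$ and concrete fields $(g,\Psi)$ attaining these data there. In a local chart this is elementary: one takes $\varphi$ with prescribed differential $\bA$ at the point (for instance the linear map $x\mapsto \bA x$), a metric $g$ equal to $\bgamma$ at that point, and the affine matter field $\Psi\colon x\mapsto \vv + \btau\,x$, which has value $\vv$ and differential $\btau$ there. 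I expect the main obstacle to be mostly bookkeeping: carefully tracking the orientation sign in the volume-form transformation so as to reach the identity for the full group $\GL(4)$, rather than merely its identity component, and formulating the localization cleanly given that $L_{0}$ is evaluated on $1$-jets rather than on point values alone.
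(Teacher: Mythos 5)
Your proof is correct and follows essentially the same route as the paper's: the change of variables formula brings both sides of the covariance identity over a common domain, localization over arbitrarily small domains yields the pointwise identity between integrands, and the three arguments of $L_{0}$ are read off from the $1$-jet of $\varphi^{*}\Psi$ with $\bA = T_{m}\varphi$. The two details you single out --- the orientation sign needed to reach all of $\GL(4)$ rather than only $\det \bA > 0$, and the realizability of arbitrary $(\bA, \bgamma, \vv, \btau)$ by linear/affine local data --- are exactly the points the paper's proof leaves implicit, and your handling of both is sound.
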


\begin{proof}
  Let $\varphi \colon U \to \widetilde{U}$ be a diffeomorphism between two open sets $U$ and $\widetilde{U}$ and set
  \begin{equation*}
    f(\yy) := L_{0}(A(\yy)^{\star}g_{\yy} A(\yy),\Psi(\yy), T_{\yy}\Psi A(\yy))
  \end{equation*}
  for $\yy \in \widetilde{U}$, where $A(\yy) = T_{\varphi^{-1}(\yy)}\varphi$. Then, $T_{m}\varphi = A(\varphi(m))$, for $m \in U$ and
  \begin{equation*}
    L_{0}((\varphi^{*}g)_{m}, (\varphi^{*}\Psi)(m), T_{m}(\varphi^{*}\Psi)) = f(\varphi(m)).
  \end{equation*}
  Therefore
  \begin{equation*}
    \mL_{U}(\varphi^{*} g, \varphi^{*}\Psi) = \int_{U} f(\varphi(m)) \varphi^{*} \vol_{g} = \int_{\widetilde{U}} f(\yy) \vol_{g},
  \end{equation*}
  by the change of variable formula, and the general covariance property leads to
  \begin{equation*}
    f(\yy) = L_{0}(A(\yy)^{\star}g_{\yy} A(\yy),\Psi(\yy), T_{\yy}\Psi A(\yy)) = L_{0}(g_{\yy},\Psi(\yy), T_{\yy}\Psi), \quad \forall \yy.
  \end{equation*}
  Hence, the Lagrangian density is subject to the following invariance
  \begin{equation*}
    L_{0}(\bA^{\star} \bgamma \bA, \vv, \btau \bA) = L_{0}(\bgamma, \vv, \btau),
    \quad
    \forall \bA \in \GL(4).
  \end{equation*}
\end{proof}

\begin{rem}
  Since the Lie derivative is the infinitesimal version of the pullback, meaning that
  \begin{equation*}
    \Lie_{X}\bT := [\partial_{s}\varphi(s)^{*}\bT]_{s=0}
  \end{equation*}
  for every tensor field $\bT$ and every path of (local) diffeomorphisms $\varphi(s)$ with
  \begin{equation*}
    \varphi(0)=\id, \quad \text{and} \quad [\partial_{s}\varphi(s)]_{s=0}=X,
  \end{equation*}
  there is also an almost\footnote{indeed equivalent to covariance by diffeomorphisms isotopic to the identity.} equivalent \emph{infinitesimal formulation of general covariance}~\cite{Noe1918}, which is used by several authors (such as in \cite{Wer2006}). For instance, in the present case, the general covariance of the matter Lagrangian $\mL^{\text{matter}}$
  \begin{equation*}
    \mL^{\text{matter}}_{U}(\varphi^{*} g, \varphi^{*}\Psi) = \mL^{\text{matter}}_{\widetilde{U}}(g, \Psi),
  \end{equation*}
  for every local diffeomorphism $\varphi \colon U \to \widetilde{U}$ leads to
  \begin{equation*}
    \frac{\delta\mL^{\text{matter}}_{U}}{\delta g} . \Lie_{X}g + \frac{\delta\mL^{\text{matter}}_{U}}{\delta \Psi}.\Lie_{X}\Psi = 0.
  \end{equation*}
  Therefore, its Lagrangian density must satisfy (see~\cite{Sop2008})
  \begin{equation*}
    \frac{\partial L_{0}}{\partial \bgamma}:(\Lie_{X}g)_{m} + \frac{\partial L_{0}}{\partial \vv}\cdot (\Lie_{X}\Psi)(m) + \frac{\partial L_{0}}{\partial \btau}:(T_{m}\Lie_{X}\Psi) = 0, \qquad \forall m.
  \end{equation*}
\end{rem}

The following result is essential for the formulation of Relativistic Hyperelasticity and exhibits the fundamental role played by the conformation. It must be compared to the fact that an elastic energy in Classical Continuum Mechanics, which is \emph{objective} (\textit{i.e.} satisfies the \emph{material frame indifference} principle~\cite{TN1965}) depends on the deformation $\varphi$ only through the right Cauchy--Green tensor $\bC =\varphi^{*}q$.

\begin{thm}[Souriau (1958)]\label{thm:LPsiH}
  Suppose that the Lagrangian
  \begin{equation*}
    \mL^{\text{matter}}(g, \Psi) = \int L_{0}(g_{m}, \Psi(m), T_{m}\Psi) \, \vol_{g}
  \end{equation*}
  is general covariant. Then, its Lagrangian density can be written as
  \begin{equation*}
    L_{0}(g, \Psi, T\Psi) = L(\Psi, \bH),
  \end{equation*}
  for some function $L$, where $\bH= (T\Psi)\,g^{-1} (T\Psi)^{\star}$ is the conformation.
\end{thm}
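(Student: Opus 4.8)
The plan is to exploit the pointwise invariance \eqref{eq:Lagrangian-density-covariance} established in the preceding lemma, namely $L_{0}(\bA^{\star}\bgamma\bA,\vv,\btau\bA)=L_{0}(\bgamma,\vv,\btau)$ for all $\bA\in\GL(4)$, and to show that the conformation $\bH=\btau\bgamma^{-1}\btau^{\star}$ is a \emph{complete} invariant of the $\GL(4)$-action on the pairs $(\bgamma,\btau)$, the variable $\vv$ being inert under the action. First I would record that $\bH$ is indeed invariant: since $(\bgamma,\btau)\mapsto(\bA^{\star}\bgamma\bA,\btau\bA)$ sends $\bgamma^{-1}\mapsto\bA^{-1}\bgamma^{-1}\bA^{-\star}$, one computes $\btau\bA\,\bA^{-1}\bgamma^{-1}\bA^{-\star}\,\bA^{\star}\btau^{\star}=\btau\bgamma^{-1}\btau^{\star}=\bH$. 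It then suffices to prove that $L_{0}(\bgamma,\vv,\btau)$ depends on $(\bgamma,\btau)$ only through $\bH$ on the relevant domain, where $\bgamma$ has signature $(-,+,+,+)$, where $\btau$ has rank $3$, and where $\ker\btau$ is $\bgamma$-timelike (so that $\bH$ is positive definite, as noted after \eqref{eq:conformation}).

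The core step is a reduction to normal form. Given such a pair $(\bgamma,\btau)$, I would choose an adapted basis $(\ee_{0},\ee_{1},\ee_{2},\ee_{3})$ of $\RR^{4}$ with $\ee_{0}$ a unit generator of $\ker\btau$ (so $\bgamma(\ee_{0},\ee_{0})=-1$) and $(\ee_{1},\ee_{2},\ee_{3})$ a $\bgamma$-orthonormal basis of the spacelike complement $\ee_{0}^{\perp}$. Letting $\bA\in\GL(4)$ be the matrix whose columns are these vectors, one gets $\bA^{\star}\bgamma\bA=\eta:=\operatorname{diag}(-1,1,1,1)$ and $\btau\bA=(0\mid M)$, where $M$ is the $3\times3$ matrix of $\btau|_{\ee_{0}^{\perp}}$, invertible since $\btau$ has rank $3$. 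Exactly as in the computation in the proof of Lemma~\ref{lem:h-H}, this yields $\bH=\btau\bgamma^{-1}\btau^{\star}=MM^{\star}$. Applying the invariance \eqref{eq:Lagrangian-density-covariance} to this $\bA$ then gives $L_{0}(\bgamma,\vv,\btau)=L_{0}(\eta,\vv,(0\mid M))$ with $MM^{\star}=\bH$.

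It remains to check that this value depends only on $(\vv,\bH)$ and not on the particular $M$; this well-definedness is the only real obstacle. If $M_{1}M_{1}^{\star}=M_{2}M_{2}^{\star}=\bH$ with $M_{1},M_{2}$ invertible, then $O:=M_{1}^{-1}M_{2}$ satisfies $OO^{\star}=I_{3}$, i.e. $O\in\OO(3)$, and the block matrix $\bB:=\left(\begin{smallmatrix}1&0\\0&O\end{smallmatrix}\right)$ satisfies both $\bB^{\star}\eta\bB=\eta$ and $(0\mid M_{1})\bB=(0\mid M_{2})$. Applying \eqref{eq:Lagrangian-density-covariance} once more with $\bA=\bB$ (which fixes $\eta$) gives $L_{0}(\eta,\vv,(0\mid M_{1}))=L_{0}(\eta,\vv,(0\mid M_{2}))$. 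Hence $L_{0}(\eta,\vv,(0\mid M))$ depends only on $\vv$ and $\bH=MM^{\star}$, so setting $L(\vv,\bH):=L_{0}(\eta,\vv,(0\mid M))$ for any $M$ with $MM^{\star}=\bH$ yields the claimed factorization $L_{0}(g,\Psi,T\Psi)=L(\Psi,\bH)$ at each point of $\mW$. The crux is precisely that the residual $\OO(3)$-ambiguity in $M$ is absorbed by the Lorentz stabilizer of $\eta$, which is what the block $\bB$ realizes and which relies on $\ker\btau$ being timelike so that $\bH\succ0$.
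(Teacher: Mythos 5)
Your proof is correct, and it follows the same overall strategy as the paper's proof of theorem~\ref{thm:LPsiH} --- use the pointwise invariance \eqref{eq:Lagrangian-density-covariance} to bring $(\bgamma,\btau)$ to a normal form modeled on $\eta$, then let the residual stabilizer of $\eta$ kill everything except $\bH$ --- but the execution is genuinely different and worth comparing. The paper proceeds in two stages: it first reduces $\bgamma$ to $\eta$ with an \emph{arbitrary} $\bA$, so that $\btau_{1}=\btau\bA$ remains an arbitrary rank-$3$ matrix, and then performs a polar-type change of variables $\btau_{1}\mapsto(R,\bH)$ with $R=V^{-1}\btau_{1}$, $V$ the positive square root of $\btau_{1}\overline{\btau_{1}}=\bH q$; it argues this map is a diffeomorphism and then finds a genuine Lorentz transformation $Q$ with $RQ=R_{0}=\begin{pmatrix}0 & I_{3}\end{pmatrix}$. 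You instead normalize $\bgamma$ and $\btau$ \emph{simultaneously}, building the basis from the timelike line $\ker\btau$ and its $\bgamma$-orthogonal (hence spacelike) complement, which puts $\btau$ directly in the form $\begin{pmatrix}0 & M\end{pmatrix}$ and shrinks the residual ambiguity from the full Lorentz group down to $\OO(3)$, absorbed by your block matrix $\bB$. Your route is more elementary --- pure linear algebra, no operator square roots, no diffeomorphism claim --- and it makes transparent exactly where the timelike-kernel hypothesis (equivalently $\bH>0$) enters, namely in the existence of the adapted basis and in the invertibility of $M$. What the paper's heavier change-of-variables argument buys is the smoothness of $L$ for free, since $L_{1}$ is obtained by composing $L_{0}$ with a diffeomorphism. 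Your definition $L(\vv,\bH):=L_{0}\bigl(\eta,\vv,\begin{pmatrix}0 & M\end{pmatrix}\bigr)$ is so far only shown to be well defined; to match the paper's conclusion (the density $L_{0}$ is smooth, and one wants $L$ smooth as well), add one line: choose $M=\bH^{1/2}$, the positive-definite square root, which depends smoothly on $\bH>0$, so that $L(\vv,\bH)=L_{0}\bigl(\eta,\vv,\begin{pmatrix}0 & \bH^{1/2}\end{pmatrix}\bigr)$ is manifestly smooth. This is a one-line repair, not a gap in the factorization argument itself.
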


The proof provided below is simpler and shorter that the original one given by Souriau in~\cite{Sou1958}. The reason for it is that, in this paper, we consider only perfect matter, in which case the conformation $\bH$ is positive definite at each point $m$ of the World tube.
This is not an hypothesis which is made in~\cite{Sou1958}.

\begin{proof}
  Consider a smooth Lagrangian density $L_{0}(\bgamma, \vv, \btau)$, where $\bgamma$ is a quadratic form of signature $(-,+,+,+)$ on $\RR^{4}$, $\vv \in \RR^{3}$ and $\btau\in \mathcal{L}(\RR^{4},\RR^{3})$ satisfies $\btau\overline{\btau} >0$. Suppose moreover that this Lagrangian density satisfies the following covariance property
  \begin{equation*}
    L_{0}(\bA^{\star} \bgamma \bA, \vv, \btau \bA) = L_{0}(\bgamma, \vv, \btau),  \qquad \forall \bA\in \GL(4).
  \end{equation*}

  First, we can find $\bA\in \GL(4)$ such that $\bA^{\star} \bgamma \bA=\eta$, where
  \begin{equation*}
    \eta =
    \begin{pmatrix}
      -1 & 0 \\
      0  & q
    \end{pmatrix}
  \end{equation*}
  is the canonical Lorentz inner product. Hence we get
  \begin{equation*}
    L_{0}(\bgamma, \vv, \btau) = L_{0}(\eta, \vv, \btau_{1}), \quad \text{with} \quad \btau_{1} = \btau \bA
    \quad \text{and} \quad \bA^{\star} \bgamma \bA = \eta.
  \end{equation*}

  Now, we introduce the following change of variables $\btau_{1} \mapsto (R, \bH)$, where
  \begin{equation*}
    \bH = \btau_{1}\eta^{-1} \btau_{1}^{\star} = \btau\bgamma^{-1}\btau^{\star}, \qquad R = V^{-1}\btau_{1},
  \end{equation*}
  and $V$ is the positive square root of the positive definite symmetric operator on $(\RR^{3},q)$
  \begin{equation*}
    \btau_{1}\overline{\btau_{1}} = \bH q =  \btau_{1}\eta^{-1}\btau_{1}^{\star}q
  \end{equation*}
  with $q$, the canonical Euclidean metric on $\RR^{3}$.

  We can check that $R\overline{R} = R \eta^{-1}R^{\star}q = I_{3}$ is a condition which defines a submanifold of the vector space of linear mappings $\mathcal{L}(\RR^{4},\RR^{3})$, and that $\btau_{1} \mapsto (R, \bH)$ is a diffeomorphism from the open set
  \begin{equation*}
    \set{\btau_{1} \in \mathcal{L}(\RR^{4},\RR^{3});\; \btau_{1}\overline{\btau_{1}} > 0}
  \end{equation*}
  onto the manifold
  \begin{equation*}
    \set{R \in \mathcal{L}(\RR^{4},\RR^{3});\; R\overline{R} = I_{3}} \times \set{\bH \in \Sym^{2}(\RR^{3});\; \bH >0}.
  \end{equation*}
  Hence, we can find a smooth function $L_{1}(\vv, R, \bH)$, such that
  \begin{equation*}
    L_{0}(\eta, \vv, \btau_{1}) = L_{1}(\vv, R, \bH),
  \end{equation*}
  with the property that
  \begin{equation*}
    L_{1}(\vv, R, \bH) = L_{0}(\eta, \vv, \btau_{1}) = L_{0}(\eta, \vv, \btau_{1}Q) = L_{1}(\vv, RQ, \bH),
  \end{equation*}
  for every Lorentz transformation $Q$. Next, we can find a Lorentz transformation $Q$ such that $RQ = R_{0}$ with
  \begin{equation*}
    R_{0} =
    \begin{pmatrix}
      0 & I_{3}
    \end{pmatrix},
  \end{equation*}
  because $R\overline{R} = R_{0}\overline{R_{0}} = I_{3}$. Therefore, we get finally
  \begin{equation*}
    L_{0}(\bgamma, \vv, \btau) = L_{0}(\eta, \vv, \btau_{1}) =  L_{1}(\vv, R, \bH) = L_{1}(\vv, R_{0}, \bH),
  \end{equation*}
  and $L_{1}(\vv, R_{0}, \bH)$ is a function $L(\vv, \bH)$, which depends only on $\vv$ and $\bH = \btau_{1}\eta^{-1} \btau_{1}^{\star} = \btau\bgamma^{-1}\btau^{\star}$.
\end{proof}

The following splitting of the Lagrangian density has been introduced by Souriau~\cite{Sou1958,Sou1964} and DeWitt~\cite{deW1962}:
\begin{equation}\label{eq:LSouriau}
  L(\Psi, \bH) = \rho_{r} c^{2}+E(\Psi, \bH)=\rho_{r} c^{2}+\rho_{r} e(\Psi, \bH),
\end{equation}
where $\rho_{r}$ is the rest mass density, expressed as
\begin{equation*}
  \rho_{r}=\rho_{\bgamma_{0}}(\Psi) \sqrt{\det\left[ \bH(\bgamma_{0} \circ \Psi)\right]},
\end{equation*}
by lemma~\ref{lem:PCM_GR}, provided a fixed metric $\bgamma_{0}$ has been given on the body $\body$ and $\rho_{\bgamma_{0}}=\mu/\vol_{\bgamma_{0}}$. The contribution $\rho_{r} c^{2}$ alone ($E=0$) allows for the modeling of perfect (non electromagnetic) dust. The function $E$ (resp. $e$) is the \emph{internal energy density} (resp. the \emph{specific  internal energy}). It is representative of perfect fluids when its dependency on $\bH$ is introduced only through the determinant $\det\left[ \bH(\bgamma_{0} \circ \Psi)\right]$. The additional dependency on $\Psi$ and $\bH$ through the energy density $E$ is more generally representative of Relativistic hyperelastic solids.

We conclude this section by emphasizing that the present formulation of Relativistic Hyperelasticity does not require the definition of a time function (which is indeed not a necessity in astrophysics) and the associated assumption of a foliation of the World tube $\mW$ by spacelike hypersurfaces. All we need is to endow the body $\body$ with a fixed metric $\bgamma_{0}$ as in~\cite{Sou1958,Sou1964,CQ1972}.

\section{The stress--energy tensor}
\label{sec:stress-energy-tensor}

The \emph{stress-energy tensor}, also called \emph{energy-momentum tensor} can be considered as a four-dimensional generalization of the stress tensor in Classical three-dimensional Continuum Mechanics. In General Relativity, it is the source of the curvature of the metric $g$ of the Universe. It is usually defined as the variational derivative of a Lagrangian with respect to the metric $g$ \cite{Hil1915,Hil1924,Noe1918,Kos2011,BGRS2016} and, for this reason, it is thus a symmetric contravariant second-order tensor field (or a \emph{tensor distribution} defined on symmetric second-order covariant tensor fields, in more general situations \cite{Sou1997a}).

In the present case, the Euler-Lagrange stationary equation $\delta \mL=0$ for the Lagrangian
\begin{equation*}
  \mL(g,\Psi)= \mH(g) + \mL^{\text{matter}}(g, \Psi),
\end{equation*}
leads in particular to the equation
\begin{equation*}
  \frac{\delta \mH}{\delta g} + \frac{\delta \mL^{\text{matter}}}{\delta g} = 0,
\end{equation*}
when only variations of the metric $g$ are considered. It recasts as the Einstein field equation
\begin{equation}\label{eq:Einstein}
  \Ein_{g}^{\sharp} + \Lambda g^{-1} = \kappa\bT,
\end{equation}
if $\Ein_{g}^{\sharp}=g^{-1}\Ein_{g} g^{-1}$ is the contravariant form of Einstein's tensor~\eqref{eq:Einstein-tensor}, and
\begin{equation*}
  \bT:=-2\frac{\delta \mL^{\text{matter}}}{\delta g}.
\end{equation*}
is the \emph{stress-energy tensor} (the source term in Einstein's equation), which is a symmetric contravariant
second-order tensor field on the Universe $\mM$.

\begin{rem}
  Because $\dive^{g}(\Ein_{g}^{\sharp} + \Lambda g^{-1})=0$ (see remark~\ref{rem:H-invariance}), the stress-energy tensor $\bT$ satisfies the conservation law
  \begin{equation*}
    \dive^{g}\bT=0.
  \end{equation*}
  As observed by Einstein himself~\cite{Ein1988}, ``$\dive \bT = 0$, that's mechanics''. Indeed, this equation generalizes in 4D (and non flat Universe) the three-dimensional equilibrium equations of Classical Continuum Mechanics. When a spacetime structure is adopted, the Cauchy stress tensor is related to the spacelike components of $\bT$ (see \autoref{sec:hyperelasticity-spacetime}).
\end{rem}

The following result provides a general expression for the stress-energy tensor of $\bT$ in the case of Relativistic Hyperelasticity (see also \cite{KM1992}).

\begin{thm}[Souriau, 1958]\label{thm:relativistic-hyperelasticity}
  Consider the general covariant matter Lagrangian
  \begin{equation*}
    \mL^{\text{matter}}(g, \Psi) = \int L\, \vol_{g},
    \qquad
    L=\rho_{r} c^{2}+E,
  \end{equation*}
  with
  \begin{equation*}
    \rho_{r} = \rho_{\bgamma_{0}}(\Psi) \sqrt{\det\left[ \bH(\bgamma_{0} \circ \Psi)\right]} \quad \text{and} \quad E = E(\Psi, \bH),
  \end{equation*}
  and where $\bgamma_{0}$ is a fixed metric on the body $\body$. Then, its stress-energy tensor has the following expression
  \begin{equation}\label{eq:def-S}
    \bT = -2\frac{\delta \mL^{\text{matter}}}{\delta g} = \rho_{r} c^{2} \bU \otimes \bU - \bS,
  \end{equation}
  where
  \begin{equation*}
    \bS := E\,g^{-1} -2 g^{-1} (T\Psi)^{\star}\frac{\partial E}{\partial \bH} (T\Psi) g^{-1}.
  \end{equation*}
  is the (four-dimensional) \emph{relativistic stress tensor} on $\mW$. Moreover, we have
  \begin{equation*}
    \bS \cdot \bU^{\flat}=E\, \bU,
    \quad
    \text{and}
    \quad
    \bT \cdot \bU^{\flat}=-L \bU.
  \end{equation*}
\end{thm}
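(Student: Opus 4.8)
The plan is to compute the variational derivative $\delta\mL^{\text{matter}}/\delta g$ directly, exploiting that $g$ enters the integrand $L\,\vol_{g}=(\rho_{r} c^{2}+E)\vol_{g}$ in only three places: through the volume form $\vol_{g}$ and through the conformation $\bH=(T\Psi)g^{-1}(T\Psi)^{\star}$, on which both $\rho_{r}$ (via lemma~\ref{lem:PCM_GR}) and $E$ depend. I would first record the elementary variation formulas $\delta\vol_{g}=\tfrac{1}{2}(g^{-1}:\delta g)\,\vol_{g}$ and $\delta g^{-1}=-g^{-1}(\delta g)g^{-1}$, the latter yielding $\delta\bH=-(T\Psi)g^{-1}(\delta g)g^{-1}(T\Psi)^{\star}$, and then treat the two scalar contributions $\rho_{r} c^{2}$ and $E$ in turn.

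For the rest-mass term, Jacobi's formula applied to $\rho_{r}=\rho_{\bgamma_{0}}(\Psi)\sqrt{\det[\bH(\bgamma_{0}\circ\Psi)]}$ gives the clean identity $\delta\rho_{r}=\tfrac{1}{2}\rho_{r}\,\tr(\bH^{-1}\delta\bH)$, the factors involving $\bgamma_{0}$ dropping out by cyclicity of the trace. Substituting $\delta\bH$ and permuting cyclically under the trace, I would invoke lemma~\ref{lem:h-H}, namely $(T\Psi)^{\star}\bH^{-1}(T\Psi)=h$, to obtain $\tr(\bH^{-1}\delta\bH)=-h^{\sharp}:\delta g$; the orthogonal decomposition~\eqref{eq:def-h}, $h^{\sharp}=g^{-1}+\bU\otimes\bU$, then gives $\delta\rho_{r}=-\tfrac{1}{2}\rho_{r}\big((g^{-1}+\bU\otimes\bU):\delta g\big)$. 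For the internal energy, the chain rule yields $\delta E=\tfrac{\partial E}{\partial\bH}:\delta\bH=-\big[g^{-1}(T\Psi)^{\star}\tfrac{\partial E}{\partial\bH}(T\Psi)g^{-1}\big]:\delta g$.

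Collecting the coefficient of $\delta g$ in $\delta\mL^{\text{matter}}=\int[\,c^{2}\delta\rho_{r}+\delta E+(\rho_{r} c^{2}+E)\tfrac{1}{2}(g^{-1}:\delta g)]\,\vol_{g}$, the two terms proportional to $\rho_{r} c^{2} g^{-1}$, one coming from $c^{2}\delta\rho_{r}$ and one from the variation of the volume form, cancel exactly. This cancellation, which isolates the $\bU\otimes\bU$ part of the dust term, is the crux of the computation and the step I expect to require the most careful bookkeeping. What survives is $\tfrac{\delta\mL^{\text{matter}}}{\delta g}=-\tfrac{1}{2}\rho_{r} c^{2}\,\bU\otimes\bU+\tfrac{1}{2}Eg^{-1}-g^{-1}(T\Psi)^{\star}\tfrac{\partial E}{\partial\bH}(T\Psi)g^{-1}$, so that $\bT=-2\,\delta\mL^{\text{matter}}/\delta g=\rho_{r} c^{2}\,\bU\otimes\bU-\bS$ with $\bS$ as stated.

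Finally, the two contraction identities reduce to pointwise linear algebra. Since $g^{-1}\bU^{\flat}=\bU$ and $(T\Psi)\bU=0$ by~\eqref{eq:TPsi-P-null} (as $\bU$ is collinear to $\bP$), the second term of $\bS$ annihilates $\bU^{\flat}$, whence $\bS\cdot\bU^{\flat}=E\,g^{-1}\bU^{\flat}=E\,\bU$. Then, using $(\bU\otimes\bU)\cdot\bU^{\flat}=\bU\,g(\bU,\bU)=-\bU$ from~\eqref{eq:def-U}, I get $\bT\cdot\bU^{\flat}=\rho_{r} c^{2}(\bU\otimes\bU)\cdot\bU^{\flat}-\bS\cdot\bU^{\flat}=-(\rho_{r} c^{2}+E)\bU=-L\,\bU$.
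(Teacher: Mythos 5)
Your proof is correct and follows essentially the same route as the paper: the same variations $\delta\vol_{g}$ and $\delta_{g}\bH$, the identity $\partial\rho_{r}/\partial\bH=\tfrac{1}{2}\rho_{r}\bH^{-1}$ (your Jacobi-formula step), lemma~\ref{lem:h-H} combined with the decomposition~\eqref{eq:def-h}, and the same pointwise argument ($T\Psi.\bU=0$, $\norm{\bU}_{g}^{2}=-1$) for the contraction identities. The only difference is bookkeeping: the paper keeps $L$ unsplit, first derives $\bT=2g^{-1}(T\Psi)^{\star}\tfrac{\partial L}{\partial\bH}(T\Psi)g^{-1}-L\,g^{-1}$ and then substitutes $L=\rho_{r}c^{2}+E$, so that what you describe as the cancellation of the $\rho_{r}c^{2}g^{-1}$ terms appears there multiplicatively as the identity $g^{-1}hg^{-1}-g^{-1}=\bU\otimes\bU$.
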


\begin{rem}[Bennoun, 1965]\label{rem:T}
  Since $ \bS \cdot \bU^{\flat}\neq 0$, the decomposition~\eqref{eq:def-S} is not an orthogonal decomposition relative to $\bU$ (see \autoref{sec:Orth-Decomp-2nd}). Writing $E = \rho_{r} e$, with $e = e(\Psi, \bH)$, the specific internal energy, the stress-energy tensor naturally recasts, using its orthogonal decomposition relative to $\bU$, as
  \begin{equation}\label{eq:def-Sigma}
    \bT = L\,\bU \otimes \bU -\bSigma,
  \end{equation}
  where its spatial part
  \begin{equation}\label{eq:dedH}
    \bSigma:=- 2\rho_{r}\, g^{-1} (T\Psi)^{\star}\frac{\partial  e}{\partial \bH} (T\Psi) g^{-1},
  \end{equation}
  is such that
  \begin{equation*}
    \bSigma=\bS + E\, \bU \otimes \bU
    \quad \text{and} \quad \bSigma \cdot \bU^{\flat}=0,
  \end{equation*}
  can also be interpreted as a (four-dimensional) relativistic stress tensor.
\end{rem}

\begin{proof}
  Consider the variation $\delta_{g}\mL^{\text{matter}}$ of the Lagrangian $\mL^{\text{matter}}$ with respect to the metric $g$. Then, we have
  \begin{equation*}
    \delta_{g}\mL^{\text{matter}} = \int (\delta_{g}L)\, \vol_{g} + L\, \delta_{g}\vol_{g},
  \end{equation*}
  with
  \begin{equation*}
    \delta_{g}L = \tr \left(\frac{\partial L}{\partial \bH} \delta_{g}\bH\right) \quad \text{and} \quad \delta_{g} \vol_{g} = \frac{1}{2} \tr(g^{-1}\delta g) \vol_{g}.
  \end{equation*}
  But
  \begin{equation*}
    \delta_{g}\bH = - (T\Psi) g^{-1}\delta g\, g^{-1}(T\Psi)^{\star},
  \end{equation*}
  and hence
  \begin{equation*}
    \delta_{g}L  = - \tr \left(\frac{\partial L}{\partial \bH} (T\Psi) g^{-1}\delta g\, g^{-1}(T\Psi)^{\star}\right)
    = - \tr \left(g^{-1}(T\Psi)^{\star}\frac{\partial L}{\partial \bH} (T\Psi) g^{-1}\delta g \right).
  \end{equation*}
  We get thus
  \begin{equation*}
    \delta_{g}\mL^{\text{matter}} = - \int \tr \left[ \left( g^{-1}(T\Psi)^{\star}\frac{\partial L}{\partial \bH} (T\Psi) g^{-1} - \frac{1}{2} L g^{-1} \right) \delta g \right]\vol_{g},
  \end{equation*}
  and therefore
  \begin{equation*}
    \bT = 2 g^{-1} (T\Psi)^{\star}\frac{\partial L}{\partial \bH} T\Psi g^{-1}-  L\,g^{-1}.
  \end{equation*}
  Now, we have
  \begin{equation*}
    \frac{\partial \rho_{r}}{\partial \bH} = \frac{1}{2} \rho_{r} \; \bH^{-1},
  \end{equation*}
  and thus
  \begin{align*}
    \bT & = \rho_{r}c^{2}  \, g^{-1} (T\Psi)^{\star}\bH^{-1} (T\Psi) \, g^{-1}
    +2 g^{-1} (T\Psi)^{\star}\frac{\partial E}{\partial \bH} (T\Psi) g^{-1}
    -  \rho_{r} c^{2} \,g^{-1}-  E\,g^{-1}
    \\
        & = \rho_{r} c^{2}\left[g^{-1}h g^{-1}-   \,g^{-1}\right]
    +2 g^{-1} (T\Psi)^{\star}\frac{\partial E}{\partial \bH} (T\Psi) g^{-1}
    -  E\,g^{-1}
    \\
        & =\rho_{r} c^{2} \bU \otimes \bU+
    2 g^{-1} (T\Psi)^{\star}\frac{\partial E}{\partial \bH} (T\Psi) g^{-1}
    -  E\,g^{-1},
  \end{align*}
  where the second equality is gained by lemma~\ref{lem:h-H}, and the third one by~\eqref{eq:def-h}. Finally, since $T\Psi.\bU=0$ and $\bU$ is a unit timelike vector field, we get $ \bS \cdot \bU^{\flat}=E\, \bU$, and $\bT \cdot \bU^{\flat}=-L \bU$.
\end{proof}

\begin{exam}[Relativistic perfect fluid]\label{rem:RF}
  The stress-energy tensor of a Relativistic perfect fluid,
  \begin{equation*}
    \bT = \left(L  + P \right)\bU \otimes \bU + P\, g^{-1},
    \qquad
    L=\rho_{r} c^{2}+ E,
  \end{equation*}
  corresponds to an internal energy density of the form $E= \rho_{r}e(\rho_{r})$, where $P = \rho_{r}^{2}  e'(\rho_{r})$ is the pressure. Indeed, in that case, we have
  \begin{equation*}
    \frac{\partial e}{\partial \bH} = e^{\prime}(\rho_{r})\frac{\partial \rho_{r}}{\partial \bH} \quad \text{with} \quad  \frac{\partial \rho_{r}}{\partial \bH} = \frac{1}{2} \rho_{r} \; \bH^{-1},
  \end{equation*}
  and thus, by lemma~\ref{lem:h-H}, we get
  \begin{equation*}
    (T\Psi)^{\star}\frac{\partial  e}{\partial \bH} (T\Psi) = \frac{1}{2}\rho_{r}e^{\prime}(\rho_{r}) (T\Psi)^{\star}\bH^{-1}(T\Psi) = \frac{1}{2}\rho_{r}e^{\prime}(\rho_{r}) h.
  \end{equation*}
  Therefore
  \begin{equation*}
    \bSigma = - 2\rho_{r}\, g^{-1} (T\Psi)^{\star}\frac{\partial  e}{\partial \bH} (T\Psi) g^{-1} = - \rho_{r}^{2}  e'(\rho_{r})\, g^{-1} h g^{-1} =  - P (g^{-1} + \bU\otimes\bU),
  \end{equation*}
  where we have set $P := \rho_{r}^{2}  e'(\rho_{r})$, and we get
  \begin{equation*}
    \bS = \bSigma - E\,\bU \otimes \bU = -(E + P)\bU\otimes\bU - P \,g^{-1}.
  \end{equation*}
  The corresponding stress--energy tensor is thus given by
  \begin{equation*}
    \bT = \rho_{r} c^{2} \bU \otimes \bU - \bS = L \bU \otimes \bU - \bSigma = (\rho_{r} c^{2} + E + P)\bU\otimes\bU + P \,g^{-1}.
  \end{equation*}
\end{exam}

Even if the full theory is four-dimensional, the orthogonal decomposition~\eqref{eq:def-Sigma} of $\bT$ relative to $\bU$, and the definition~\eqref{eq:dedH} (\emph{i.e.}, the Relativistic Hyperelasticity law) naturally introduce a three-dimensional symmetric stress tensor, either \emph{covariant},
\begin{equation}\label{eq:def-s-3D}
  \bs:=- 2\frac{\partial  e}{\partial \bH} ,
\end{equation}
or, \emph{contravariant},
\begin{equation*}
  \bH\,\bs\,\bH=2\frac{\partial  e\;\,}{\partial \bH^{-1}},
\end{equation*}
since the conformation is invertible (and contravariant). The stress tensors $\bs$ and $\bH\,\bs\,\bH$ are generalizations of the \emph{second Piola--Kirchhoff stress tensor} (expressed on a reference configuration $\Omega_{0}$ of Classical Continuum Mechanics) or more precisely here of the \emph{Rougée stress tensor} \cite{Rou1991a,Rou2006} (defined on the body $\body$, see~\autoref{sec:stress-on-the-body}).
These constitutive equations are the three-dimensional Relativistic Hyperelasticity laws. They do not depend on the further assumption of a foliation of the World tube $\mW$, nor on the consideration of a spacetime.

The underlying question~\cite{EBT2006,Bro2021} is then how to properly import in General Relativity existing Classical Continuum Mechanics constitutive laws formulated on the body $\body$ \cite{Rou2006,KD2021} (or a reference configuration $\Omega_{0}$). Indeed, many three-dimensional expressions of energy densities
\begin{equation}\label{eq:w3D}
  w=w(\widehat{\bgamma}), \quad \widehat{\bgamma}:=\bgamma_{0}^{-1}\bgamma, \quad \text{on $\body$}
  \qquad \Big( \emph{i.e.}, \;w=w(\widehat\bC), \; \widehat\bC:=q^{-1}\bC, \; \text{when $\body \equiv \Omega_{0}$} \Big),
\end{equation}
are available in the Classical Continuum Mechanics literature \cite{Moo1940,Har1966,Ogd1972,AB1993,Sto2009,GMDC2011}.
They are local function of the mixed tensor $\widehat{\bgamma} = \bgamma_{0}^{-1}\bgamma$, defined using the reference metric $\bgamma_{0}$ on $\body$ (equivalently, of the mixed right Cauchy--Green tensor $\widehat\bC$ on $\Omega_{0}$), meaning that
\begin{equation*}
  w(\bX)=w\left(\widehat\bgamma(\bX) \right), \quad \bX \in \body.
\end{equation*}

The use of such energy densities is then straightforward in the Relativistic framework, if one sets (using definition~\eqref{eq:H0})
\begin{equation}\label{eq:e-from-w}
  e(\Psi, \bH)=w\left((\bgamma_{0}^{-1}\circ \Psi\right)\bH^{-1})=w\left(\bH_{0}\,\bH^{-1}\right).
\end{equation}
Using~\eqref{eq:e-from-w}, we get
\begin{equation}\label{eq:s-3D}
  \bs = -2\frac{\partial e}{\partial \bH}=2\,\bH^{-1} \bH_{0}\frac{\partial  w}{\partial \widehat{\bgamma}}\bH^{-1},
  \qquad
  \bH\, \bs \, \bH=2\, \bH_{0}\frac{\partial  w}{\partial \widehat{\bgamma}},
\end{equation}
so that the stress-energy tensor $\bT$ and the four-dimensional stress $\bSigma$ recast finally  as
\begin{equation}\label{eqTs-3D}
  \bT= L \bU \otimes \bU-\bSigma, \qquad \bSigma=\rho_{r}\, g^{-1} (T\Psi)^{\star}\, \bs\, (T\Psi) g^{-1}.
\end{equation}
We refer to \autoref{sec:stress-on-the-body}  for the full link ---which needs the consideration of a spacetime--- with stresses on the body $\body$.

\section{Universe's foliation by spacelike hypersurfaces}
\label{sec:Universe-foliation}

There is no Mechanics without the proper definition of \emph{time} and \emph{space}. To introduce these concepts in General Relativity, one usually starts by introducing a smooth submersion (a \emph{time function}) $\hat{t}$ on the Universe $\mM$ with a timelike gradient everywhere. Then, spacelike hypersurfaces are defined as
\begin{equation}\label{eq:Omegatini}
  \espace_{t} := \set{m \in \mM;\; \hat{t}(m) = t},
\end{equation}
and one expects the Universe to be foliated by these hypersurfaces~\cite{Lic1955,Gou2012}. The problem is that, in general, a global foliation of the Universe might not exist (see~\cite[Chapter 4]{Gou2012}). Anyway, if such a foliation exists or is given, one will say that the Universe has been endowed with a \emph{spacetime structure} or a \emph{$(3+1)$-structure} as defined in~\cite{Dar1927,Lic1939,FBCB1948,Lic1952,Fou1956,ADM1962,Yor1979,Gou2012}.

Fortunately, for our concerns, we do not have to address this problem \emph{globally}. In the present paper, we will simply admit that such a foliation exists on a \emph{local chart} which contains the body World tube $\mW$, or a part of it. Indeed, in Mechanical Engineering, the spacetime domain occupied by a continuous medium/a structure, embedded for example in a laboratory, a building, a city, a country, a domain of space \ldots can be considered as included into such a local chart.

Moreover, since the presence of the studied matter in the laboratory does not affect (much) the Universe metric $g$ compared to the one of Earth (\emph{passive matter} assumption), we can choose, among the numerous spacetimes encountered in General Relativity and available in~\cite{HE1973,MTW1973,MG2009}, those describing solutions of Einstein equations in the vacuum. These spacetimes are usually described using a coordinate system $(x^{\mu})$, for which the time function is chosen as
\begin{equation*}
  \hat{t}=\frac{x^{0}}{c},
\end{equation*}
and we then define $\Omega_{t}$ as the intersection of the World tube $\mW$ with the spacelike hypersurface $\espace_{t}$,
\begin{equation*}
  \Omega_{t} := \mW \cap \espace_{t} = \mW \cap \set{x^{0}=ct}.
\end{equation*}
The three-dimensional hypersurfaces $\Omega_{t}$ of the Universe play the same role as the configurations, parameterized by time $t$, of Classical Continuum Mechanics~\cite{TN1965,Nol1972,Nol1974,MH1994,Ber2012,Ste2015,For2022}, with the difference that the later are embedded in the three-dimensional Euclidean space $\espace$, not in the four-dimensional Universe $\mM$. This construction is illustrated in \autoref{fig:foliationOmegat}, where a second time, $t_{0}$, and the associated hypersurface $\Omega_{t_{0}}$ (a possible reference configuration) are represented.

\begin{figure}[ht]
  \centering
  \includegraphics[width=12cm]{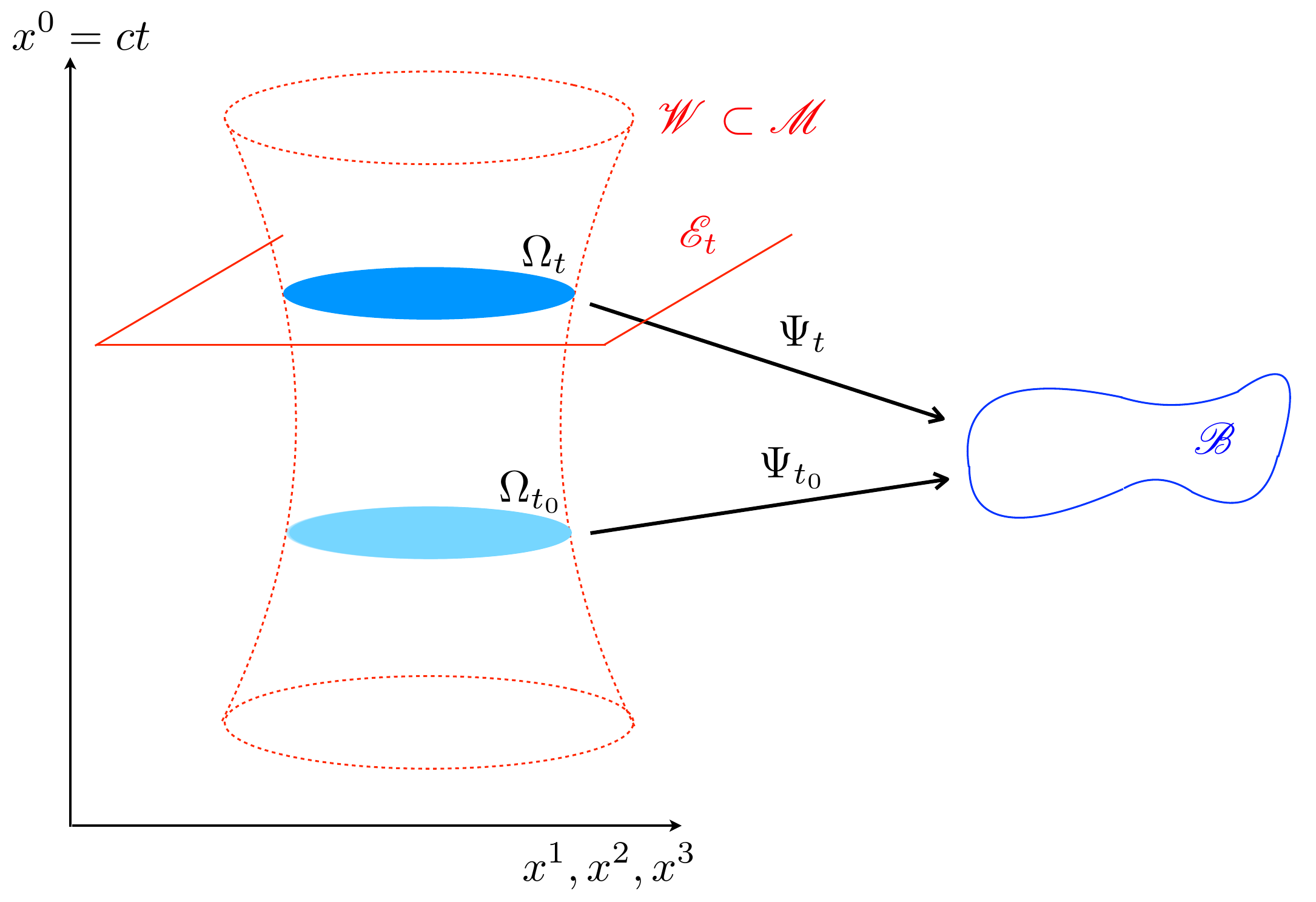}
  \caption{The foliation of the World tube $\mW$ by spacelike hypersurfaces $\Omega_{t}$.}
  \label{fig:foliationOmegat}
\end{figure}

The canonical embedding of these submanifolds into the Universe $\mM$ is noted $j_{t}\colon \Omega_{t} \to \mM$ (rather than $\theta$ as in \cite[Chapter 7]{HE1973}). Then, for each $t$, the pullback of the matter field $\Psi$ by~$j_{t}$,
\begin{equation*}
  \Psi_{t} := j_{t}^{*} \Psi = \Psi \circ j_{t} \colon \Omega_{t} \to V,
\end{equation*}
is just the restriction of $\Psi$ to $\Omega_{t}$, and we have $\Psi_{t}(\Omega_{t})=\body$.

\begin{rem}\label{rem:TPsit-iso}
  We have not made, so far, the assumption that $\Psi_{t}$ is a diffeomorphism. However, the tangent map
  \begin{equation*}
    T_{\xx} \Psi_{t} \colon T_{\xx} \Omega_{t} \to T_{\Psi_{t}(\xx)}\body
  \end{equation*}
  is an isomorphism for each $\xx \in \Omega_{t}$ and each $t$, since we have assumed that $\Psi$ is a submersion. We shall denote the inverse mapping of $T\Psi_{t}$ by $\bF(t)$ and omit, when not necessary, the explicit dependence on time and write simply $\bF = (T\Psi_{t})^{-1}$. If we make, furthermore, \emph{the stronger hypothesis} that $\Psi_{t}$ is a diffeomorphism we can set
  \begin{equation*}
    \pp(t)= \Psi_{t}^{-1}\colon \body \to \Omega_{t},
  \end{equation*}
  and then $\bF=T\pp$.
\end{rem}

The unit normal to the spacelike hypersurfaces $\Omega_{t}$, proportional to the gradient of the time function $\hat{t}$, is denoted by the quadrivector $\bN$. We have two opposite choices to define such a unit vector and we define $\bN$ as~\cite{Gou2012}
\begin{equation}\label{eq:defN}
  \bN := -\frac{\grad\, \hat{t}}{\sqrt{-\norm{\grad\, \hat{t}}^{2}}},
\end{equation}
where the gradient and the norm are relative to the metric $g$. The minus sign is chosen so that the quadrivector $\bN$ is \emph{future-oriented}, meaning that the value of $\hat{t}$ increases along the flow curves of $\bN$.
Note that, at each point $m \in \mM$ where the spacetime structure is defined, we have the orthogonal decomposition
\begin{equation*}
  T_{m}\mM = \langle \bN(m) \rangle \oplus \bN(m)^{\bot},
\end{equation*}
where, for each $m\in \Omega_{t}$, the orthogonal subspace $\bN(m)^{\bot} = T_{m}\Omega_{t}$ coincides with the tangent space at $m$ to the spacelike hypersurface $\Omega_{t}$.

An important special case, and the only one used in this paper for our description of Relativistic Continuum Mechanics of solids, is the one of a \emph{static spacetime}. Such a spacetime is induced by a \emph{static metric}, \emph{i.e.}, a metric $g$ for which there exists a timelike Killing vector field $X$ (\textit{i.e.} $\Lie_{X}g=0$), which is moreover the gradient of a time function $\hat{t}$. There exists then a coordinate system $(x^{\mu})$, with $x^{0}=ct$ the time coordinate, for which
\begin{equation*}
  \frac{\partial g_{\mu \nu}}{\partial x^{0}} = 0 \quad \text{and} \quad g_{0i}=0.
\end{equation*}
In that case, we get
\begin{equation*}
  \grad\, \hat{t} = \frac{1}{c} g^{00}\partial_{x^{0}},
\end{equation*}
and the unit normal $\bN$ is written as
\begin{equation*}
  \bN = \sqrt{-g^{00}}\partial_{x^{0}}.
\end{equation*}
Examples of static spacetimes are the Minkowski and the Schwarzschild~\cite{Sch1916} spacetimes.

\section{Matter field in a spacetime -- Generalized Lorentz factor}
\label{sec:Matter-field-spacetime}

Perfect matter in the Universe $\mM$ is represented by a matter field $\Psi$. This field generates a timelike quadrivector $\bP$ on the World tube $\mW$, as introduced in \autoref{sec:matter-fields}, and a unit timelike quadrivector $\bU=\bP/\rho_{r}$. Therefore, if a spacetime structure is introduced on the Universe as it has been explained in \autoref{sec:Universe-foliation}, we get a second unit timelike quadrivector $\bN$, normal to the hypersurfaces $\Omega_{t}$, and in general not collinear to $\bU$. By changing the sign of the time function $\hat{t}$ if necessary, we can assume, anyway, that both $\bU$ and $\bN$ define the same time orientation. This is characterized by the condition
\begin{equation*}
  \langle \bU, \bN \rangle_{g} < 0.
\end{equation*}
Thus, the quadrivector $\bU$ can be written uniquely using the orthogonal decomposition
\begin{equation*}
  \bU = \bU^{N} + \bU^{\top},
\end{equation*}
where $\bU^{N} = -\langle \bU, \bN \rangle_{g}\,\bN$ is the normal component of $\bU$ and $\bU^{\top}$ is the tangential component to $\Omega_{t}$. Introducing the function (see for instance~\cite{Gou2006})
\begin{equation}\label{eq:def-gamma}
  \gamma: = - \langle \bU, \bN\rangle_{g},
\end{equation}
one can write thus,
\begin{equation}\label{eq:U-orthogonal-decomposition}
  \bU^{N} = \gamma \bN, \quad \text{and} \quad \bU^{\top} = \frac{\gamma}{c} \uu,
\end{equation}
where $c$ is the light speed, and $\uu \in \bN^{\bot} = T\Omega_{t}$.

\begin{rem}
  Since we deal with a foliation by hypersurfaces $\Omega_{t}$, rather than just one hypersurface, the tangential component $\bU^{\top}$ of a vector field $\bU$ defined on $\mM$ (or an open subset of $\mM$) can be simultaneously interpreted as a vector field defined on $\mM$ but tangential to $\Omega_{t}$ at each point $m$, or as a (time-dependent) vector field on $\Omega_{t}$ (when restricted to $\Omega_{t}$). In our notations, we do not distinguish between these two interpretations.
\end{rem}

The orthogonal decomposition $\gamma \bN + \frac{\gamma}{c} \uu$ of $\bU$ and the relation $T\Psi. \bU=0$ deduced from~\eqref{eq:TPsi-P-null} allow to express the three-dimensional velocity $\uu$ on $\Omega_{t}$, as
\begin{equation}\label{eq:def-u}
  \uu = -c\, \bF\, T\Psi. \bN,
  \qquad
  \bF^{-1} := T\Psi_{t},
\end{equation}
where $\bF\colon T\body \to T \Omega_{t}$ is defined as the inverse of $T\Psi_{t}$, the restriction of $T\Psi$ to $T\Omega_{t}$, which is an invertible linear mapping (remark~\ref{rem:TPsit-iso}). In \autoref{sec:Gallilean}, the expression~\eqref{eq:def-u} will allow us to interpret the Galilean limit of $\uu$ as the three-dimensional Eulerian velocity on $\Omega_{t}$.

Using the fact that $\norm{\bU}^{2}_{g}=-1$, one gets furthermore that
\begin{equation*}
  \gamma = \frac{1}{\sqrt{1-\displaystyle\frac{\norm{\uu}_{g}^{2}}{c^{2}}}},
\end{equation*}
and $\gamma\geq 1$ since $\uu$ is spacelike. This function plays a fundamental role in General Relativity and its notation is not accidental. In the special case of the Minkowski spacetime, where $g=\eta$ is the Minkowski metric, one recovers the traditional Lorentz factor
\begin{equation*}
  \gamma = \frac{1}{\sqrt{1-\frac{\uu^{2}}{c^{2}}}},
\end{equation*}
where $\uu^{2}:=\norm{\uu}_{q}^{2}$ is the square Euclidean norm of $\uu$. For this reason, we shall call $\gamma = - \langle \bU, \bN\rangle_{g}$ the \emph{generalized Lorentz factor}.

\begin{rem}[Rest frame and observers]
  The concept of \emph{rest frame} is well-defined for a particle in Special Relativity. Its definition for distributed matter in general Relativity is much less clear. In this paper, we will adopt the following definition. Given a matter field $\Psi$, a \emph{rest frame} will be defined as a spacetime in which $\bU = \bN$, \emph{i.e.}, as a spacetime in which the generalized Lorentz factor is $\gamma=1$. For such a spacetime, we will get of course $\uu = 0$ and the particles can be considered at rest in it. The corresponding time coordinate will thus be interpreted as the \emph{proper time} $\tau$. More generally and heuristically, we can interpret $\bU$ as ``inducing a splitting of the tangent bundle $T\mW$ for matter'' and $\bN$ as ``inducing an integrable splitting or (3+1) spacetime for an observer''. The Lorentz factor $\gamma = -\langle \bU, \bN\rangle_{g}$ is then the ``angle'' between these two timelike directions.
\end{rem}

Finally, the normal component of the current of matter $\bP=\rho_{r} \bU$ (definition~\eqref{eq:def-U}) is then simply
\begin{equation*}
  \bP^{N}=\gamma \rho_{r}\, \bN= \rho\, \bN,
\end{equation*}
where $\bN$ is the unit timelike normal to the hypersurfaces $\Omega_{t}$.
The function
\begin{equation}\label{eq:rho}
  \rho:= \gamma \rho_{r},
\end{equation}
defined on the World tube $\mW$, is interpreted as the \emph{relativistic mass density}. A geometric interpretation of the restriction of $\rho$ to $\Omega_{t}$ is provided in \autoref{sec:3D-metrics}.

\section{Relativistic stress tensors and constitutive laws in a spacetime}
\label{sec:hyperelasticity-spacetime}

We assume here that the World tube $\mW$ is foliated by three-dimensional hypersurfaces $\Omega_{t}$, with unit timelike normal $\bN$. Then, it is possible to use the orthogonal decomposition of each tangent space relative to $\bN$ to split each tensor field accordingly. These splittings are referred to as \emph{(3+1)-decompositions} in the General Relativity literature~\cite{ADM1962,Yor1979,Gou2012}. Explicit formulas for second-order tensors are provided in \autoref{sec:Orth-Decomp-2nd}. We follow here the calculations of Souriau~\cite{Sou1958,Sou1964} for the flat Minkowski spacetime and extend them to any spacetime, thanks to this (3+1)-decomposition. These calculations generalize the ones given for relativistic fluids in~\cite{Gou2012} to relativistic solids. In particular, the orthogonal decomposition of the stress-energy tensor allows us to introduce relativistic generalizations of the \emph{Cauchy stress tensor} as 3D tensors and to recast the 4D Relativistic Hyperelasticity law (theorem~\ref{thm:relativistic-hyperelasticity}) as a \emph{three-dimensional constitutive equation}, relating these 3D stress tensors to the conformation $\bH$.

The orthogonal decompositions of $g$ and $g^{-1}$, relative to the unit timelike vector $\bN$ (instead of $\bU$ as in \autoref{sec:Conformation-strains}) are
\begin{equation}\label{eq:g3D}
  g = \mg - \bN^{\flat} \otimes \bN^{\flat},
  \qquad
  g^{-1} = (\mg)^{\sharp} - \bN \otimes \bN,
  \qquad
  \mg \bN = 0,
\end{equation}
where the degenerate quadratic form $\mg$ and $(\mg)^{\sharp}=g^{-1} \mg g^{-1}$ are of signature $(0, +, +, +)$, by lemma~\ref{lem:h-signature}. This decomposition allows, in particular, to recast the conformation as
\begin{equation*}
  \bH = T\Psi \, g^{-1} \, (T\Psi)^{\star} = T\Psi ((\mg)^{\sharp} - \bN \otimes \bN) (T\Psi)^{\star},
\end{equation*}
with
\begin{equation*}
  T\Psi (\bN \otimes \bN) (T\Psi)^{\star} = T\Psi \left(\frac{1}{\gamma} \bU-\frac{1}{c} \uu \right) \otimes \left(\frac{1}{\gamma} \bU-\frac{1}{c} \uu\right) (T\Psi)^{\star} = \bF^{-1} \left(\frac{1}{c^{2}} \uu \otimes \uu\right) \bF^{-\star},
\end{equation*}
since $T\Psi. \bU=0$, and thus
\begin{equation}\label{eq:conformation-in-spacetime}
  \bH = \bF^{-1}\left( (\mg)^{\sharp} -\frac{1}{c^{2}} \uu \otimes \uu\right) \bF^{-\star}.
\end{equation}

When applied to the stress-energy tensor $\bT$, the orthogonal decomposition~\eqref{eq:T-orthogonal-decomposition} leads to
\begin{equation}\label{eq:stress-energy-tensor-decomposition}
  \bT = E_{\mathrm{tot}} \, \bN \otimes \bN + \frac{1}{c} \left(\bN \otimes \bp + \bp \otimes \bN \right) + \bt,
\end{equation}
and allows to define the physical components of $\bT$ in the considered spacetime:
\begin{itemize}
  \item $E_{\mathrm{tot}}$ is the \emph{total energy density},
  \item $\bp$ is the \emph{momentum density vector field},
  \item and the spatial part $\bt$ of $\bT$ is related to the \emph{stress field}.
\end{itemize}
The question asked by Souriau is then: How are these quantities related to a three-dimensional relativistic generalization of the \emph{Cauchy stress tensor} $\bsigma$? As shown below, the answer depends on the choice of the decomposition  of the stress-energy tensor (see theorem~\ref{thm:relativistic-hyperelasticity} and remark~\ref{rem:T}). Indeed, we have seen  that there are two possible splittings of it:
\begin{enumerate}
  \item $\bT = \rho_{r}c^{2} \bU \otimes \bU - \bS$ (considered by Souriau \cite{Sou1958,Sou1964} and Synge \cite{Syn1959}),
  \item or, $\bT = L \bU \otimes \bU - \bSigma$ (considered by Eckart \cite{Eck1940} and Bennoun \cite{Ben1965}).
\end{enumerate}
The orthogonal decompositions (relative to $\bN$) of the two four-dimensional stresses $\bS$ and $\bSigma$ interestingly give rise to two possible ways to define a three-dimensional stress tensor $\bsigma$:
\begin{enumerate}
  \item either as the spatial part of $\bS$,
  \item or, as the spatial part of $\bSigma$.
\end{enumerate}

\subsection*{First choice: $\pmb \sigma$ is defined as the spatial part of $\bS$}

Using the fact that $\bS\cdot\bU^{\flat}=E\, \bU$ by theorem~\ref{thm:relativistic-hyperelasticity} with $\bU^{\flat} = \gamma \bN^{\flat} + \gamma \uu^{\flat}/ c$ by~\eqref{eq:U-orthogonal-decomposition}, the orthogonal decomposition of $\bS$ can be expressed as
\begin{equation}\label{eq:S-from-sigma}
  \bS = \alpha \, \bN \otimes \bN + \bN \otimes \aaa +\aaa \otimes \bN + \bsigma,
\end{equation}
where the spatial part of $\bS$ has been set equal to $\bsigma$, and
\begin{equation*}
  \aaa = \frac{1}{c}\left(\bsigma \cdot \uu^{\flat} - E\uu\right), \quad \text{and} \quad \alpha =  \frac{1}{c^{2}}\uu^{\flat} \cdot \bsigma \cdot \uu^{\flat} - E\left( 1 + \frac{\norm{\uu}^{2}}{c^{2}}\right).
\end{equation*}
The associated orthogonal decomposition of the stress-energy tensor $\bT = \rho_{r}c^{2} \bU \otimes \bU - \bS$ is then
\begin{equation*}
  \begin{cases}
    E_{\mathrm{tot}} = \gamma \rho c^{2} + E\Big(1+\frac{1}{c^{2}}  \norm{\uu}^{2}\Big) -\frac{1}{c^{2}} \uu^{\flat} \cdot \bsigma \cdot \uu^{\flat},
    \\
    \bp =(\gamma \rho c^{2}+E)\uu - \bsigma \cdot \uu^{\flat},
    \\
    \bt = \gamma\rho \, \uu \otimes \uu - \bsigma,
  \end{cases}
\end{equation*}
where $\rho= \gamma \rho_{r}$ is the relativistic mass density. The three-dimensional stress tensor $\bsigma$, defined as the spatial part of
\begin{equation}\label{eq:Hyperelasticity-law-S}
  \bS = - 2\rho_{r}\, g^{-1} (T\Psi)^{\star}\frac{\partial e}{\partial \bH} (T\Psi) g^{-1}- E\, \bU \otimes \bU,
\end{equation}
is thus
\begin{equation}\label{eq:Hyperelasticity-a}
  \bsigma = - \frac{2}{\gamma} \rho\, (\mg)^{\sharp} \bF^{-\star}\frac{\partial e}{\partial \bH} \bF^{-1} (\mg)^{\sharp} - \frac{\gamma^{2}E}{c^{2}} \, \uu \otimes \uu.
\end{equation}
The later equation can be interpreted as a three-dimensional Hyperelasticity law. Introducing the generalized second Piola--Kirchhoff stress tensor~\eqref{eq:def-s-3D}, we get
\begin{equation*}
  \bsigma = \frac{1}{\gamma} \rho\, (\mg)^{\sharp} \bF^{-\star}\,\bs\,\bF^{-1} (\mg)^{\sharp} - \frac{\gamma^{2}E}{c^{2}} \, \uu \otimes \uu,
  \qquad
  \bs=\displaystyle-2\frac{\partial e}{\partial \bH}.
\end{equation*}

\subsection*{Second choice: $\pmb \sigma$ is defined as the spatial part of $\pmb\Sigma$}

Using this time the fact that $\bSigma \cdot \bU^{\flat}=0$ by remark~\ref{rem:T}, we get the orthogonal decomposition
\begin{equation}\label{eq:Sigma-from-sigma}
  \bSigma =  \frac{1}{c^{2}}(\uu^{\flat} \cdot \bsigma \cdot \uu^{\flat}) \, \bN \otimes \bN
  + \frac{1}{c} \left(\bN \otimes ( \bsigma \cdot \uu^{\flat}) +( \bsigma \cdot \uu^{\flat}) \otimes \bN \right)
  + \bsigma,
\end{equation}
where the spatial part of $\bSigma$ has been set equal to $\bsigma$. The associated orthogonal decomposition of the stress-energy tensor $\bT = L \bU \otimes \bU - \bSigma$, with $L=\rho_{r}c^{2}  + E$, is now
\begin{equation*}
  \begin{cases}
    E_{\mathrm{tot}} =  \gamma^{2} L -\frac{1}{c^{2}} \uu^{\flat} \cdot \bsigma \cdot \uu^{\flat}
    ,
    \\
    \bp = \gamma^{2} L\, \uu - \bsigma \cdot \uu^{\flat},
    \\
    \bt = \frac{\gamma^{2}}{c^{2}} L\, \uu \otimes \uu - \bsigma ,
  \end{cases}
\end{equation*}
where, using $\gamma^{2}=1+\frac{\gamma^{2}}{c^{2}} \norm{\uu}^{2}$,
\begin{equation*}
  \gamma^{2} L = \gamma \rho c^{2} + E\Big(1+\frac{\gamma^{2}}{c^{2}} \norm{\uu}^{2}\Big) .
\end{equation*}
The three-dimensional stress tensor $\bsigma$, defined as the spatial part of
\begin{equation}\label{eq:Hyperelasticity-law-Sigma}
  \bSigma = - 2\rho_{r}\, g^{-1} (T\Psi)^{\star}\frac{\partial e}{\partial \bH} (T\Psi) g^{-1}.
\end{equation}
is then
\begin{equation}\label{eq:Hyperelasticity-b}
  \bsigma := - \frac{2}{\gamma} \rho\, (\mg)^{\sharp} \bF^{-\star}\frac{\partial  e}{\partial \bH} \bF^{-1} (\mg)^{\sharp},
\end{equation}
a relation which can be interpreted as a three-dimensional Hyperelasticity law. Introducing~\eqref{eq:def-s-3D}, we end up with
\begin{equation}\label{eq:Hyperelasticity-b-bis}
  \bsigma :=  \frac{1}{\gamma} \rho\, (\mg)^{\sharp} \bF^{-\star}\,\bs\, \bF^{-1} (\mg)^{\sharp},
  \qquad
  \bs=\displaystyle-2\frac{\partial e}{\partial \bH}.
\end{equation}

\bigskip

Conversely, once the three-dimensional generalized Cauchy stress tensor $\bsigma$ is given (through a three-dimensional Hyperelasticity law), the four-dimensional stress tensors $\bS$ and $\bSigma$ are then fully determined, either by~\eqref{eq:S-from-sigma} or by~\eqref{eq:Sigma-from-sigma}. Even if the full theory is four-dimensional, \emph{the Relativistic Hyperelasticity laws are three-dimensional}. Finally, observe also that the difference between~\eqref{eq:Hyperelasticity-a} and~\eqref{eq:Hyperelasticity-b} is purely relativistic: both of them converge to the same three-dimensional stress tensor $\bsigma$ at the Galilean limit $c\to 0$ if one assumes $\lim_{c\to \infty}(E/c^{2})=0$.

\begin{rem}\label{rem:Kirchhoff}
  The three-dimensional stress tensor,
  \begin{equation*}
    \btau:=\frac{\bsigma}{\rho_{r}}=\gamma\frac{\bsigma}{\rho},
    \quad\text{such as} \quad
    \btau =(\mg)^{\sharp} \bF^{-\star}\,\bs\, \bF^{-1} (\mg)^{\sharp},
  \end{equation*}
  is a first Relativistic generalization of the Kirchhoff stress tensor $\bsigma/\rho$ of Classical Continuum Mechanics (see \autoref{sec:stress-on-the-body} for a second generalization relative to the Schwarzschild spacetime).
\end{rem}

\section{Relativistic Hyperelasticity in Schwarzschild spacetime}
\label{sec:hyperelasticity-Schwarzschild}

In~\cite{Sou1958,Sou1960,Sou1964}, Souriau has provided a full description of Relativistic Hyperelasticity in the Minkowski spacetime, making implicitly the \emph{passive matter hypothesis}, meaning that the matter field $\Psi$ under study is negligible as a source of the gravitation field. Minkowski spacetime is a flat static spacetime, with no gravitational source, which is described in some coordinate system $(x^{0}=ct, x^{i})$ by the constant metric
\begin{equation*}
  \eta = -(dx^{0})^{2} + \delta_{ij}dx^{i}dx^{j}.
\end{equation*}
Of course, this situation is not fully realistic. However, due to the fact that for any point $m_{0}$ of the Universe, it is always possible to find a chart around $m_{0}$ in which the Christoffel symbols vanish at $m_{0}$, we can assume that the Christoffel symbols are almost zero around this point. This situation exactly corresponds to a free fall (like inside an orbital station), it approximately corresponds to mechanical situations on Earth surface for which gravity can be neglected or not taken into account.

Our goal here is to extend Souriau's results on Relativistic Hyperelasticity by taking into account gravity. These results will be used in the next section to formulate classical Galilean Hyperelasticity with gravity (or Newton--Cartan theory of Continuum Mechanics~\cite{Car1923,Car1924,Car1925}). To do so, rather than using the Minkowski spacetime, we shall assume here that the continuous medium/the structure considered is embedded in the \emph{Schwarzschild spacetime}~\cite{Sch1916,MTW1973,MG2009} (see \autoref{sec:Schwarzschild-spacetime}). Therefore, we neglect the influence of the matter under study (passive matter hypothesis) as a source of the gravity field. The \emph{exterior Schwarzschild metric} is a static solution $g$ of Einstein equation $\Ein_{g} = 0$ in the vacuum (with vanishing cosmological constant $\Lambda=0$). It is representative of the gravity field around a spherical and nonrotating planet (or a star or a black hole) of mass $M$ and radius $r_{0}$, such as the Earth. In this model, the rotation of the celestial body as a potential source of the gravitation field has been neglected. An alternative would have been to choose the \emph{Kerr metric}~\cite{MG2009} rather than the Schwarzschild metric, another possible choice which we did not make. A practical consequence of this choice is that the frame deduced from the coordinate system in which is described the
Schwarzschild metric corresponds to one pointing at fixed stars (the Earth is assumed to be nonrotating).

In the so-called \emph{Cartesian isotropic coordinates} (see \autoref{sec:Schwarzschild-spacetime}), the Schwarzschild metric has for expression
\begin{equation*}
  g = -\left(\frac{\bar{r}-\bar{r}_{\mathrm{s}}}{\bar{r} + \bar{r}_{\mathrm{s}}}\right)^{2} c^{2} \dd t^{2}
  + \left(1+\frac{\bar{r}_{\mathrm{s}}}{\bar{r}}\right)^{4} \delta_{ij} x^{i}x^{j},
  \qquad
  \bar{r}:=\sqrt{\delta_{ij} x^{i}x^{j}},
\end{equation*}
with $\bar{r}=0$ at the center of the planet and $\bar{r}\approx r_{0}$ on its surface. The \emph{reduced Schwarzschild radius}
\begin{equation*}
  \bar{r}_{\mathrm{s}} = \frac{1}{4} r_{\mathrm{s}} = \frac{GM}{2c^{2}}
\end{equation*}
depends on the gravitational constant $G$, the mass $M$ of the celestial body, and the light speed $c$. It is much smaller than the radius $r_{0}$ of the planet~\cite{MTW1973} ($\bar{r}_{\mathrm{s}}\approx 2$ mm for the Earth).

Introducing the \emph{lapse function}~\cite{ADM1962}
\begin{equation*}
  \mathcal{N}:=\frac{1}{\sqrt{-\norm{\grad^{g} x^{0}}^{2}}} = \frac{1}{\sqrt{-g^{00}}} = \sqrt{-g_{00}} =  \frac{1-\bar{r}_{\mathrm{s}}/\bar{r}}{1+\bar{r}_{\mathrm{s}}/\bar{r}},
\end{equation*}
the metric can be written as
\begin{equation}\label{eq:Schwarzschild-metric}
  g = - \mathcal{N}^{2} c^{2}  \dd t^{2}+ \mg,
\end{equation}
where $\mg$ is the spatial (conformal) metric,
\begin{equation}\label{eq:k}
  \mg = k q,
  \qquad
  k = \left(1+\frac{\bar{r}_{\mathrm{s}}}{\bar{r}}\right)^{4}
  \qquad
  q = \delta_{ij} \dd x^{i} \dd x^{j}.
\end{equation}
The unit normal $\bN$, defined by~\eqref{eq:defN}, to the three-dimensional spatial hypersurfaces $\Omega_{t}$ is simply
\begin{equation}\label{eq:Schwarzschild-normal}
  \bN = \frac{1}{\mathcal{N}}\, \partial_{x^{0}}=\frac{1}{c\mathcal{N}}\, \partial_{t}.
\end{equation}

\begin{rem}
  The flat Minkowski spacetime (Special Relativity) is simply the limiting massless case $M=0$ and thus $\bar{r}_{\mathrm{s}} = r_{\mathrm{s}} = 0$. It is a special case of this more general framework.
\end{rem}

The Schwarzschild metric is not flat. The non-vanishing symmetric Christoffel symbols, in the Cartesian isotropic coordinate systems $(x^{\mu})$ and $(t, x^{i})$, can easily be recovered using the usual formula~\eqref{eq:Gammas} since $\mg=k(\bar{r}) q$ is conformal (see also~\cite{MG2009}). They are written as
\begin{equation}\label{eq:Schwarzschild-Christoffels}
  \begin{split}
    \Gamma^{t}_{ti} & = \Gamma^{0}_{0i} = (\dd \ln \mathcal{N})_{i} = 2 \frac{\bar{r}_{\mathrm{s}}}{\bar{r}^{2}} \frac{ 1}{\left(1-\bar{r}_{\mathrm{s}}^{2}/\bar{r}^{2}\right)} \delta_{ik} \frac{x^{k}}{\bar{r}},
    \\
    \Gamma^{i}_{tt} & = c^{2} \Gamma^{i}_{00} = c^{2}\frac{\mathcal{N}}{k} (\grad \mathcal{N})^{i} = 2c^{2} \frac{\bar{r}_{\mathrm{s}}}{\bar{r}^{2}} \frac{(1-\bar{r}_{\mathrm{s}}/\bar{r})}{(1+\bar{r}_{\mathrm{s}}/\bar{r})^{7}} \frac{x^{i}}{\bar{r}},
    \\
    \Gamma^{i}_{jj} & = -\frac{1}{2}(\grad \ln k)^{i} = 2 \frac{\bar{r}_{\mathrm{s}}}{\bar{r}^{2}} \frac{ 1}{\left(1+\bar{r}_{\mathrm{s}}/\bar{r}\right)} \frac{x^{i}}{\bar{r}}, \quad \text{for $i\ne j$},
    \\
    \Gamma^{j}_{ji} & = \frac{1}{2}(\dd \ln k)_{i} = - 2 \frac{\bar{r}_{\mathrm{s}}}{\bar{r}^{2}} \frac{ 1}{\left(1+\bar{r}_{\mathrm{s}}/\bar{r}\right)}\delta_{ik}\frac{x^{k}}{\bar{r}}, \quad \text{for $i = j$ and $i\ne j$}.
  \end{split}
\end{equation}
with no sum on $j$, and where, for a function $f$, $\grad f=q^{-1} \dd f$ is the gradient relative to the Euclidean metric $q=(\delta_{ij})$. The related divergence operators are detailed in~\autoref{sec:divergences-spacetime}.

In the following, we particularize the relations established in \autoref{sec:hyperelasticity-spacetime} for the special case of the Schwarzschild spacetime when expressed in Cartesian isotropic coordinates.

$\bullet$ The \emph{three-dimensional velocity} $\uu$ defined by~\eqref{eq:def-u} is then given by
\begin{equation}\label{eq:u-s}
  \uu = -\frac{1}{\mathcal{N}}\, \bF\, \frac{\partial \Psi}{\partial t},
\end{equation}
where $\bF=T\Psi_{t}^{-1}$ was introduced in~\eqref{eq:def-u}.

$\bullet$ The \emph{generalized Lorentz factor}~\eqref{eq:def-gamma} has then for expression
\begin{equation}\label{eq:gamma-s}
  \gamma = \frac{1}{\sqrt{1-k \frac{\uu^{2}}{c^{2}}}},
\end{equation}
where $\uu^{2}:=\norm{\uu}_{q}^{2}$ is the Euclidean squared norm and $k=\left(1+\frac{\bar{r}_{\mathrm{s}}}{\bar{r}}\right)^{4}$.

$\bullet$ The \emph{conformation}~\eqref{eq:conformation-in-spacetime} reduces to
\begin{equation}\label{eq:conformation-s}
  \bH=\bF^{-1}\left( \frac{1}{k}\bq^{-1}-\frac{1}{c^{2}} \uu \otimes \uu\right) \bF^{-\star}.
\end{equation}

$\bullet$ For the two choices of a \emph{three-dimensional stress} $\bsigma$ introduced in \autoref{sec:hyperelasticity-spacetime}, where $\rho=\gamma\rho_{r}$ is the relativistic mass density, we get, in the coordinate system $(t, x^{i})$,
\begin{equation}\label{eq:Ttxi}
  \bT =\left(\begin{array}{cc}\displaystyle
      \frac{1}{c^{2}\mathcal{N}^{2}} E_{\text{tot}} & \displaystyle\frac{1}{c^{2} \mathcal{N}} \bp^{\star}
      \\
      \displaystyle\frac{1}{c^{2} \mathcal{N}} \bp  & \bs
    \end{array}\right)
\end{equation}

\begin{enumerate}
  \item When the stress tensor $\bsigma$ is defined as the spatial part of $\bS$:
        \begin{equation}\label{eq:T1-s}
          \begin{cases}
            E_{\mathrm{tot}}=  \gamma \rho c^{2} + E\Big(1+k \frac{\uu^{2}}{c^{2}} \Big) - \frac{1}{c^{2}} \uu^{\flat} \cdot \bsigma \cdot \uu^{\flat},
            \\
            \bp = (\gamma \rho c^{2}+E)\uu- \bsigma \cdot \uu^{\flat},
            \\
            \bt = \gamma\rho \, \uu \otimes \uu- \bsigma,
            \\
            \bsigma = - \frac{2}{\gamma k^{2}} \rho\, q^{-1} \bF^{-\star}\frac{\partial  e}{\partial \bH} \bF^{-1} q^{-1}
            - \frac{\gamma^{2} E}{c^{2}} \uu \otimes \uu.
          \end{cases}
        \end{equation}

  \item When the stress tensor $\bsigma$ is defined as the spatial part of $\bSigma$:
        \begin{equation}\label{eq:T2-s}
          \begin{cases}
            E_{\mathrm{tot}}=  \gamma \rho c^{2} + E\Big(1+k \gamma^{2} \frac{\uu^{2}}{c^{2}}\Big)-\frac{1}{c^{2}} \uu^{\flat} \cdot \bsigma \cdot \uu^{\flat}
            ,
            \\
            \bp = \left(\gamma \rho c^{2} + E\Big(1+k \gamma^{2}\frac{\uu^{2}}{c^{2}} \Big)\right) \uu - \bsigma \cdot\uu^{\flat},
            \\
            \bt = \left(\gamma \rho + \frac{1}{c^{2}}E \Big(1+k\gamma^{2} \frac{\uu^{2}}{c^{2}} \Big)\right) \uu \otimes \uu- \bsigma,
            \\
            \bsigma = - \frac{2}{\gamma k^{2}} \rho\, q^{-1} \bF^{-\star}\frac{\partial  e}{\partial \bH} \bF^{-1} q^{-1}.
          \end{cases}
        \end{equation}
\end{enumerate}

\begin{rem}
  The above expressions generalize Souriau's 1958--1964 results when gravity is taken into account. Note that in~\eqref{eq:T1-s}, in the special case $k=1$ (Special Relativity), there is a contribution $E\uu^{2}/c^{2}$ in the expression of $E_{\mathrm{tot}}$. This term is however missing in~\cite[p. 153]{Sou1958} but this typo was corrected in~\cite[p. 376]{Sou1964}.
\end{rem}

The full description of Relativistic Hyperelasticity must be completed by writing balance laws for the four-momentum quadrivector and the stress-energy tensor. They are written as
\begin{align}
  \label{eq:divP=0}
  \dive^{g} \bP & = 0,
  \\
  \label{eq:divT=0}
  \dive^{g} \bT & = 0,
\end{align}
where $\bP$ is the current of matter and $\bT$ is the stress-energy tensor. Using explicit formulas for these divergence operators provided in \autoref{sec:divergences-spacetime}, where we use here the Cartesian isotropic coordinates system $(t, x^{i})$, we obtain the following equations.

$\bullet$ The conservation law~\eqref{eq:divP=0} for the \emph{current of matter}
\begin{equation}\label{eq:P-s}
  \bP = \rho_{r} \bU = \frac{\rho}{c\mathcal{N}} \partial_{t} + \frac{1}{c}\rho\uu,
\end{equation}
leads (after multiplication by $c$) and according to~\eqref{eq:divP-static}, to
\begin{equation}\label{eq:Schwarzschild-divP}
  \dive^{g} c\bP = \frac{1}{\mathcal{N}} \frac{\partial \rho}{\partial t}+ \frac{\partial}{\partial x^{i}} (\rho u^{i})+  (\Gamma^{j}_{j i } +\Gamma^{t}_{t i } ) \rho u^{i}=0,
\end{equation}
or, in a more intrinsic form
\begin{equation*}
  \dive^{g} c\bP = \frac{1}{\mathcal{N}} \frac{\partial \rho}{\partial t} + \dive^{\mg} (\rho \uu) + \rho \uu\cdot \dd \ln \mathcal{N} = 0.
\end{equation*}

$\bullet$ The conservation law~\eqref{eq:divT=0} for the \emph{stress-energy tensor} $\bT$, expressed in components by~\eqref{eq:Ttxi},
has for time component, according to~\eqref{eq:divT0-static}
\begin{equation}\label{eq:Schwarzschild-divT-t}
  (\dive^{g} \bT)^{t} =  \frac{1}{c^{2}\mathcal{N}^{2}}\frac{\partial E_{\mathrm{tot}}}{\partial t} + \frac{1}{c^{2}\mathcal{N}} \frac{\partial p^{i}}{\partial x^{i}}
  + \left(  \Gamma^{j}_{j i} + 3\Gamma^{t}_{ti} - \frac{\partial \mathcal{\ln N}}{\partial x^{i}}\right) \frac{p^{i}}{c^{2}\mathcal{N}} = 0,
\end{equation}
or, in a more intrinsic form
\begin{equation*}
  (\dive^{g} \bT)^{t} = \frac{1}{c^{2}\mathcal{N}^{2}}\frac{\partial E_{\mathrm{tot}}}{\partial t} + \frac{1}{c^{2}\mathcal{N}} \dive^{\mg} \bp+ 2 \frac{\bp}{c^{2}\mathcal{N}} \cdot \dd \ln \mathcal{N} =0,
\end{equation*}
and for spatial component, according to~\eqref{eq:divTi-static}
\begin{equation}  \label{eq:Schwarzschild-divT-i}
  (\dive^{g} \bT)^{i} =  \frac{1}{c^{2}\mathcal{N}} \frac{\partial p^{i}}{\partial t} + \frac{\partial s^{i j}}{\partial x^{j}}
  + (\Gamma^{k}_{k j}+\Gamma^{0}_{0 j}) s^{ij}+ \Gamma^{i}_{jk}  s^{jk}+ \Gamma^{i}_{tt} \frac{E_{\mathrm{tot}}}{c^{2}\mathcal{N}^{2}} =0,
\end{equation}
or, in a more intrinsic form
\begin{equation*}
  (\dive^{g} \bT)^{\top} =  \frac{1}{c^{2}\mathcal{N}} \frac{\partial \bp}{\partial t} + \dive^{\mg} \bs
  +   \bs \cdot \dd \ln \mathcal{N}
  + E_{\mathrm{tot}} \grad^{\mg}\!\! \ln \mathcal{N} =0,
\end{equation*}
where $ \grad^{\mg}\!\! \ln \mathcal{N}= \frac{1}{k} q^{-1} \dd \ln \mathcal{N}$.

\section{The Galilean limit of Relativistic Hyperelasticity}
\label{sec:Gallilean}

Soon after Einstein's formulation of the theory of General Relativity (1915), Cartan introduced, in a series of papers~\cite{Car1923,Car1924,Car1925}, a general covariant formulation of Newtonian gravity, today called \emph{Newton-Cartan theory of gravitation}. It can be obtained as the limit of Lorentzian spacetimes whose light cones open up to hyperplanes at each tangent space~\cite{Kuen1976}. It allows to recast the equations of energy and momentum balance of Classical Continuum Mechanics in a four-dimensional general covariant form, similar to the relativistic equation $\dive^{g} \bT = 0$, provided the specific internal energy and the energy flux are suitably interpreted. This 4D general covariant formalism~\cite{Tou1958,Kuen1972,Dix1975,DK1977,Hav1964,DBKP1985,dSax2020}, derived from General Relativity, is also useful to better understand the foundations of Classical Continuum Mechanics and avoid \emph{ad hoc} assumptions in its formulation.

A \emph{Galilean structure} on a four-dimensional manifold $\mM$ is a pair $(\got,\theta)$, where $\got$ is a symmetric second-order contravariant tensor of signature $(0,+,+,+)$ (the classical spatial metric) and $\theta$ is a one-form which spans the kernel of $\got$ (the clock). This means that $\got \theta = 0$ and that $\theta$ vanishes nowhere. The one-form $\theta$ defines a distribution of hyperplanes $E_{m} := \ker \theta_{m}$, each of them, carrying an Euclidean metric induced by $\got$. The Galilean structure is called integrable if $\theta$ is closed. In that case, $\theta$ defines a time function $\hat{t}$ satisfying $\dd\hat{t} = \theta$ (at least locally) and a foliation by hypersurfaces, $\espace_{t} := \hat{t}^{-1}(t)$, tangent to the distribution $(E_{m})$, and which are moreover Riemannian manifolds.

A covariant derivative $\nabla$ on $\mM$ is said to be Galilean if it is symmetric (torsion-less) and satisfies moreover
\begin{equation*}
  \nabla \got = 0 \quad \text{and} \quad \nabla \theta = 0.
\end{equation*}
Such a covariant derivative exists only if the Galilean structure $(\got,\theta)$ is integrable (since $\nabla \theta = 0$ implies $\dd\theta = 0$ for a symmetric covariant derivative). Note, however, that contrary to the canonical covariant derivative of a Riemannian (or pseudo-Riemannian) manifold, a Galilean covariant derivative is not uniquely defined.

In practice, a Galilean structure $(\got,\theta)$ is obtained as a limit of a one-parameter family of smooth Lorentz metrics $\gl$, such that
\begin{equation*}
  \gl^{-1} = \got + \lambda \bkappa + O(\lambda^{2}),
\end{equation*}
with $\got$ of signature $(0,+,+,+)$, and $\theta$ is a generator of the kernel of $\got$~\cite{Dau1964,Kuen1972}. Note that $\theta$ can be fixed uniquely up to a sign by the normalization $\bkappa(\theta,\theta) = -1$. It has been shown in~\cite{Kuen1976} that, provided that $\dd\theta = 0$ (a condition which is always satisfied by static spacetimes such as the Minkowski or the Schwarzschild spacetimes), the one-parameter family of Riemannian covariant derivatives $\nabla^{\lambda}$ converges then to a symmetric covariant derivative $\nabla^{\text{\tiny NC}}$ which is compatible with the Galilean structure $(\got,\theta)$.

Recall moreover, that, on a Riemannian or pseudo-Riemannian manifold, the Riemann tensor $\mathbf{R}$ (defined in components by $R_{\alpha\beta\gamma\delta} = g_{\alpha\rho}{R^{\rho}}_{\beta\gamma\delta}$) has the additional symmetry
\begin{equation*}
  R_{\alpha\beta\gamma\delta} = R_{\gamma\delta\alpha\beta},
\end{equation*}
and that, uprising the first and third indices, we obtain the following identity
\begin{equation*}
  g^{\gamma\lambda}{R^{\alpha}}_{\beta\lambda\delta} = g^{\alpha\lambda}{R^{\gamma}}_{\delta\lambda\beta}.
\end{equation*}
Therefore, when a Galilean covariant derivative is obtained as a limit of (pseudo-)Riemannian covariant derivatives, it must satisfies the additional property
\begin{equation}\label{eq:Newtonnian-connexion}
  \got^{\gamma\lambda}{R^{\alpha}}_{\beta\lambda\delta} = \got^{\alpha\lambda}{R^{\gamma}}_{\delta\lambda\beta}
\end{equation}
and is then called a \emph{Newtonian covariant derivative}.

Applying this procedure to the Schwarzschild metric~\eqref{eq:Schwarzschild-metric},
\begin{equation*}
  \gl = - \Nl^{2} c^{2}  \dd t^{2}+ \kl \, q,
\end{equation*}
with $\lambda := 1/c^{2}$, we get
\begin{equation}\label{eq:Nc4}
  \Nl=1-\frac{1}{c^{2}} \frac{GM}{\bar{r}}+O(1/c^{4}),
  \qquad
  \kl=1+\frac{2}{c^{2}} \frac{GM}{\bar{r}}+O(1/c^{4}),
\end{equation}
and
\begin{equation}\label{eq:metric-expansion}
  \gl^{-1} = q^{-1} - \frac{1}{c^{2}} \left( (\partial_{t})^{2} + 2\frac{GM}{\bar{r}}q^{-1}\right) + O(1/c^{4}),
\end{equation}
where $q^{-1} = (\partial_{x^{1}})^{2} + (\partial_{x^{2}})^{2} + (\partial_{x^{3}})^{2}$. Observe also that we have,
\begin{equation}\label{eq:volg-expansion}
  \vol_{\gl} = c f\, \dd t \wedge \vol_{q},
  \quad \text{where} \quad  f=\Nl \kl^\frac{3}{2} =1+\frac{2}{c^{2}} \frac{GM}{\bar{r}}  + O(1/c^{4}),
\end{equation}
for the Riemannian volume form associated with the metric $\gl$. Note that $\vol_{\gl}$ diverges as $c \to \infty$.

We obtain thus the following Galilean structure,
\begin{equation*}
  \got = q^{-1}, \qquad \bkappa = - (\partial_{t})^{2} - 2\frac{GM}{\bar{r}}q^{-1}, \qquad \theta = \dd t,
\end{equation*}
where the normalisation condition $\bkappa(\theta,\theta) = -1$ has been used. This structure is of course integrable, and the time function is the same, in either the relativistic context or the Galilean one. Therefore, the foliation by the hypersurfaces $\Omega_{t}$ is common to both structures. Note however that by~\eqref{eq:Schwarzschild-normal}, the relativistic normal $\bN^{\lambda}$ to these hypersurfaces converges towards $0$ as $c \to \infty$.

An immediate consequence of~\eqref{eq:metric-expansion} is the fact that the \emph{conformation} $\Hl$ defined by~\eqref{eq:conformation} has a limit when $c\to \infty$. Indeed,
\begin{equation}\label{eq:HapproxCm1}
  \Hz := \lim_{c\to \infty}\Hl = \bF^{-1}\bq^{-1} \,\bF^{-\star}.
\end{equation}

\begin{rem}
  Observe the similarity between the (limit) conformation $\Hz$ and the inverse $\bC^{-1}$ of right Cauchy--Green tensor $\bC:=\bF^{\star} \bq\, \bF$ in Classical Continuum Mechanics. However, $\Hz$ is not exactly $\bC^{-1}$ because it is a function from the World tube $\mW$ to $\Sym^{2}T\body$ (see remark~\ref{rem:Psim1}), while $\bC^{-1}$ is a tensor field on $\body$.
\end{rem}

Concerning the Riemannian covariant derivative $\nabla^{\lambda}$ of $\gl$, the expansion of its non vanishing Christoffel symbols is easily deduced from~\eqref{eq:Schwarzschild-Christoffels} and recalling that $\bar{r}_{\mathrm{s}} = GM/2c^{2}$. We get
\begin{equation}\label{eq:Schwarzschild-Christoffels-expansions}
  \begin{split}
    \Gamma^{i}_{tt} & =  2c^{2} \frac{\bar{r}_{\mathrm{s}}}{\bar{r}^{2}} \frac{(1-\bar{r}_{\mathrm{s}}/\bar{r})}{(1+\bar{r}_{\mathrm{s}}/\bar{r})^{7}} \frac{x^{i}}{\bar{r}} = -\mathrm{g}^{i} -\frac{1}{c^{2}}\frac{4GM}{\bar{r}^{2}}\mathrm{g}^{i} + O(1/c^{4}),
    \\
    \Gamma^{t}_{ti} & =  2 \frac{\bar{r}_{\mathrm{s}}}{\bar{r}^{2}} \frac{ 1}{\left(1-\bar{r}_{\mathrm{s}}^{2}/\bar{r}^{2}\right)} \delta_{ik} \frac{x^{k}}{\bar{r}} = -\frac{1}{c^{2}}\delta_{ik}\mathrm{g}^{k}+O(1/c^{4}),
    \\
    \Gamma^{i}_{jj} & = 2 \frac{\bar{r}_{\mathrm{s}}}{\bar{r}^{2}} \frac{ 1}{\left(1+\bar{r}_{\mathrm{s}}/\bar{r}\right)} \frac{x^{i}}{\bar{r}} = - \frac{1}{c^{2}}\mathrm{g}^{i}+O(1/c^{4}), \quad \text{for $i\neq j$},
    \\
    \Gamma^{j}_{ji} & = - 2 \frac{\bar{r}_{\mathrm{s}}}{\bar{r}^{2}} \frac{ 1}{\left(1+\bar{r}_{\mathrm{s}}/\bar{r}\right)}\delta_{ik}\frac{x^{k}}{\bar{r}} = \frac{1}{c^{2}}\delta_{ik}\mathrm{g}^{k}+O(1/c^{4}), \quad \text{for $i = j$ and $i\ne j$}.
  \end{split}
\end{equation}
with no sum on $j$, and where $\mathbf{g}$ is the Newtonian (centripetal) gravity field,
\begin{equation}\label{eq:gxi}
  \mathbf{g} := -\frac{GM} {\bar{r}^{2}} \frac{x^{i}}{\bar{r}} \partial_{x^{i}},
\end{equation}
with $\frac{GM} {\bar{r}^{2}}\approx \frac{GM} {r_{0}^{2}} = 9.8\, \text{m/s}^{2}$ on Earth surface. We deduce therefore that the Christoffel symbols of the Newton--Cartan limit $\nabla^{\text{\tiny NC}}$ are all vanishing except
\begin{equation}\label{eq:lmitGammaSchw}
  \Gamma^{i}_{tt} = -\mathrm{g}^{i}.
\end{equation}

\begin{rem}[Weak gravity]\label{rem:gravity}
  The approximations~\eqref{eq:Schwarzschild-Christoffels-expansions} at large $c$ coincide in fact with those corresponding to so-called \emph{weak gravity} (see \cite{Der2006}), valid for the Earth, and more generally for a planet or a star (any object whose radius is smaller than its Schwarzschild radius being called a black hole).
\end{rem}

The divergence of a quadrivector $\bP$, relative to the Newtonian covariant derivative $\nabla^{\text{\tiny NC}}$, is given by
\begin{equation}\label{eq:P-Galilean-divergence}
  \divz \bP = \frac{\partial P^{t}}{\partial t} + \frac{\partial P^{j}}{\partial x^{j}}.
\end{equation}
It corresponds to the zero order terms in the expansions in $\lambda=1/c^{2}$ of the divergence $\divl \bP$ relative to the Schwarzschild metric $\gl$ given by~\eqref{eq:divP-static}, since we have
\begin{equation*}
  \Gamma^{j}_{j i } +\Gamma^{t}_{t i }=\frac{2}{c^{2}} \delta_{ik} \mathrm{g}^{k}+ O(1/c^{4}).
\end{equation*}
We get therefore
\begin{equation}\label{eq:divPc4}
  \divl \bP=\divz \bP+\frac{2}{c^{2}} \mathbf{g}\cdot \bP^{\top}+O(1/c^{4}),
\end{equation}
where $\bP^{\top}$ is the spatial part of the quadrivector $\bP$.

The divergence of a symmetric second-order contravariant tensor $\bT$ is given by
\begin{equation}\label{eq:T-Galilean-divergence}
  (\divz \bT)^{t} =  \frac{\partial T^{tt}}{\partial t} + \frac{\partial T^{tj}}{\partial x^{j}},
  \qquad (\divz \bT)^{i} =  \frac{\partial T^{it}}{\partial t} +  \frac{\partial T^{ij}}{\partial x^{j}} - \mathrm{g}^{i}T^{tt}.
\end{equation}
Indeed,  by~\eqref{eq:divT0-static}, and since by~\eqref{eq:Schwarzschild-Christoffels},
\begin{equation}\label{eq:sumGammai}
  \sum_{j=1}^{3}\Gamma^{j}_{ji}+3 \Gamma^{t}_{ti} =3\left(\dd \ln(\mathcal{N}\sqrt{k})\right)_{i}= O(1/c^{4}),
\end{equation}
we obtain
\begin{equation}\label{eq:divTtc4}
  (\divl \bT)^{t} = (\divz \bT)^{t}+O(1/c^{4}),
\end{equation}
whereas~\eqref{eq:divTi-static-components}, combined with the fact that all involved Christoffel's symbols are $O(1/c^{2})$, but
\begin{equation*}
  \Gamma^{i}_{tt} = -\mathrm{g}^{i} + O(1/c^{2}),
\end{equation*}
ends up to
\begin{equation*}
  (\divl \bT)^{i} = \frac{\partial T^{it}}{\partial t} + \frac{\partial T^{ij}}{\partial x^{j}} - \mathrm{g}^{i}T^{tt}+ O(1/c^{2}) = (\divz \bT)^{i} + O(1/c^{2}).
\end{equation*}

The \emph{current of matter}~\eqref{eq:def-P} $\Pl$ for the metric $\gl$ is defined implicitly by
\begin{equation*}
  i_{\bP^{\lambda}} \vol_{\gl}=\omega=\Psi^{*} \mu,
\end{equation*}
where the 3-form $\omega$ does not depend on the light speed $c$ (the matter field $\Psi$ and the mass measure $\mu$ do not depend on $c$, which is only introduced through the metrics). By~\eqref{eq:volg-expansion}, we get thus
\begin{equation*}
  cf\,  i_{\bP^{\lambda}} \,\dd t\wedge \vol_{q}= i_{cf \bP^{\lambda}} \,\dd t\wedge \vol_{q}=\omega=\Psi^{*} \mu,
\end{equation*}
from which it is seen that $cf \Pl$ is independent of $c$, and is therefore equal to its Newtonian limit $(c\bP)^{\scriptscriptstyle 0}$ defined by $i_{(c\bP)^{\scriptscriptstyle 0}} \,\dd t\wedge \vol_{q}=\omega$ (since $f \to 1$ as $c\to \infty$).
Setting
\begin{equation}\label{eq:cPLcP0}
  c\Pl = \rhol \,\partial_{t} + \rhol\, \uul,
  \qquad
  (c\bP)^{\scriptscriptstyle 0} = \rhoz\, \partial_{t} + \rhoz\, \uuz,
\end{equation}
defines the mass density $\rhol$ and the velocity $\uul$, as well as their Newtonian limits $\rhoz$ and $\uuz$. The equality $cf \bP^{\lambda}=(c\bP)^{\scriptscriptstyle 0}$ leads to
\begin{equation*}
  \rhol = \frac{1}{f} \rhoz,\qquad \uul = \uuz.
\end{equation*}
and allows us to expand the mass density as
\begin{equation*}
  \rhol = \rhoz- \frac{2}{c^{2}} \frac{GM}{\bar{r}} \rhoz + O(1/c^{4}).
\end{equation*}
We have therefore
\begin{equation*}
  \lim_{c\to\infty} \rhol = \rhoz,
  \qquad
  \lim_{c\to\infty} \uul = \uuz,
\end{equation*}
and
\begin{equation*}
  \lim_{c\to\infty} \gamma=\lim_{c\to\infty} \frac{1}{\sqrt{1-k \frac{\uu^2}{c^{2}}}}=1,
\end{equation*}
since the \emph{generalized Lorentz factor}~\eqref{eq:gamma-s} has the classical expansion
\begin{equation*}
  \gamma = 1+\frac{1}{2} \frac{\uuz^2}{c^{2}} + O(1/c^{4}).
\end{equation*}

Using~\eqref{eq:P-s}, we see that $(\rhol,\uul)$ are connected to $(\rho,\uu)$ by
\begin{equation*}
  \rhol=\frac{\rho}{\Nl},
  \qquad
  \uul=\Nl \uu ,
  \qquad
  \rhol\,\uul=\rho \uu,
\end{equation*}
when the so-called relativistic mass density $\rho$ and velocity $\uu$ are defined by the orthogonal decomposition
\begin{equation*}
  c\Pl=\rho c\, {\overset{\scriptscriptstyle\lambda}{\bN}}+\rho \uu.
\end{equation*}
By~\eqref{eq:u-s}, we deduce that the Newtonian limit of the three-dimensional velocity $\uu=\uul/\Nl$ is
\begin{equation*}
  \lim_{c\to \infty} \uu =\uuz =  - \bF \frac{\partial \Psi}{\partial t}.
\end{equation*}

\begin{rem}
  In Classical Continuum Mechanics, the \emph{Eulerian velocity} is defined as the vector field on the deformed configuration given by
  $\partial_{t} \pp \circ \pp^{-1}$, where $\pp$ is the embedding of the body $\body$ into the Euclidean space, and where $\VV:=\partial_{t} \pp$ is the \emph{Lagrangian velocity}. If we assume, furthermore, that $\Psi_{t}$ is a diffeomorphism and we set $\pp=\Psi_{t}^{-1}$ (see remark \autoref{rem:TPsit-iso}), the vector field $\uuz$ recasts as
  \begin{equation*}
    \uuz = - \bF\,\partial_{t} \Psi = \partial_{t} \Psi_{t}^{-1} \circ \Psi_{t} =
    \partial_{t} \pp \circ \pp^{-1},
  \end{equation*}
  and we recognize $\uuz$ as the Eulerian velocity of Classical Continuum Mechanics.
\end{rem}

The \emph{stress-energy tensor} $\Tl:=\bT$ has for expression, in the coordinate system $(t, x^{i})$,
\begin{equation}\label{eq:TtxiN}
  \Tl =\begin{pmatrix}
    \displaystyle \frac{1}{c^{2} \Nl} \El_{\text{tot}} & \displaystyle \frac{1}{c^{2} \Nl} \pl^{\star}
    \\
    \displaystyle\frac{1}{c^{2} \Nl} \pl               & \displaystyle \bsl
  \end{pmatrix}
  \qquad\text{where}\qquad
  \begin{cases}
    \El_{\text{tot}}:=E_{\text{tot}}/\mathcal{N},
    \\
    \pl:= \bp,
    \\
    \bsl:=\bs.
  \end{cases}
\end{equation}
To determine its limit, observe that the energy density $\El_{\text{tot}}$, the linear momentum $\pl$ and the spatial part $\bsl$ behave as
\begin{align}\label{eq:Etotlambda}
  \frac{\El_{\text{tot}}}{c^{2}} &
  = \rhol+ \frac{1}{c^{2}}\left(E+\frac{1}{2} \rho \uu^{2}\right) +  O(1/c^{4}),
  \\
  \label{eq:plambda}
  \frac{\pl}{c^{2}}              & =   \rhol \,\uul + \frac{1}{c^{2}} \left(\left(E+\frac{1}{2} \rho \uu^{2}\right) \uu-\bsigma\cdot\uu^{\flat}\right) + O(1/c^{4}),
  \\
  \label{eq:slambda}
  \bsl                           & = \rhol\,  \uul   \otimes \uul   -\bsigma
  + O(1/c^{2}),
\end{align}
where we have used indifferently either~\eqref{eq:T1-s} or~\eqref{eq:T2-s}, and we have assumed that the internal energy density $E$ (function of $\bH$) is $o(1/c^{2})$ (see also the discussion in \cite{Sou1960}). In that case, the quantities $\frac{1}{2}\rho \uu^{2}$, $E$ and $\bsigma$ converge respectively to $\frac{1}{2}\rhoz\, \uuz^{2}$, $\Ez$ and $\bsigz$.
Therefore, in the coordinate system $(t, x^{i})$, the stress-energy tensor $\Tl$ converges  to the Newtonian limit
\begin{equation}\label{eq:TGal}
  \Tz = \lim_{c\to \infty}\bT^{\lambda}=
  \begin{pmatrix}
    \rhoz        & \rhoz\, \uuz^{\star}              \\
    \rhoz\, \uuz & \rhoz\, \uuz \otimes \uuz- \bsigz
  \end{pmatrix}.
\end{equation}

We now discuss the limits of the balance laws. First, by~\eqref{eq:divPc4} and~\eqref{eq:cPLcP0}, we get
\begin{equation*}
  \divl (c\Pl) = \frac{\partial \rhol}{\partial t}+\dive(\rhol\, \uul)+\frac{2}{c^{2}} \mathbf{g}\cdot \rhol\, \uul+O(1/c^{4})=0,
\end{equation*}
where $\dive$ is the canonical divergence in $\RR^{3}$, and which can be recast as
\begin{equation}\label{eq:divrholul}
  \frac{\partial \rhol}{\partial t}+\dive(\rhol\, \uul)=-\frac{2}{c^{2}} \mathbf{g}\cdot \rho \uul+O(1/c^{4}).
\end{equation}
It converges to
\begin{equation*}
  \divz (c\bP)^{\scriptscriptstyle 0} = \frac{\partial \rhoz}{\partial t} +\dive(\rhoz\, \uuz)
  = 0.
\end{equation*}
One recovers thus the usual expression of mass conservation in Classical Continuum Mechanics (omitting the bars),
\begin{equation}\label{eq:massconservationMMC}
  \frac{\partial \rho}{\partial t} + \dive(\rho \uu)=0,
\end{equation}
for the Euclidean metric $q$.

Then, the relativistic conservation law $\divl \bT^{\lambda} = 0$ converges to the equation $\divz \Tz=0$, with (by~\eqref{eq:T-Galilean-divergence})
\begin{equation*}
  (\divz \Tz)^{t} = \frac{\partial \rhoz}{\partial t} + \frac{\partial}{\partial x^{j}}( \rhoz\, \uz^{j}),
  \quad \text{and} \quad
  (\divz \Tz)^{i} =  \frac{\partial}{\partial t}(\rhoz\, \uz^{i}) +  \frac{\partial}{\partial x^{j}}( \rhoz\, \uz^{i} \uz^{j} - \sigz^{ij}) - \rhoz\, \mathrm{g}^{i},
\end{equation*}
so that
\begin{equation*}
  \frac{\partial \rhoz}{\partial t} +\dive(\rhoz\, \uuz)=0,
  \qquad
  \frac{\partial }{\partial t} (\rhoz\,\uuz)+\dive\left(\rhoz\, \uuz \otimes \uuz-\bsigz\right)-\rhoz \mathbf{g}=0.
\end{equation*}

The first equation is (again) recognized as the mass conservation~\eqref{eq:massconservationMMC} and the second one as the linear momentum balance of Classical Continuum Mechanics, with gravity $\mathbf{g}$ (omitting the bars),
\begin{equation*}
  \frac{\partial }{\partial t} (\rho\,\uu)+\dive\left(\rho\, \uu \otimes \uu-\bsigma\right)-\rho \mathbf{g}=0.
\end{equation*}
By using the mass conservation law, the later can be recast as the classical expression,
\begin{equation}\label{eq:LinearMomentumCCM}
  \rho \left( \frac{\partial \uu}{\partial t}+ \nabla_{\uu} \uu \right)= \dive  \bsigma + \rho \mathbf{g},
\end{equation}
where $\nabla$ is the covariant derivative for the Euclidean metric $q$.

It is furthermore possible to recover the so-called \emph{local form of energy balance} of Classical Continuum Mechanics, as a term of order $O(1/c^{2})$ in the expansion of a combination of both $\divl (c\Pl)=0$ and $(\divl \bT^{\lambda})^{t} = 0$ (see for instance~\cite{Sou1958} for the case of the flat Minkowski spacetime or~\cite{Gou2012} for relativistic fluids in the case of weak gravity or \cite{Kuen1976} for general discussions about this balance law).

The balance law~\eqref{eq:Schwarzschild-divT-t} expresses the vanishing of the time component $(\dive^{g} \bT)^{t}=(\divl \Tl)^{t}=0$  in which here $\bT=\Tl$ is given by~\eqref{eq:TtxiN}.
Since $\El_{\mathrm{tot}}=E_{\mathrm{tot}}/\Nl$ and $\pl=\bp$, it recasts as
\begin{equation*}
  (\dive^{g} \bT)^{t} = (\divl \Tl)^{t} = \frac{1}{\Nl}\left[\frac{\partial }{\partial t} \Big( \frac{\El_{\mathrm{tot}}}{c^{2}}\Big) + \dive \Big( \frac{\pl}{c^{2}}\Big) + \frac{1}{c^{2}}\pl\cdot  \left(3\,\dd \ln(\mathcal{N}\sqrt{k}) - \dd \,\mathcal{\ln N}\right)\right] = 0,
\end{equation*}
where $\dive$ is the divergence relative to the three-dimensional Euclidean metric $q$. We have introduced the 1-form~\eqref{eq:sumGammai}
\begin{equation*}
  3\,\dd \ln(\mathcal{N}\sqrt{k}) := \Big(\sum_{j=1}^{3} \Gamma^{j}_{j i} +3\Gamma^{t}_{ti}\Big)  \dd x^{i}=O(1/c^{4}),
\end{equation*}
which is of order $O(1/c^{4})$,  $\pl$ which is of order $O(c^{2})$ and
\begin{equation*}
  \dd \,\mathcal{\ln N}=-\frac{1}{c^{2}} q\mathbf{g}+O(1/c^{4}).
\end{equation*}
By~\eqref{eq:Nc4}, we get
\begin{equation}\label{eq:divTlc4}
  \Nl(\divl \Tl)^{t} = \frac{\partial }{\partial t} \Big( \frac{\El_{\mathrm{tot}}}{c^{2}}\Big) + \dive \Big( \frac{\pl}{c^{2}}\Big) + \frac{1}{c^{2}}\Big(\frac{\pl}{c^{2}}\cdot \mathbf{g} \Big)+ O(1/c^{4}) = 0.
\end{equation}
Subtracting~\eqref{eq:divrholul} from~\eqref{eq:divTlc4} and using~\eqref{eq:Etotlambda} and~\eqref{eq:plambda}, we get now
\begin{multline*}
  \frac{1}{c^{2}} \left\{ \frac{\partial}{\partial t}\Big[ \Big(E+\frac{1}{2} \rho \uu^{2}\Big) \uu-\bsigma\cdot\uu^{\flat} \Big]
  + \dive \Big[\Big(E+\frac{1}{2} \rho \uu^{2}\Big) \uu-\bsigma\cdot\uu^{\flat}\Big] + \rho\,\mathbf{g}\cdot \uu \right\}
  \\
  - \frac{2}{c^{2}}\rho\,\mathbf{g}\cdot \uu + O(1/c^{4}) = 0,
\end{multline*}
and thus
\begin{equation*}
  \frac{\partial}{\partial t} \Big( E+\frac{1}{2}\rho\uu^{2}\Big)
  +\dive\left(\Big(E+\frac{1}{2}\rho\uu^{2}\Big) \uu -\bsigma\cdot\uu^{\flat}\right)
  -\rho\, \mathbf{g}\cdot \uu=O(1/c^{2}).
\end{equation*}
Passing to the limit $c \to \infty$, we obtain therefore
\begin{equation*}
  \frac{\partial}{\partial t} \Big( \Ez+\frac{1}{2}\rho\uuz^{2}\Big)
  +\dive\left(\Big(\Ez+\frac{1}{2}\rhoz\,\uuz^{2}\Big) \uuz -\bsigz\cdot\uuz^{\flat}\right)
  -\rhoz\, \mathbf{g}\cdot \uuz=0.
\end{equation*}
Using~\eqref{eq:LinearMomentumCCM}, we furthermore have (omitting the bars, with $\nabla$ still the covariant derivative for the Euclidean metric $q$),
\begin{equation*}
  \dive \big(\bsigma\cdot\uu^{\flat} \big)
  =\uu \cdot \dive\bsigma + \bsigma:\bd =\frac{1}{2} \rho \left( \frac{\partial \uu^{2}}{\partial t}+ \nabla_{\uu} \uu^{2} \right)+ \bsigma:\bd - \rho \, \mathbf{g}\cdot \uu ,
\end{equation*}
since $\bsigma$ is symmetric, and where
\begin{equation*}
  \bd:= \frac{1}{2}\left(\nabla \uu^{\flat} +(\nabla \uu^{\flat})^{\star} \right),
\end{equation*}
is the classical strain rate tensor. We end up with the standard expression of \emph{internal energy balance} in Classical Continuum Mechanics \cite{MH1994,LC1985},
\begin{equation*}
  \rho  \left( \frac{\partial e}{\partial t}+ \nabla_{\uu} e \right)= \bsigma:\bd ,
\end{equation*}
with no heat transfer, and where $e:=E/\rho$ is the specific internal energy.

\section{Conclusion}

We have revisited Souriau's variational formulation of Relativistic Hyperelasticity. This theory was derived in 1958 with the mindset of Gauge theory: the \emph{perfect matter field} $\Psi$ is somehow similar to the wave function $\psi$ in Quantum Mechanics, but at a macroscopic scale and without the same interpretation. The role of the three-dimensional \emph{body} $\body$, which labels the material points constitutive of the continuous medium under study in the Universe, has been emphasized: it is common to the Relativistic Hyperelasticity theory and to the three-dimensional Classical Continuum Mechanics theory. The body $\body$ is naturally distinguished from a reference configuration $\Omega_{t_{0}}$ in the present Relativistic framework, since $\body$ is embedded into the (non-physical) vector space $V$, whereas $\Omega_{t_{0}}$ is a spacelike submanifold of the Universe $\mM$. In both the Classical and Relativistic frameworks, the body $\body$ is endowed with a volume form, the \emph{mass measure} $\mu$, and a \emph{fixed Riemannian metric} $\bgamma_{0}$. Since this is shared by both theories, we have tried to make a parallel, when possible, and to point out the differences. Our point of view is mainly oriented towards mechanics, rather than astrophysics.

The fundamental observation of Souriau is that the formulation of general covariant constitutive equations for perfect matter involve the metric $g$ only through the \emph{conformation}, defined by
\begin{equation*}
  \bH=(T\Psi) g^{-1} (T\Psi)^{\star}.
\end{equation*}
It is a non degenerate contravariant three-dimensional tensor valued function which plays the role of strain, or more precisely of the inverse of the right Cauchy--Green tensor $\bC$. The connections between $\bH$, the four-dimensional degenerate metric $h=g+\bU^{\flat}\otimes \bU^{\flat}$ and the four-dimensional degenerate co-metric $h^{\sharp}=g^{-1}+\bU\otimes \bU$ (considered by Carter and Quintana \cite{CQ1972}, for instance) are given by lemma~\ref{lem:h-H},
\begin{equation*}
  \bH=(T\Psi) h^{\sharp} (T\Psi)^{\star},
  \qquad
  h=(T\Psi)^{\star} \bH^{-1}T\Psi.
\end{equation*}
Thanks to these formulas, all the definitions of a strain tensor can be expressed using a comparison between the inverse of the conformation $\bH^{-1}$ and $\bH_{0}^{-1}:=\bgamma_{0} \circ \Psi $, where $\bgamma_{0}$ is a reference metric on the body $\body$. Among these definitions, we mention
\begin{equation*}
  \mathfrak E:= \frac{1}{2} \left(\bH^{-1}-\bH_{0}^{-1}\right)
  \quad \text{and} \quad
  \widehat{\mathfrak E} := -\frac{1}{2}\log \big(\bH\, \bH_{0}^{-1}\big).
\end{equation*}
The links between the different metrics and strain tensors encountered in the literature, either defined on the World tube $\mW$, or on the body $\body$, have been clarified (in \autoref{sec:Conformation-strains}, \autoref{sec:3D-metrics} and \autoref{sec:strains}).

In the framework of Variational General Relativity, the stress-energy tensor $\bT$ derives from a \emph{general covariant Lagrangian} (theorem~\ref{thm:relativistic-hyperelasticity} and remark~\ref{rem:T}) and its decompositions allow for the rigorous formulation of \emph{stress tensors},
\begin{itemize}
  \item first, four-dimensional, such as $\bS$ or $\bSigma$ (with a preference for Eckart--Bennoun definition~\eqref{eq:Hyperelasticity-law-Sigma}),
  \item and, then, three-dimensional, such as the \emph{generalized second Piola--Kirchhoff stress tensor} $\bs$, which is always defined by~\eqref{eq:def-s-3D}, and the \emph{generalized Cauchy stress tensor} $\bsigma$, which definition requires the introduction of a spacetime (with a preference for our definition~\eqref{eq:Hyperelasticity-b}).
\end{itemize}
The full Relativistic Hyperelasticity theory is four-dimensional, but its constitutive laws are essentially three-dimensional and very similar to the ones of Classical Continuum Mechanics, a feature which has been used in~\cite{EBT2006} and~\cite{Bro2021}. We have formalized it in \autoref{sec:hyperelasticity-spacetime} and in \autoref{sec:stress-on-the-body}.

By considering the Schwarzschild spacetime (instead of the flat Minkowski spacetime like Souriau did), we have been able to take into account gravity. Following, this time, Künzle~\cite{Kuen1976}, we have recovered the Newton-Cartan formulation of Hyperelasticity in Galilean Relativity, as the limit $c\to \infty$ of our  relativistic formulation in Schwarzschild spacetime.

\appendix

\section{Orthogonal decomposition of four-dimensional 2nd-order tensors}
\label{sec:Orth-Decomp-2nd}

We detail in this Appendix the orthogonal decomposition of second-order tensor fields relative to a unit timelike quadrivector $\bW$. This means that we split these tensor fields according to the orthogonal decomposition
\begin{equation*}
  T_{m}\mM = \langle \bW(m) \rangle \oplus \bW(m)^{\perp},
\end{equation*}
where $\bW(m)^{\perp}$ is the orthogonal complement of the one-dimensional timelike subspace $\langle \bW(m) \rangle$, and thus necessarily spacelike.

\begin{itemize}
  \item \emph{For a symmetric covariant second-order tensor field $\bK$}:
        \begin{equation}\label{eq:K-orthogonal-decomposition}
          \bK = \alpha \, \bW^{\flat} \otimes \bW^{\flat} + \bW^{\flat} \otimes \bbeta +\bbeta \otimes \bW^{\flat} + \bk,
        \end{equation}
        where
        \begin{enumerate}
          \item $\alpha : = \bW\cdot \bT\cdot \bW$ is a function,
          \item $\bbeta := -(\bK\cdot \bW + \alpha \bW^{\flat})$ is a covector field orthogonal to $\bW^{\flat}$,
          \item $\bk := \bK - \alpha \, \bW^{\flat} \otimes \bW^{\flat} - \bW^{\flat} \otimes \aaa- \aaa \otimes \bW^{\flat}$ satisfies $\bk \cdot \bW = 0$ and is the spatial part of $\bK$.
        \end{enumerate}

  \item \emph{For a symmetric contravariant second-order tensor field $\bT$}:
        \begin{equation}\label{eq:T-orthogonal-decomposition}
          \bT = \alpha \, \bW \otimes \bW + \bW \otimes \aaa +\aaa \otimes \bW + \bt,
        \end{equation}
        where
        \begin{enumerate}
          \item $\alpha : = \bW^{\flat}\cdot \bT\cdot \bW^{\flat}$ is a function,
          \item $\aaa := -(\bT\cdot \bW^{\flat} + \alpha \bW)$ is a vector field orthogonal to $\bN$,
          \item $\bt := \bT - \alpha \, \bW \otimes \bW - \bW \otimes \aaa- \aaa \otimes \bW$ satisfies $\bt \cdot \bW^{\flat} = 0$ and is the spatial part of $\bT$.
        \end{enumerate}
\end{itemize}

\begin{exam}
  For $\bK=g$, the four-dimensional metric on $\mM$, we get
  \begin{equation*}
    g = k-\bW^\flat \otimes \bW^{\flat},
  \end{equation*}
  where $k$ is determined by $k\cdot \bW = 0$ and $k = g$ on $\bW^{\perp}$. For $\bT = g^{-1}$, the co-metric, we get
  \begin{equation*}
    g^{-1} = k^{\sharp}-\bW \otimes \bW,
  \end{equation*}
  where $k^{\sharp}=g^{-1} k g^{-1}$.
\end{exam}

The following result (see~\cite{Eck1940} or~\cite[page 6]{Lic1955}) is a consequence of \emph{Sylvester's law of inertia}.

\begin{lem}\label{lem:h-signature}
  Let $k$ be the spatial part of the metric $g$ in the orthogonal decomposition relative to a unit timelike quadrivector $\bW$. Then,
  $k$ is positive definite. In particular, the signature of $k$ is~$(0, +, +, +)$.
\end{lem}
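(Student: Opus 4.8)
The plan is to reduce the statement to \emph{Sylvester's law of inertia} applied to the $g$-orthogonal splitting $T_m\mM = \langle \bW(m)\rangle \oplus \bW(m)^{\perp}$. First I would record that, since $\bW$ is unit timelike, $g(\bW,\bW) = -1 < 0$, so the one-dimensional subspace $\langle \bW\rangle$ is nondegenerate and $g$ restricted to it has signature $(-)$. Because $\langle\bW\rangle$ is nondegenerate, its $g$-orthogonal complement $\bW^{\perp}$ is nondegenerate as well, and $T_m\mM$ is the $g$-orthogonal direct sum of the two.

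Next I would invoke Sylvester's law: the signature of $g$ equals the formal sum of the signatures of its restrictions to the two orthogonal summands. Since $g$ has signature $(-,+,+,+)$ and its restriction to $\langle\bW\rangle$ contributes the single $(-)$, the restriction of $g$ to the three-dimensional subspace $\bW^{\perp}$ must contribute $(+,+,+)$; in other words, $g|_{\bW^{\perp}}$ is positive definite and $\bW^{\perp}$ is spacelike.

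Finally I would translate this into a statement about $k$. By the defining properties of the spatial part appearing in the orthogonal decomposition~\eqref{eq:K-orthogonal-decomposition} (taken with $\bK = g$), we have $k\bW = 0$ and $k = g$ on $\bW^{\perp}$. Choosing a basis of $T_m\mM$ adapted to the orthogonal splitting---namely $\bW$ together with any basis of $\bW^{\perp}$---the bilinear form $k$ is represented by the block matrix
\begin{equation*}
  \begin{pmatrix} 0 & 0 \\ 0 & G \end{pmatrix},
\end{equation*}
where $G$ is the positive definite Gram matrix of $g|_{\bW^{\perp}}$. Hence $k$ has one zero eigenvalue, in the direction of $\bW$, and three strictly positive ones; it is positive definite on the spacelike complement $\bW^{\perp}$ and its signature is $(0,+,+,+)$, which is the assertion of the lemma.

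The only genuinely delicate point is the middle step, the additivity of signatures under an orthogonal direct sum: one must check that the complement $\bW^{\perp}$ is nondegenerate---which follows from $\langle\bW\rangle$ being nondegenerate, i.e.\ from $g(\bW,\bW)\neq 0$---before Sylvester's law may be applied. Everything else is bookkeeping in the adapted basis.
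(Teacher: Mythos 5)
Your proof is correct and follows exactly the route the paper intends: the paper gives no written proof of lemma~\ref{lem:h-signature}, stating only that it is ``a consequence of Sylvester's law of inertia'' (with references to Eckart and Lichnerowicz), and your argument is precisely the standard fleshing-out of that remark. The three steps you isolate---nondegeneracy of $\bW^{\perp}$ from $g(\bW,\bW)\neq 0$, additivity of signatures over the orthogonal splitting, and the block form of $k$ in an adapted basis using $k\bW=0$ and $k=g$ on $\bW^{\perp}$---are all handled correctly.
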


\section{Schwarzschild spacetime}
\label{sec:Schwarzschild-spacetime}

According to Birkhoff's theorem~\cite{BL1923,HE1973}, the only spherically symmetric solution of Einstein's equation \emph{in the vacuum} with vanishing cosmological constant is the exterior \emph{Schwarzschild metric}. It is a \emph{static} metric which describes the gravity field outside from a (spherical, nonrotating) massive planet ---or a star or a blackhole--- of mass $M$~\cite{Sch1916,MTW1973} and is written as
\begin{equation*}
  g = -\left(1-\frac{2GM}{c^{2} r}\right) c^{2} \dd t^{2}
  +\left(1-\frac{2GM}{c^{2} r}\right)^{-1} \dd r^{2}
  +r^{2}\left(\dd\theta^{2} + \sin^{2}\theta\, \dd \varphi^{2}\right),
  \qquad
  r>r_{\mathrm{s}}=\frac{2GM}{c^{2}},
\end{equation*}
where $G$ is the gravitational constant, $\theta$ is the colatitude (angle from North pole), $\varphi$ is the longitude, and $r_{\mathrm{s}}$ is the Schwarzschild radius. The surface of the planet (or star) is at radius $r=r_{0}$ much larger than $r_{\mathrm{s}}$. The  coordinate transformation,
\begin{equation*}
  \bar{r}=\frac{1}{4} \left(2r -r_{\mathrm{s}} + 2 \sqrt{r(r-r_{\mathrm{s}})}\right),
\end{equation*}
allows first to express the Schwarzschild metric into the so-called \emph{isotropic coordinates} expression~\cite[p. 840]{MTW1973}
\begin{equation*}
  g=-\left(\frac{1-{\bar{r}_{\mathrm{s}}}/{\bar{r}}}{1+{\bar{r}_{\mathrm{s}}}/{\bar{r}}}\right)^{2} c^{2} \dd t^{2}
  +\left(1+\frac{\bar{r}_{\mathrm{s}}}{\bar{r}}\right)^{4}\left[ \dd \bar{r}^{2}
    +\bar{r}^{2}\left(\dd\theta^{2} + \sin^{2}\theta\, \dd \varphi^{2}\right)\right],
  \qquad
  \bar{r}>\bar{r}_{\mathrm{s}}=\frac{r_{\mathrm{s}}}{4},
\end{equation*}
and, then, to put it in the \emph{Cartesian isotropic coordinates} expression (with $\bar{r}:=\sqrt{\delta_{ij} x^{i} x^{j}}$, null at the center of the planet/star),
\begin{equation*}
  g=-\left(\frac{1-{\bar{r}_{\mathrm{s}}}/{\bar{r}}}{1+{\bar{r}_{\mathrm{s}}}/{\bar{r}}}\right)^{2}  \dd {x^{0}}^{2}
  +\left(1+\frac{\bar{r}_{\mathrm{s}}}{\bar{r}}\right)^{4}\left[ \dd {x^{1}}^{2}+  \dd {x^{2}}^{2}+ \dd {x^{3}}^{2}\right],
  \qquad
  \bar{r}>\bar{r}_{\mathrm{s}}=\frac{r_{\mathrm{s}}}{4},
\end{equation*}
where we have set $x^{0}=ct$.

\section{Divergences in a static spacetime}
\label{sec:divergences-spacetime}

For an arbitrary metric $g$ and in an arbitrary coordinate system $(x^{\mu})$, the divergence of a quadrivector $\bP$ and of a second-order contravariant tensor $\bT$ are given by
\begin{align}
  \label{eq:divX}
  \dive^{g} \bP         & = \partial_{\mu} P^{\mu}+ \Gamma^{\nu}_{\nu \mu } P^{\mu},
  \\
  \label{eq:divT}
  (\dive^{g} \bT)^{\mu} & = \partial_{\nu} T^{\mu \nu}
  + \Gamma^{\rho}_{\rho \nu}  T^{\mu\nu}+ \Gamma^{\mu}_{\nu\rho} T^{\rho\nu},
\end{align}
where $\Gamma^{\lambda}_{\mu\nu}$ are the Christoffel symbols of the metric $g$, given by the standard formula
\begin{equation}\label{eq:Gammas}
  \Gamma^{\lambda}_{\mu \nu} = \frac{1}{2} g^{\lambda\sigma} \left( \frac{\partial g_{\sigma \mu}}{\partial x^{\nu}} + \frac{\partial g_{\sigma \nu}}{\partial x^{\mu}} - \frac{\partial g_{\mu \nu}}{\partial x^{\sigma}} \right).
\end{equation}

Suppose now that $g$ is a static metric and that the coordinate system $(x^{\mu})$ is chosen such that
\begin{equation*}
  \frac{\partial g_{\mu \nu}}{\partial x^{0}} = 0 \quad \text{and} \quad g_{0i}=0,
\end{equation*}
meaning that the Universe metric $g$ does not depend on $x^{0}$ and that it is related to the spatial metric $\mg$ by
\begin{equation*}
  g = g_{00} (\dd x^{0})^{2} + \mg, \quad \text{where} \quad \mg = g_{ij} \dd x^{i} \dd x^{j}.
\end{equation*}
Then,
\begin{enumerate}
  \item $\Gamma^{0}_{00}=\Gamma^{0}_{ij}=\Gamma^{i}_{j0}=0$,
  \item the Christoffel symbols $\bar \Gamma^{i}_{jk}$ of the 3D spatial metric $\mg=(g_{ij})$ are equal to the spatial Christoffel symbols $\Gamma^{i}_{jk}$ of the 4D metric $g$,
        \begin{equation*}
          \bar\Gamma^{i}_{jk}=\Gamma^{i}_{jk},
        \end{equation*}
  \item and, moreover
        \begin{equation*}
          \Gamma^{i}_{00}=(\grad^{\mg}\!\! \sqrt{- g_{00}} )^{i},
          \quad\text{and}\quad
          \Gamma^{0}_{0 i}=(\dd \ln\sqrt{- g_{00}} )_{i},
        \end{equation*}
\end{enumerate}
where $\grad^{\mg}\!\! f:=(\mg)^{\sharp} \dd f$, when $f$ is independent of $x^{0}$.

We get therefore
\begin{equation}\label{eq:divP-static}
  \begin{split}
    \dive^{g} \bP & = \frac{\partial P^{0}}{\partial x^{0}} + \frac{\partial P^{i}}{\partial x^{i}}+ \left(\Gamma^{j}_{j i}+\Gamma^{0}_{0 i }  \right) P^{i}
    \\
    & = \frac{\partial P^{0}}{\partial x^{0}} +\dive^{\mg} \bP^{\top}+ \bP^{\top}\cdot \dd \ln\sqrt{- g_{00}},
  \end{split}
\end{equation}
where we have set $\bP^{\top} := P^{i}\partial x^{i}$.

Setting now, in the coordinate system $(x^{\mu})$,
\begin{equation*}
  \bT =
  \begin{pmatrix}
    T^{00} & \aaa^{\star}
    \\
    \aaa   & \bt
  \end{pmatrix}
  ,
\end{equation*}
where $\aaa:=T^{0i}\partial_{x^{i}}$ and $\bt:=T^{ij}\partial_{x^{i}}\partial_{x^{j}}$, we have
\begin{equation}\label{eq:divT0-static}
  \begin{split}
    (\dive^{g} \bT)^{0} & = \frac{\partial T^{0 0}}{\partial x^{0}}+ \frac{\partial a^{i}}{\partial x^{i}}
    + (\Gamma^{j}_{j i}+3\Gamma^{0}_{0 i})a^{i}
    \\
    & = \frac{\partial T^{0 0}}{\partial x^{0}}+\dive^{\mg} \aaa + 3\, \aaa \cdot  \dd \ln\sqrt{- g_{00}},
  \end{split}
\end{equation}
and
\begin{equation}\label{eq:divTi-static-components}
  (\dive^{g} \bT)^{i}  =\frac{\partial a^{i}}{\partial x^{0}}+\frac{\partial \mathrm t^{i j}}{\partial x^{j}}
  + (\Gamma^{k}_{k j}+\Gamma^{0}_{0 j})  \mathrm t^{ij}+ \Gamma^{i}_{jk} \mathrm t^{jk}+\Gamma^{i}_{00} T^{00},
\end{equation}
this last equation being recast more intrinsically as
\begin{equation}\label{eq:divTi-static}
  (\dive^{g} \bT)^{\top} =\frac{\partial \aaa}{\partial x^{0}}+ \dive^{\mg} \bt
  +  \bt \cdot \dd \ln\sqrt{- g_{00}} +T^{00} \grad^{\mg} \!\!\sqrt{- g_{00}}.
\end{equation}

\begin{rem}
  In~\cite{Yor1979} and more recently in~\cite[Chapter 4]{Gou2012}, equations~\eqref{eq:divX} and~\eqref{eq:divT} are expressed in an intrinsic manner using the so-called $(3+1)$-orthogonal decomposition of the divergence operator obtained through the theory of (pseudo-)Riemannian hypersurfaces~\cite[Chapter 5]{GHL2004}, and which is similar to the one used in Thick Shell Theory.
\end{rem}

\section{Three-dimensional Riemannian metrics and mass densities}
\label{sec:3D-metrics}

We assume in this Appendix that a time function $\hat{t}$ is given, inducing a spacetime structure on $\mW$ and we denote by $\Omega_{t}$ the corresponding spacelike hypersurfaces.

\subsection*{3D Riemannian metrics on the hypersurfaces $\Omega_{t}$}

Each three-dimensional manifold $\Omega_{t}$ is endowed with two Riemannian metrics. The first one is just the restriction $j_{t}^{*}g$ of the four-dimensional Universe metric $g$ and coincides with $j_{t}^{*} \mg$ (since $\mg$ is the spatial component of $g$ in its orthogonal decomposition~\eqref{eq:g3D} relative to $\bN$, the unit normal to $\Omega_{t}$). The second one is the restriction $j_{t}^{*} h$ of the degenerate metric $h=g+\bU^{\flat}\otimes \bU^{\flat}$ (the spatial part of $g$ in its orthogonal decomposition~\eqref{eq:def-h} relative to $\bU$). Note that, unless $\Omega_{t}$ is orthogonal to $\bU$, these two metrics on $\Omega_{t}$ do not match. However, the following lemma allows to relate their respective Riemannian volume forms $\vol_{j_{t}^{*}g}$ and $\vol_{j_{t}^{*}h}$ on~$\Omega_{t}$.

\begin{lem}\label{lem:voljth-voljtg}
  We have
  \begin{equation}\label{eq:voljth-voljtg}
    \vol_{j_{t}^{*} h} = \gamma \, \vol_{j_{t}^{*} g} = j_{t}^{*} \left(i_{\bU}\vol_{g}\right),
    \quad
    \text{on $\Omega_{t}$.}
  \end{equation}
  where $\gamma=-\langle \bU, \bN\rangle_{g}$ is the generalized Lorentz factor.
\end{lem}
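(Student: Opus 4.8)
The plan is to prove both equalities pointwise at an arbitrary $m \in \Omega_t$, working on the tangent space $T_m\Omega_t = \bN(m)^{\perp}$, on which $g$ restricts to a positive definite inner product because $\Omega_t$ is spacelike. The single quantitative input I would extract first is the orthogonal decomposition $\bU = \gamma \bN + \bU^{\top}$ relative to $\bN$ (so that $\bU^{\top} \in T_m\Omega_t$), together with the identity $\norm{\bU^{\top}}_g^2 = \gamma^2 - 1$, which follows immediately from $\norm{\bU}_g^2 = \norm{\bN}_g^2 = -1$ and $\langle \bU, \bN\rangle_g = -\gamma$.

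For the first equality I would use that $h = g + \bU^{\flat}\otimes\bU^{\flat}$ and that, for $X \in \bN(m)^{\perp}$, one has $\bU^{\flat}(X) = \langle \bU^{\top}, X\rangle_g$; hence on $T_m\Omega_t$ the restriction $j_t^{*} h$ equals $j_t^{*} g + \beta\otimes\beta$ with $\beta := \langle \bU^{\top}, \cdot\rangle_g$. This is positive definite, so $\vol_{j_t^{*}h}$ is well defined, and the $g$-self-adjoint operator $(j_t^{*}g)^{-1}(j_t^{*}h) = \Id + \langle \bU^{\top}, \cdot\rangle_g\,\bU^{\top}$ is a rank-one perturbation of the identity whose only nontrivial eigenvalue, carried by the line $\langle \bU^{\top}\rangle$, is $\norm{\bU^{\top}}^2$. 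Its determinant is therefore $1 + \norm{\bU^{\top}}^2 = \gamma^2$, and the standard relation between the volume forms of two inner products on the same space (as used in the proof of lemma~\ref{lem:PCM_GR}) gives $\vol_{j_t^{*}h} = \sqrt{\gamma^2}\,\vol_{j_t^{*}g} = \gamma\,\vol_{j_t^{*}g}$.

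For the second equality I would evaluate both $3$-forms on a single $g$-orthonormal basis $(\ee_1, \ee_2, \ee_3)$ of $T_m\Omega_t$, oriented so that $(\bN, \ee_1, \ee_2, \ee_3)$ is a direct orthonormal basis of $T_m\mM$, whence $\vol_g(\bN, \ee_1, \ee_2, \ee_3) = 1$. Then $j_t^{*}(i_{\bU}\vol_g)(\ee_1, \ee_2, \ee_3) = \vol_g(\bU, \ee_1, \ee_2, \ee_3)$, and substituting $\bU = \gamma\bN + \bU^{\top}$ the tangential term vanishes by antisymmetry, since $\bU^{\top}$ lies in the span of the $\ee_i$, leaving $\gamma\,\vol_g(\bN, \ee_1, \ee_2, \ee_3) = \gamma = \gamma\,\vol_{j_t^{*}g}(\ee_1, \ee_2, \ee_3)$. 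As two $3$-forms on a $3$-manifold that agree on one frame are equal, this yields $j_t^{*}(i_{\bU}\vol_g) = \gamma\,\vol_{j_t^{*}g}$.

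The main obstacle is not conceptual depth but careful bookkeeping of the two distinct inner products $j_t^{*}g$ and $j_t^{*}h$ living on the same space $T_m\Omega_t$, and of orientations. In particular one must confirm that $j_t^{*}h$ is nondegenerate, which fails exactly when $\bU$ is tangent to $\Omega_t$ but is excluded here since $\gamma \geq 1 > 0$ forces $\bU \notin \bN^{\perp}$, so that $\vol_{j_t^{*}h}$ is meaningful; and one must keep the orientation of $(\ee_i)$ consistent between the two computations so that no spurious sign is introduced.
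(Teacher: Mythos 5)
Your proof is correct and takes essentially the same route as the paper's: both rest on the decomposition $\bU = \gamma\bN + \bU^{\top}$, the computation $\det\left[(j_{t}^{*}g)^{-1}(j_{t}^{*}h)\right] = 1 + \norm{\bU^{\top}}_{g}^{2} = \gamma^{2}$ (the paper via the matrix $I_{3}+UU^{\star}$ in a $g$-orthonormal frame, you via the equivalent coordinate-free rank-one perturbation argument), and the vanishing of the tangential contribution of $\bU$ when $i_{\bU}\vol_{g}$ is pulled back to $\Omega_{t}$. The differences are purely presentational, so there is nothing to correct or add.
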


\begin{proof}
  We have first
  \begin{equation*}
    \vol_{j_{t}^{*} h} = \sqrt{\det[(j_{t}^{*} g)^{-1} j_{t}^{*} h]}\, \vol_{j_{t}^{*} g}.
  \end{equation*}
  Let $\xx \in \Omega_{t}$ and let $(\ee_{i})$ be an orthonormal basis of $T_{\xx}\Omega_{t}$ for the metric $j_{t}^{*} g$. Then,
  \begin{equation*}
    (\ee_{0} = \bN, T_{\xx}j_{t}.\ee_{i})
  \end{equation*}
  is an orthonormal basis of $T_{m}\mM$ for the Lorentzian metric $g$, and we will denote by $U^{\mu}$, the components of $\bU(m)$ in this basis. Now, using~\eqref{eq:def-h}, and the fact that $T_{\xx}j_{t}.\ee_{0}= 0$, $T_{\xx}j_{t}.\ee_{i}=\ee_{i}$, we get
  \begin{equation*}
    [(j_{t}^{*} g)^{-1}]^{ij} = \delta^{ij}, \qquad [j_{t}^{*} h]_{ij} = \delta_{ij} + \delta_{ik}U^{k}\delta_{il}U^{l}.
  \end{equation*}
  Hence, we are reduced to calculate the determinant
  \begin{equation*}
    \det[(j_{t}^{*} g)^{-1} j_{t}^{*} h] = \det(I_{3} + UU^{\star}),
  \end{equation*}
  where
  \begin{equation*}
    U =
    \begin{pmatrix}
      U^{1} \\
      U^{2} \\
      U^{3}
    \end{pmatrix},
    \quad \text{and} \quad U^{\star} =
    \begin{pmatrix}
      U^{1} & U^{2} & U^{3}
    \end{pmatrix}.
  \end{equation*}
  Now the $3 \times 3$ matrix $UU^{\star}$ has a double eigenvalue $0$ and a single eigenvalue $U^{\star}U$ and thus
  \begin{equation*}
    \det\left(I_{3} + UU^{\star}\right) = 1 + U^{\star}U = 1 + \sum (U^{i})^{2} = (U^{0})^{2},
  \end{equation*}
  since
  \begin{equation*}
    \norm{\bU}_{g}^{2} = -(U^{0})^{2} +  \sum (U^{i})^{2} = -1.
  \end{equation*}
  Therefore, we get
  \begin{equation*}
    \sqrt{\det[(j_{t}^{*} g)^{-1} j_{t}^{*} h]} = \sqrt{(U^{0})^{2}} = -\langle \bU, \bN\rangle_{g},
  \end{equation*}
  because
  \begin{equation*}
    - \langle \bU, \bN\rangle_{g} = U^{0},
  \end{equation*}
  and $\langle \bU, \bN\rangle_{g}$ is assumed to be negative. Now, we have
  \begin{equation*}
    \vol_{j_{t}^{*} g} = j_{t}^{*}(i_{\bN} \vol_{g}),
  \end{equation*}
  and thus
  \begin{equation*}
    \vol_{j_{t}^{*} h} = \gamma \, \vol_{j_{t}^{*} g} = \gamma \,j_{t}^{*}(i_{\bN} \vol_{g}) = j_{t}^{*}(i_{\bU} \vol_{g}).
  \end{equation*}
\end{proof}

\subsection*{Geometric interpretation of the relativistic mass density $\rho$}

By multiplying~\eqref{eq:voljth-voljtg} by the rest mass density $\rho_{r}$ and using the definition $i_{\bP} \vol_{g}=\rho_{r} i_{\bU} \vol_{g}=\Psi^{*}\mu$, where $\mu$ is the mass measure on the body $\body$, we get the following equalities on $\Omega_{t}$,
\begin{equation*}
  \Psi_{t}^{*}\mu = j_{t}^{*} \left(\rho_{r} i_{\bU}\vol_{g}\right) = \rho_{r} j_{t}^{*} \left(i_{\bU}\vol_{g}\right)
  = \rho_{r}\gamma \,\vol_{j_{t}^{*} g},
\end{equation*}
summarized as
\begin{equation*}
  \Psi_{t}^{*}\mu = \rho_{r} \gamma \,\vol_{j_{t}^{*} g} = \rho \,\vol_{j_{t}^{*} g}.
\end{equation*}
The function~\eqref{eq:rho},
\begin{equation*}
  \rho:= \gamma \rho_{r},
\end{equation*}
defined on the World tube $\mW$, is interpreted as the \emph{relativistic mass density}, \emph{i.e.}, the mass density measured on $\Omega_{t}$, relatively to the 3D metric $j_{t}^{*} g$.

\subsection*{3D Riemannian metrics and mass densities on the body $\body$}

\emph{If we make the stronger assumption that $\Psi_{t} \colon \Omega_{t} \to \body$ is a diffeomorphism,} then, the conformation induces a one-parameter family $\bgamma(t)$ of three-dimensional Riemannian metrics on the three-dimensional body $\body$
\begin{equation}\label{eq:def-gamma-t}
  \bgamma(t)^{-1} := \bH \circ j_{t} \circ \Psi_{t}^{-1}, \qquad \bgamma(t) := \bH^{-1} \circ j_{t} \circ \Psi_{t}^{-1}.
\end{equation}
The metric $\bgamma(t)$ is the true analogue of the right Cauchy--Green tensor $\bC:=\bF^{\star} q\, \bF$. Indeed, we have the identification $\bgamma\equiv \bC$ in Classical Continuum Mechanics when the body $\body$ is identified with a reference configuration $\Omega_{0}$~\cite{Nol1978,Rou2006,KD2021}. Note however that in the non-relativistic case, the metric $\bgamma$ on $\body$ is the pull-back $\pp^{*}q$ of the Euclidean metric $q$ on the space $\espace$ by the embedding $\pp\colon \body \to \espace$, whereas in~\eqref{eq:def-gamma-t}, it is defined using the conformation and a foliation of the World tube $\mW$. The following result relates $\bgamma(t)$ with the degenerate quadratic form $h$ defined by~\eqref{eq:def-h}.

\begin{lem}\label{lem:gamma-h}
  On $\Omega_{t}$, we have
  \begin{equation*}
    \Psi_{t}^{*} \bgamma(t) = j_{t}^{*}h,
  \end{equation*}
  where $h=g+\bU^\flat \otimes \bU^{\flat}$ and $ j_{t}^{*}h=(Tj_{t})^{\star} (h\circ j_{t}) Tj_{t}$.
\end{lem}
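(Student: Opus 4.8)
The plan is to reduce everything to Lemma~\ref{lem:h-H}, which already expresses the degenerate metric $h$ in terms of the conformation on the whole World tube $\mW$. Since the asserted identity $\Psi_{t}^{*}\bgamma(t) = j_{t}^{*}h$ is pointwise in $\xx \in \Omega_{t}$, it suffices to unwind both sides and match them.

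First I would compute the left-hand side directly from the definition of the pullback,
\[
  \Psi_{t}^{*}\bgamma(t) = (T\Psi_{t})^{\star}\,(\bgamma(t) \circ \Psi_{t})\,(T\Psi_{t}).
\]
The composition $\bgamma(t) \circ \Psi_{t}$ simplifies immediately: by definition~\eqref{eq:def-gamma-t}, $\bgamma(t) = \bH^{-1} \circ j_{t} \circ \Psi_{t}^{-1}$, and since $\Psi_{t}\colon \Omega_{t} \to \body$ is assumed to be a diffeomorphism, $\Psi_{t}^{-1} \circ \Psi_{t} = \id_{\Omega_{t}}$, so that $\bgamma(t) \circ \Psi_{t} = \bH^{-1} \circ j_{t}$. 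Hence
\[
  \Psi_{t}^{*}\bgamma(t) = (T\Psi_{t})^{\star}\,(\bH^{-1} \circ j_{t})\,(T\Psi_{t}).
\]

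Next I would use the chain rule on $\Psi_{t} = \Psi \circ j_{t}$, which gives $T\Psi_{t} = (T\Psi \circ j_{t})\,Tj_{t}$ (at $\xx$, this reads $T_{j_{t}(\xx)}\Psi \circ T_{\xx}j_{t}$), to factor out the embedding:
\[
  \Psi_{t}^{*}\bgamma(t) = (Tj_{t})^{\star}\,(T\Psi \circ j_{t})^{\star}\,(\bH^{-1} \circ j_{t})\,(T\Psi \circ j_{t})\,(Tj_{t}).
\]
The central three factors are precisely the right-hand side of the second identity in Lemma~\ref{lem:h-H}, evaluated along $j_{t}$: indeed $(T\Psi)^{\star}\,\bH^{-1}\,(T\Psi) = h$ holds everywhere on $\mW$, and $j_{t}$ maps $\Omega_{t}$ into $\mW$, so $(T\Psi \circ j_{t})^{\star}\,(\bH^{-1} \circ j_{t})\,(T\Psi \circ j_{t}) = h \circ j_{t}$. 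Substituting this yields
\[
  \Psi_{t}^{*}\bgamma(t) = (Tj_{t})^{\star}\,(h \circ j_{t})\,Tj_{t} = j_{t}^{*}h,
\]
which is the claim.

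There is no genuine obstacle here; the computation is entirely formal once Lemma~\ref{lem:h-H} is invoked. The only points demanding mild care are the bookkeeping of base points (all of $T\Psi$, $\bH^{-1}$ and $h$ must be evaluated at $m = j_{t}(\xx) \in \mW$, while $Tj_{t}$ and the final answer live over $\xx \in \Omega_{t}$) and the step $\bgamma(t) \circ \Psi_{t} = \bH^{-1} \circ j_{t}$, which is exactly where the standing hypothesis that $\Psi_{t}$ is a diffeomorphism is used.
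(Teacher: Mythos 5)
Your proposal is correct and follows essentially the same route as the paper's own proof: unwind the pullback, substitute $\bgamma(t)\circ\Psi_{t}=\bH^{-1}\circ j_{t}$ using the definition~\eqref{eq:def-gamma-t} and the diffeomorphism hypothesis, factor $T\Psi_{t}=(T\Psi\circ j_{t})\,Tj_{t}$ by the chain rule, and invoke the identity $h=(T\Psi)^{\star}\bH^{-1}T\Psi$ of lemma~\ref{lem:h-H}. Your extra remarks on base-point bookkeeping are a welcome clarification but do not change the argument.
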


\begin{proof}
  We have $\bgamma(t) := \bH^{-1} \circ j_{t} \circ  \Psi_{t}^{-1}$, and $ h=(T\Psi)^{\star} \bH^{-1} T\Psi$ by lemma~\ref{lem:h-H}. Therefore
  \begin{align*}
    \Psi_{t}^{*} \bgamma(t) & = (T\Psi_{t})^{\star} (\bgamma(t) \circ \Psi_{t}) T\Psi_{t}            \\
                            & = (T\Psi_{t})^{\star} (\bH^{-1} \circ j_{t}) T\Psi_{t}                 \\
                            & = (Tj_{t})^{\star}(T\Psi)^{\star} (\bH^{-1}\circ j_{t}) T\Psi \,Tj_{t} \\
                            & = (Tj_{t})^{\star} (h \circ j_{t})Tj_{t}                               \\
                            & = j_{t}^{*}h.
  \end{align*}
\end{proof}

To the three-dimensional Riemannian metric $\bgamma(t)$ on $\body$ is associated a three-dimensional volume form $\vol_{\bgamma(t)}$. Since the body $\body$ is initially endowed with a mass measure $\mu$ and a fixed metric $\bgamma_{0}$ (see \autoref{sec:matter-fields}),
mass conservation can then be expressed on the body exactly as in the intrinsic Lagrangian formulation of Classical Continuum Mechanics \cite{KD2021}, \emph{i.e.}, as
\begin{equation*}
  \mu = \rho_{\bgamma(t)} \vol_{\bgamma(t)} = \rho_{\bgamma_{0}} \vol_{\bgamma_{0}},
\end{equation*}
where  $\rho_{\bgamma(t)}$ and  $\rho_{\bgamma_{0}}$ are mass densities on $\body$. In the following lemma, we relate $\rho_{\bgamma(t)}$ with the rest mass density $\rho_{r}$, defined by~\eqref{eq:def-rho}.

\begin{lem}\label{lem:mass-densities}
  Let $\rho_{\bgamma(t)}$ be the mass density on the body $\body$ defined implicitly by $\mu = \rho_{\bgamma(t)} \vol_{\bgamma(t)}$.
  Then, we have,
  \begin{equation*}
    \Psi_{t}^{*} \rho_{\bgamma(t)} = j_{t}^{*}\rho_{r}.
  \end{equation*}
\end{lem}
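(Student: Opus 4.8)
The plan is to pull back the defining relation $\mu = \rho_{\bgamma(t)}\vol_{\bgamma(t)}$ along the diffeomorphism $\Psi_{t}\colon\Omega_{t}\to\body$ and then to recognize each resulting factor by means of the two preceding lemmas. Since $\Psi_{t}$ is a diffeomorphism, pullback commutes with composition and, for the volume form attached to a metric, $\Psi_{t}^{*}\vol_{\bgamma(t)}=\vol_{\Psi_{t}^{*}\bgamma(t)}$ (choosing orientations so that $\Psi_{t}$ is orientation preserving). Applying $\Psi_{t}^{*}$ to the defining relation therefore gives
\begin{equation*}
  \Psi_{t}^{*}\mu = (\Psi_{t}^{*}\rho_{\bgamma(t)})\,\vol_{\Psi_{t}^{*}\bgamma(t)}.
\end{equation*}
First I would rewrite the volume form on the right-hand side: by lemma~\ref{lem:gamma-h} one has $\Psi_{t}^{*}\bgamma(t)=j_{t}^{*}h$, so $\vol_{\Psi_{t}^{*}\bgamma(t)}=\vol_{j_{t}^{*}h}$, and lemma~\ref{lem:voljth-voljtg} then identifies $\vol_{j_{t}^{*}h}=j_{t}^{*}(i_{\bU}\vol_{g})$. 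Thus the right-hand side becomes $(\Psi_{t}^{*}\rho_{\bgamma(t)})\,j_{t}^{*}(i_{\bU}\vol_{g})$.

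Next I would compute the left-hand side independently. Using $\Psi_{t}=\Psi\circ j_{t}$ together with the definition $\omega=\Psi^{*}\mu=i_{\bP}\vol_{g}$ and the relation $\bP=\rho_{r}\bU$ from~\eqref{eq:def-U}, one gets
\begin{equation*}
  \Psi_{t}^{*}\mu = j_{t}^{*}\Psi^{*}\mu = j_{t}^{*}(i_{\bP}\vol_{g}) = j_{t}^{*}(\rho_{r}\, i_{\bU}\vol_{g}) = (j_{t}^{*}\rho_{r})\,j_{t}^{*}(i_{\bU}\vol_{g}).
\end{equation*}

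Finally, comparing the two expressions for $\Psi_{t}^{*}\mu$ and cancelling the nowhere-vanishing $3$-form $j_{t}^{*}(i_{\bU}\vol_{g})=\vol_{j_{t}^{*}h}$ on $\Omega_{t}$ yields the claimed identity $\Psi_{t}^{*}\rho_{\bgamma(t)}=j_{t}^{*}\rho_{r}$. The proof is essentially a chaining of the two previous lemmas, so I do not expect any genuine analytic obstacle; the only point requiring care is the orientation bookkeeping that guarantees both $\Psi_{t}^{*}\vol_{\bgamma(t)}=\vol_{\Psi_{t}^{*}\bgamma(t)}$ and that $i_{\bU}\vol_{g}$ restricts to a \emph{positively oriented} volume element on $\Omega_{t}$, which is ensured because $\bU$ is a future-oriented unit timelike field and $\Psi_{t}$ is taken orientation-compatible.
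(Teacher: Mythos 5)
Your proof is correct and takes essentially the same route as the paper's own: you pull back $\mu=\rho_{\bgamma(t)}\vol_{\bgamma(t)}$ by $\Psi_{t}=\Psi\circ j_{t}$, identify $\Psi_{t}^{*}\mu=j_{t}^{*}(\rho_{r}\,i_{\bU}\vol_{g})$ from~\eqref{eq:def-P}--\eqref{eq:def-U}, and then use lemma~\ref{lem:gamma-h} together with lemma~\ref{lem:voljth-voljtg} to recognize the volume form $\vol_{\Psi_{t}^{*}\bgamma(t)}=\vol_{j_{t}^{*}h}=j_{t}^{*}(i_{\bU}\vol_{g})$ and cancel it. The orientation remark you add is a harmless refinement of what the paper leaves implicit.
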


\begin{proof}
  We have $\Psi^{*}\mu = \rho_{r} i_{\bU} \vol_{g}$ by~\eqref{eq:def-P}--\eqref{eq:def-U}, and hence
  \begin{equation*}
    \Psi_{t}^{*}(\rho_{\bgamma(t)} \vol_{\bgamma(t)}) = j_{t}^{*}\Psi^{*}(\rho_{\bgamma(t)} \vol_{\bgamma(t)}) = j_{t}^{*}\Psi^{*}\mu = j_{t}^{*}(\rho_{r} i_{\bU} \vol_{g}).
  \end{equation*}
  We get therefore
  \begin{equation*}
    (\Psi_{t}^{*}\rho_{\bgamma(t)}) \vol_{\Psi_{t}^{*}\bgamma(t)} = (j_{t}^{*}\rho_{r}) j_{t}^{*}(i_{\bU} \vol_{g}),
  \end{equation*}
  but $ j_{t}^{*}(i_{\bU} \vol_{g}) = \vol_{j_{t}^{*}h}$ by lemma~\ref{lem:voljth-voljtg} and  $\Psi_{t}^{*}\bgamma(t) = j_{t}^{*}h$ by lemma~\ref{lem:gamma-h}. We get thus
  \begin{equation*}
    (\Psi_{t}^{*}\rho_{\bgamma(t)}) \vol_{j_{t}^{*}h} = (j_{t}^{*}\rho_{r})  \vol_{j_{t}^{*}h},
  \end{equation*}
  which ends the proof.
\end{proof}

\section{Choice of a reference metric}
\label{sec:reference-metric}

\subsection*{Reference metric on the body $\body$}

There are several choices for a \emph{reference metric} on the body $\body$. One possibility is to endow the body with an arbitrary fixed metric $\bgamma_{0}$ (for example $\bgamma_{0}=q$, the Euclidean metric, in \cite{Sou1958,Sou1964}). But when a spacetime and the associated spacelike hypersurfaces $\Omega_{t}$ are introduced, with in particular the choice of a reference configuration $\Omega_{t_{0}}$, and \emph{when the restriction $\Psi_{t_{0}}=j_{t_{0}}^{*} \Psi$ of the matter field to $\Omega_{t_{0}}$ is a diffeomorphism}, then two other ---mechanistic--- possibilities are offered:
\begin{itemize}
  \item[(a)] either to consider as reference metric on the body $\body$, the Riemannian metric $\bgamma(t=t_{0})$ at initial time $t_{0}$,
    \begin{equation*}
      \bgamma_{0}^{a} := \bgamma(t_{0})=(\Psi_{t_{0}})_{ *} j_{t_{0}}^{*}h,
    \end{equation*}
    where the second equality is due to lemma~\ref{lem:gamma-h},

  \item[(b)] or to endow the body $\body$ with the Riemannian metric
    \begin{equation*}
      \bgamma_{0}^{b}: =(\Psi_{t_{0}})_{ *} j_{t_{0}}^{*}g ,
    \end{equation*}
    obtained as the pushforward on the body, of the restriction $j_{t_{0}}^{*}g$ of the Universe metric to $\Omega_{t_{0}}$.
\end{itemize}
These two reference metrics do not coincide in general. In case (a), the mixed tensor $(\bgamma_{0}^{a})^{-1} \bgamma(t)$ is equal to the identity at $t = t_{0}$. In case (b), which mimics what is done in non relativistic three-dimensional Hyperelasticity \cite{Rou2006,KD2021}, $(\bgamma_{0}^{b})^{-1} \bgamma(t_{0}) = (\bgamma_{0}^{b})^{-1} \bgamma_{0}^{a}\neq \id$ in general.

The question of which reference metric is to be prefered is in fact related to the difficult question of the definition of an associated reference stress-free state (at which $\bsigma=0$). This question arise naturally when one choose an explicit expression for the specific internal energy $e$ (such as Money--Rivlin's \cite{Moo1940}, Hart--Smith's \cite{Har1966}, Ogden's \cite{Ogd1972}, Arruda--Boyce's \cite{AB1993} or others \cite{GMDC2011}).
Fortunately for Mechanics, the difference between $ \bgamma_{0}^{b}$ and $ \bgamma_{0}^{a}$ is only due to relativistic effects, since by~\eqref{eq:U-orthogonal-decomposition} the restriction $ j_{t_{0}}^{*} \left(\bU^{\flat}\otimes \bU^{\flat}\right)$ is in $1/c^{2}$.

\subsection*{Frozen metric on the World tube $\mW$}

As mentioned in \autoref{sec:Conformation-strains}, instead of explicitly introducing a reference metric $\bgamma_{0}$ on the body, some authors consider a reference degenerate quadratic form $h_{0}$ of signature $(0,+,+,+)$ on the World tube \cite{KM1992,KM1997}, with some additional properties, leading them to call it a \emph{frozen metric}. Such a four-dimensional frozen metric is in fact strongly related to a three-dimensional reference metric on the body $\body$. The following result provides necessary conditions for a given quadratic form $h_{0}$ on $\mW$ to be the pullback of a fixed Riemannian metric $\bgamma_{0}$ on $\body$ by the matter field $\Psi$.

\begin{lem}[Kijowski and Magli, 1997]\label{lem:KM}
  Let $h_{0}$ be a field of quadratic forms on the World tube $\mW$. Then, necessary conditions for the existence of a Riemannian metric $\bgamma_{0}$ on $\body$ such that $h_{0} = \Psi^{*}\bgamma_{0}$ are
  \begin{equation*}
    h_{0}\bU=0, \quad \text{and} \quad \Lie_{\bU} h_{0}=0.
  \end{equation*}
  Such a quadratic form is necessarily of signature $(0,+,+,+)$.
\end{lem}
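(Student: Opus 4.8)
The plan is to derive all three conclusions directly from the defining relation $h_{0} = \Psi^{*}\bgamma_{0} = (T\Psi)^{\star}(\bgamma_{0}\circ\Psi)(T\Psi)$, the only structural input being that the unit timelike field $\bU$ generates $\ker T\Psi$ (equations~\eqref{eq:TPsi-P-null} and~\eqref{eq:def-U}).

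First I would dispatch the algebraic condition. Contracting $h_{0}$ with $\bU$ and using $T\Psi\,\bU = 0$ gives immediately $h_{0}\bU = (T\Psi)^{\star}(\bgamma_{0}\circ\Psi)(T\Psi\,\bU) = 0$. For the signature, I would note that for every $v \in T_{m}\mW$ one has $h_{0}(v,v) = (\bgamma_{0}\circ\Psi)(T\Psi\,v,\,T\Psi\,v) \ge 0$, since $\bgamma_{0}$ is Riemannian, with equality precisely when $T\Psi\,v = 0$. As $T_{m}\Psi$ has rank $3$ on the four-dimensional space $T_{m}\mW$, its kernel is the one-dimensional line $\langle\bU(m)\rangle$; hence $h_{0}$ is positive semidefinite with a one-dimensional radical, \emph{i.e.} of signature $(0,+,+,+)$.

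The substantive point is $\Lie_{\bU}h_{0} = 0$, and the cleanest route is to show that $h_{0}$ is invariant under the flow $\varphi_{s}$ of $\bU$. The key remark is that this flow preserves each fiber of $\Psi$: differentiating $s\mapsto \Psi(\varphi_{s}(m))$ yields $\tfrac{d}{ds}\Psi(\varphi_{s}(m)) = T\Psi\cdot\bU(\varphi_{s}(m)) = 0$, so that $\Psi\circ\varphi_{s} = \Psi$ for all $s$. Consequently
\begin{equation*}
  \varphi_{s}^{*}h_{0} = \varphi_{s}^{*}\Psi^{*}\bgamma_{0} = (\Psi\circ\varphi_{s})^{*}\bgamma_{0} = \Psi^{*}\bgamma_{0} = h_{0},
\end{equation*}
and differentiating at $s = 0$ gives $\Lie_{\bU}h_{0} = 0$. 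Equivalently, this is the naturality of the Lie derivative under pullback, $\Lie_{\bU}(\Psi^{*}\bgamma_{0}) = \Psi^{*}(\Lie_{0}\bgamma_{0}) = 0$, valid because $\bU$ is $\Psi$-related to the zero field on $\body$.

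I do not expect a genuine obstacle here: the only thing to watch is that $\Psi$ is a submersion but not a diffeomorphism, so one cannot push $\bgamma_{0}$ back and forth freely; the flow-invariance argument above sidesteps this, using only that $\Psi$ is constant along the orbits of $\bU$. One could alternatively verify $\Lie_{\bU}h_{0}=0$ in adapted coordinates in which $\bU = \partial_{0}$ and $\Psi$ is independent of $x^{0}$, but the intrinsic computation is shorter.
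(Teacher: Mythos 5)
Your proof is correct and follows essentially the same route as the paper: the algebraic identity $h_{0}\bU=(T\Psi)^{\star}(\bgamma_{0}\circ\Psi)(T\Psi\,\bU)=0$, the signature from the rank-3 submersion hypothesis, and above all the flow-invariance argument ($\Psi\circ\varphi_{s}=\Psi$, hence $\varphi_{s}^{*}h_{0}=h_{0}$, hence $\Lie_{\bU}h_{0}=0$) are exactly the paper's steps, with your signature discussion slightly more explicit. The only blemish is the stray notation $\Lie_{0}\bgamma_{0}$ in your closing aside, which should read as the Lie derivative of $\bgamma_{0}$ along the zero vector field on $\body$, to which $\bU$ is $\Psi$-related.
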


\begin{proof}
  Suppose that $h_{0} = \Psi^{*}\bgamma_{0}$. Since $T\Psi. \bU=0$, we get first that
  \begin{equation*}
    h_{0}\bU = (T\Psi)^{\star} (\bgamma_{0}\circ \Psi) T\Psi. \bU = 0,
  \end{equation*}
  and that $h_{0}$ is of signature $(0,+,+,+)$, since $\Psi$ is assumed to be a submersion on $\mW$. Now, let $\varphi^{t}$ be the flow of the vector field $\bU$. Then, we have
  \begin{equation*}
    \partial_{t} \left[(\Psi\circ\varphi^{t})(m)\right] = T_{\varphi^{t}(m)}\Psi . \bU(\varphi^{t}(m)) = 0, \qquad \forall t, \quad \forall m \in \mW,
  \end{equation*}
  and thus $\Psi\circ\varphi^{t} = \Psi\circ\varphi^{0} = \Psi$. Hence, we get
  \begin{equation*}
    (\varphi^{t})^{*} h_{0} = (\varphi^{t})^{*} \Psi^{*}\bgamma_{0} = (\Psi \circ \varphi^{t})^{*}\bgamma_{0} = \Psi^{*}\bgamma_{0} = h_{0}
  \end{equation*}
  and $\Lie_{\bU} h_{0} = \left[\partial_{t}(\varphi^{t})^{*} h_{0}\right]_{t=0}= 0$.
\end{proof}

\section{Three-dimensional strains}
\label{sec:strains}

When the World tube $\mW$ is foliated by spacelike hypersurfaces $\Omega_{t}$ and \emph{when the restriction $\Psi_{t}$ of the matter field to $\Omega_{t}$ is a diffeomorphism}, any of the three following 3D symmetric covariant tensor fields
\begin{equation*}
  \bH^{-1} \circ j_{t} \; \text{($\Sym^{2}V$-vector valued, on $\Omega_{t}$)}, \qquad \bgamma(t)\; \text{(on $\body$)}, \qquad \text{and} \qquad j_{t}^{*}h \; \text{(on $\Omega_{t}$)},
\end{equation*}
leads to equivalent formulations of Relativistic Hyperelasticity models. Indeed, these tensor fields are related to each other by
\begin{align*}
  \text{on $\body$}: \qquad\quad                                    & \bgamma(t)=\bH^{-1} \circ j_{t} \circ \Psi_{t}^{-1}=\Psi_{t*} \, j_{t}^{*}h,
  \\
  \text{$\Sym^{2}V$-vector valued, on $\Omega_{t}$:} \, \qquad\quad & \bH^{-1} \circ j_{t}=(\Psi_{t*} \, j_{t}^{*}h)\circ \Psi_{t}=\bgamma(t)\circ \Psi_{t},
  \\
  \text{on $\Omega_{t}$:}\qquad\quad                                & j_{t}^{*}h =(\Psi_{t})^{*}\bgamma(t)=(\Psi_{t})^{*}(\bH^{-1} \circ j_{t} \circ \Psi_{t}^{-1}),
\end{align*}
Making use of~\eqref{eq:strains-H}, the associated ---\emph{in fine} equivalent--- definitions of strain tensors are then the following
\begin{align*}
  \text{on $\body$}: \qquad\quad                                    & \frac{1}{2}\left(\bgamma(t)-\bgamma_{0}\right)=\mathfrak E \circ \Psi_{t}^{-1},
                                                                    &                                                                                  & \frac{1}{2}\log \big(\bgamma_{0}^{-1} \bgamma(t)\big)=\widehat{\mathfrak E} \circ \Psi_{t}^{-1},
  \\
  \text{$\Sym^{2}V$-vector valued, on $\Omega_{t}$:} \, \qquad\quad & \frac{1}{2}\left(\bH^{-1}-\bH_{0}^{-1}\right) \circ j_{t}=j_{t}^{*} \mathfrak E,
                                                                    &                                                                                  & -\frac{1}{2}\log \big( \bH\, \bH_{0}^{-1}\big)  \circ j_{t} =j_{t}^{*} \widehat{\mathfrak E},
  \\
  \text{on $\Omega_{t}$:}\qquad\quad                                & \frac{1}{2}\left( j_{t}^{*}h- j_{t}^{*}h_{0}\right)= j_{t}^{*} \be ,
                                                                    &                                                                                  & \frac{1}{2}\log \big((j_{t}^{*}h_{0})^{-1} j_{t}^{*}h\big),
\end{align*}
where $h_{0}=\Psi^{*} \bgamma_{0}$ is the so-called frozen metric on the World tube $\mW$,  $j_{t}^{*} h_{0}=\Psi_{t}^{*} \bgamma_{0}$ is its restriction to $\Omega_{t}$,
and $\bH_{0}= \bgamma_{0}^{-1}\circ \Psi$. Note that $t=t_{0}$ can be set in the above restrictions to obtain definitions on $\Omega_{t_{0}}$.

\section{Three-dimensional stresses}
\label{sec:stress-on-the-body}

Given a spacetime structure on the body World tube $\mW$ and the corresponding orthogonal decomposition relative to $\bN$, the normal to the hypersurfaces $\Omega_{t}$, the generalized Cauchy stress $\bsigma$, defined here as the spatial part of the four-dimensional stress~$\bSigma$ (remark~\ref{rem:T}), has for expression~\eqref{eq:Hyperelasticity-b-bis},
\begin{equation*}
  \bsigma(m) = \rho_{r}(m)\, (g_{m}^{3D})^{\sharp} (T_{m}\Psi)^{\star} \, \bs(m)\, (T_{m}\Psi) (g_{m}^{3D})^{\sharp} ,
  \qquad m\in \mW,
\end{equation*}
where $\mg=g + \bN^{\flat} \otimes \bN^{\flat}$ is the spatial part of $g$ (see~\eqref{eq:g3D}), and $\bs$ is the covariant stress tensor defined by~\eqref{eq:s-3D}. Since, by its very definition, $\bsigma$ has values in $T\Omega_{t} \otimes T\Omega_{t}$ because $T\Omega_{t} = \bN^{\bot}$, the mapping
\begin{equation*}
  \bsigma(j_{t}(\xx)) = \rho_{r}(j_{t}(\xx))\, (g_{j_{t}(\xx)}^{3D})^{\sharp} (T_{j_{t}(\xx)}\Psi)^{\star} \, \bs(j_{t}(\xx))\, (T_{j_{t}(\xx)}\Psi) (g_{j_{t}(\xx)}^{3D})^{\sharp}.
\end{equation*}
is a second-order contravariant tensor field on the three-dimensional manifold $\Omega_{t}$.

In the particular case of the Schwarzschild spacetime described in \autoref{sec:hyperelasticity-Schwarzschild}, where, $q$ denoting the Euclidean metric,
\begin{equation*}
  \mg = kq, \quad \text{and} \quad (\mg)^{\sharp} = k^{-1}q^{-1},
\end{equation*}
the three-dimensional stress $\bsigma\circ j_{t}$ is given by
\begin{equation*}
  \bsigma\circ j_{t} =  \frac{1}{\gamma k^{2}} \rho\, q^{-1}  (T\Psi_{t})^{\star} \left(\bs\circ j_{t}\right)  (T\Psi_{t}) q^{-1},
\end{equation*}
with the abuse of notation $(\rho_{r}/k^{2}) \circ j_{t}= \rho_{r}/k^{2} = \rho/\gamma k^{2}$, and where $\gamma$, not to be confused with the metric $\bgamma(t)$ on the body $\body$, is the generalized Lorentz factor~\eqref{eq:gamma-s}.

Let us now make the stronger assumption that \emph{the restriction $\Psi_{t}=\Psi\circ j_{t}$ of the matter field to $\Omega_{t}$ is a diffeomorphism}, and set $\pp := \Psi_{t}^{-1}$ and $\bF=T\pp=T\Psi_{t}^{-1}$, by analogy with Classical Continuum Mechanics (remark~\ref{rem:TPsit-iso}). Then, the stress $\bsigma $ on $\Omega_{t}$ can be recast as the pullback by $\Psi_{t}$
\begin{equation}\label{eq:sigmaRG}
  \bsigma \circ j_{t} = \frac{1}{\gamma k^{2}} \rho\, q^{-1} \big(\Psi_{t}^{*}\btheta^{\flat}\big) q^{-1}= \frac{1}{\gamma k^{2}} \rho\, q^{-1} \big(p_{*}\btheta^{\flat}\big) q^{-1},
\end{equation}
of a covariant stress tensor $\btheta^{\flat}$, defined on $\body$, and given by
\begin{equation*}
  \btheta^{\flat}:= \bs\circ j_{t} \circ\Psi_{t}^{-1}= \bgamma(t)\btheta \bgamma(t),
  \qquad
  \btheta:= 2\bgamma_{0}^{-1}\frac{\partial  w}{\partial \widehat{\bgamma}} ,
  \qquad
  \frac{\partial  w}{\partial \widehat{\bgamma}}=\frac{\partial  w}{\partial \widehat{\bgamma}}(\bgamma_{0}^{-1}\bgamma(t)).
\end{equation*}
Indeed, by definition, $ \bH \circ j_{t} \circ\Psi_{t}^{-1}=\bgamma(t)$ and $\bH_{0}\circ j_{t}\circ \Psi_{t}^{-1}=\bgamma_{0}\circ \Psi\circ j_{t}\circ \Psi_{t}^{-1}=\bgamma_{0}$.

The contravariant stress tensor $\btheta$, defined on the body $\body$, is then recognized as the Rougée stress tensor introduced in~\cite{Rou1991a,Rou2006,KD2021} (and which coincides with the second Piola-Kirchhoff stress tensor when $\body$ is identified with a reference configuration $\Omega_{0}$). In that case, the constitutive equation
\begin{equation*}
  \btheta:= 2\bgamma_{0}^{-1}\frac{\partial  w}{\partial \widehat{\bgamma}} =\btheta(\bgamma)
\end{equation*}
is the formulation of hyperelasticity on the body $\body$ (see~\cite[Chapter XII]{Rou1997}, \cite[Application 1]{Rou2006} and \cite[Theorem 3.4]{KD2021}).

The prefactor $1/\gamma k^{2}$ in~\eqref{eq:sigmaRG} combines both gravitational effects (through the conformal factor $k$) and relativistic effects (through the generalized Lorentz factor $\gamma$). The 3D stress tensor on $\Omega_{t}$
\begin{equation*}
  \btau\circ j_{t}  := \gamma k^{2}\frac{\bsigma\circ j_{t} }{\rho},
  \quad \text{such as}\quad
  \btau\circ j_{t}  =q^{-1} \bF^{-\star}\big(\btheta^{\flat}\circ \Psi_{t}\big) \bF^{-1} q^{-1},
\end{equation*}
is therefore a second relativistic generalization of the Kirchhoff stress tensor $\bsigma/\rho$ of Classical Continuum Mechanics (see remark~\ref{rem:Kirchhoff}), this time dedicated to the Schwarzschild spacetime. Recall that for the flat Minkowski metric we have $k=1$, and that for the Galilean limit, we have $\gamma = 1$.


\end{document}